\numberwithin{equation}{section}
\newtheorem{theorem}{Theorem}[section]
\newtheorem{corollary}{Corollary}[section]
\newtheorem{lemma}{Lemma}[section]
\newtheorem{remark}{Remark}[section]
\newtheorem{condition}{Condition}[section]
\journal{arXiv}
\begin{document}

\begin{frontmatter}

\title{Optimal Subsampling for High-dimensional Ridge Regression\tnoteref{mytitlenote}}
\tnotetext[mytitlenote]{The work was  supported by the National Natural Science Foundation of China (No. 11671060) and the Natural Science Foundation Project of CQ CSTC (No. cstc2019jcyj-msxmX0267).}

\author  {Hanyu Li \corref{mycorrespondingauthor}}
\cortext[mycorrespondingauthor]{Corresponding author}
\ead{lihy.hy@gmail.com or hyli@cqu.edu.cn}

	\author{Chengmei Niu}
\ead{chengmeiniu@cqu.edu.cn.}

\address{College of Mathematics and Statistics, Chongqing University, Chongqing 401331, P.R. China}

\begin{abstract}
We investigate the feature compression of  high-dimensional ridge regression using the optimal subsampling technique. Specifically, based on the basic framework of random sampling algorithm on feature for ridge regression and the A-optimal design criterion, we first obtain a set of optimal subsampling probabilities. 
Considering that the obtained probabilities are uneconomical, we then propose the nearly optimal ones. With these probabilities, a two step iterative algorithm is  established which has lower computational cost and higher accuracy. We provide theoretical analysis 
and numerical experiments 
to support the proposed methods. Numerical results demonstrate the decent performance of our methods.
\end{abstract}

\begin{keyword}
 High-dimensional ridge regression,     Optimal subsampling, A-optimal design criterion,   Two step iterative algorithm
\MSC[] 62J07
\end{keyword}

\end{frontmatter}

\linenumbers

\section{Introduction}\label{sec.1}
For the famous linear model
$${y}={A}{\beta}+{\upsilon}, $$
where ${y} \in \mathbb{R}^{n}$ is the response vector,   ${{A}} \in \mathbb{R}^{n\times p}$ is the design matrix,   ${\beta}\in \mathbb{R}^{p}$ is the  parameter  vector, and  ${\upsilon} \in \mathbb{R}^{n}$  is the standardized Gaussian noise vector,
ridge regression \cite{hoerl1970ridge}, also known as
the least squares regression with Tikhonov regularization \cite{tihonov1963solution},
has the following form 
	\begin{align} \label{rls.1.0} 
		\mathop{\rm{min}}\limits_{{{\beta}}}\frac{1}{2}\|{y}-{A}{\beta}\|_2^2 + \frac{\lambda}{2} \| {\beta}\|_2^2,	
	\end{align}
	where   
	$\lambda$ is the regularized parameter,
	and the corresponding estimator is
\begin{align*}
	\hat{\beta}_{rls}=({A}^T{A}+\lambda {I})^{-1}{A}^T{y}.
\end{align*}
In this paper, we focus only on the case $p>n$, i.e., the high dimensional ridge regression.  For this case, the dominant computational cost of the above estimator is from the matrix inversion which takes $ O(p^3) $ flops.
A straightforward  way of amelioration  is to  solve the problem \eqref{rls.1.0}
in the dual space. Specifically, we first solve the dual problem of \eqref{rls.1.0},
\begin{align} \label{rloss.2}
		\mathop{\rm{min}}\limits_{z}\frac{1}{2\lambda}\|A^Tz\|_2^2+\frac{1}{2}\|z\|_2^2-z^Ty,
\end{align}
and the solution is
	\begin{align}
		\hat{z}^*&=\lambda(AA^T+\lambda I)^{-1}y. \label{rls.2}
	\end{align}
Then, setting
	\begin{align}
		 \hat{\beta}_{rls}&=\frac{A^T\hat{z}^*}{\lambda} \label{rls.2.0}
	\end{align}  
gives the estimator of \eqref{rls.1.0} in an alternative form
\begin{align} \label{rls.1.1}
	\hat{\beta}_{rls}=A^T(AA^T+\lambda I)^{-1}y.
\end{align}  
More details can be found in  \cite{saunders1998ridge}. Now, the dominant computational cost is  $O(n^2p)$ which appears in the computation of {$AA^T$}. However, it is still prohibitive when $p\gg n $.

To reduce the computational cost, some scholars considered the randomized sketching technique \cite{lu2013faster,chen2015fast,avron2016sharper,wang2017sketching,chowdhury2018iterative,lacotte2020adaptive}. 
The main idea is to compress the design matrix $ A $ to be a small  one $ \hat{A} $   by post-multiplying it by a random matrix $S\in \mathbb{R}^{p \times r}$ with $r\ll p$, i.e.,  $\hat{A}=AS$, and hence the reduced regression can be called the compressed ridge regression. 
There are two most common ways to generate $S$: random   projection and random sampling. The former can be the (sub)Gaussian matrix \cite{avron2016sharper,wang2017sketching,lacotte2020adaptive}, the sub-sampled randomized Hadamard transform (SRHT) \cite{lu2013faster,chen2015fast,avron2016sharper,wang2017sketching,lacotte2020adaptive}, the sub-sampled randomized Fourier transform \cite{wang2017sketching}, and the CountSketch (also called the sparse embedding matrix) \cite{avron2016sharper}, and the latter can be the uniform sampling and the  importance sampling \cite{chowdhury2018iterative}.

Specifically,  building on \eqref{rls.2} and \eqref{rls.2.0},
Lu et al. 
\cite{lu2013faster} presented the following  estimator \begin{align*}
	\hat{\beta}_{L}=\frac{SS^TA^T\widetilde{z}_L}{\lambda},	 
\end{align*}
where $S$ is the SRHT and	\begin{align}\label{duiou}
\widetilde{z}_L
=\lambda(ASS^TA^T+\lambda I)^{-1}y
\end{align}
is the solution to the dual problem of the following compressed ridge regression
\begin{align}\label{duiouw1}
    	\mathop{\rm{min}}\limits_{\beta_H}\frac{1}{2}\|y-AS\beta_H\|_2^2 + \frac{\lambda}{2} \| \beta_H\|_2^2,
\end{align}
and obtained a  risk bound. 
Soon afterwards, for $S$ generated by the product of sparse embedding matrix and SRHT, Chen et al.
\cite{chen2015fast} developed an estimator as follows:
\begin{align}\label{dls.3}
	\hat{\beta}_C=A^T(AS)^{\dagger T}(\lambda (AS)^{\dagger T}+AS)^{\dagger}y,
\end{align}
where $\dagger  $ denotes the Moore-Penrose inverse, and provided an estimation error bound and  a risk bound. 
Later,  
Avron et al. \cite{avron2016sharper}  proposed the  estimator $\hat{\beta}_A=A^T\hat{b}$, where
\begin{align*}
	\hat{b}=\mathop{\rm{argmin}}\limits_{b}\frac{1}{2}\|ASS^TA^Tb\|_2^2 -y^TAA^Tb+\frac{1}{2}\|y\|_2^2+ \frac{\lambda}{2} \| S^TA^Tb\|_2^2
\end{align*}
 with $S$ being the CountSketch,  SRHT, or Gaussian matrix.
The above problem is the sketch of the following regression problem
\begin{align*}
	\mathop{\rm{min}}\limits_{b}\frac{1}{2}\|AA^Tb\|_2^2 -y^TAA^Tb+\frac{1}{2}\|y\|_2^2+ \frac{\lambda}{2} \| A^Tb\|_2^2,
\end{align*}
which is transformed from 
\eqref{rls.1.0}.
Additionally, Wang et. al \cite{wang2017sketching}  
and Lacotte and Pilanci \cite{lacotte2020adaptive} applied 
the dual random projection 
proposed in \cite{zhang2013recovering,zhang2014random} to the high-dimensional ridge regression.
By the way, there are some works on compressed least squares regression \cite{maillard2009compressed, fard2012compressed, kaban2013new, kaban2014new, thanei2017random, slawski2017compressed, slawski2018principal, mor2019sketching}, which can be written in the following form
\begin{align}\label{sls.1}
	\hat{\alpha}_{ls}=\mathop{\rm{argmin}}\limits_{\alpha}\frac{1}{2}\|y-AS\alpha\|_2^2,
\end{align}
where $S$ is typically the (sub)Gaussian matrix.

To the best of our knowledge, there is few work of applying  random sampling to  high-dimensional ridge regression. We only found a work of 
\cite{chowdhury2018iterative}, which 
proposed an iterative algorithm 
by using the random sampling   with the column leverage scores or  ridge leverage scores as the sampling probabilities.
This algorithm can be viewed as an extension of the method  in \cite{avron2016sharper}. However, there are some works on  compressed least squares regression via random sampling. As far as we know,  Drineas et al.
\cite{drineas2012fast} first applied the random sampling with column leverage scores or approximated ones as  the sampling probabilities to the least squares
regression and established the following estimator
\begin{align*}
	\hat{\beta}_{D}=A^T(AS)^{\dagger T}(AS)^{\dagger}y,
\end{align*}
which can be regarded as a special case of \eqref{dls.3}. Later,  Slawski
\cite{slawski2018principal}  investigated \eqref{sls.1}  using uniform sampling, and discussed the predictive performance.

In this paper, we will consider the application of random sampling on  high-dimensional ridge regression further. Inspired by the  technique of the optimal subsampling used in e.g., \cite{Zhu2015, wang2018optimal, ma2020asymptotic, yao2019optimal, wang2021optimal, Zhang2021},  we will mainly investigate the optimal subsampling probabilities  for compressed ridge regression.
The nearly optimal subsampling probabilities and a two step iterative algorithm are also derived.

The remainder of this paper is organized as follows. 
The basic framework of random sampling algorithm and  the optimal subsampling  probabilities  are presented   in Section \ref{sec.3}.  In Section \ref{sec.4}, we propose the nearly optimal subsampling probabilities and  a two step iterative algorithm. The  detailed theoretical analyses of the proposed methods are also presented in Sections \ref{sec.3} and \ref{sec.4}, respectively.  In Section \ref{sec.5}, we provide some numerical experiments to test our methods. The proofs of all the main theorems 
are given in the appendix. 

Before moving to the next section, we introduce some standard notations used in this paper. 

For the matrix $A \in \mathbb{R}^{n\times p}$, 
$A_i $, $A^j$,   $\|A\|_2$ and $\|A\|_F$ denote its 
$ i $-th column, $ j $-th row, spectral norm and  Frobenius norm, respectively.   
Also,  its thin SVD is given as $A=U \Sigma V^T$,
where $U\in \mathbb{R}^{n\times \rho}$,  $V\in \mathbb{R}^{p\times \rho}$, and 
$ \Sigma\in \mathbb{R}^{ \rho \times \rho} $ with the diagonal elements, i.e., the singular values of $A$,  satisfying $\sigma_{1}(A)\geq\sigma_2(A)\geq\cdots\geq\sigma_{\rho}(A)>0$. 

For $V$,   its row norms $\|V^i\|_2$ with $i=1,\cdots,p$ are the  column leverage scores \cite{chowdhury2018iterative}, and  for $X=V\Sigma_{\lambda}$, where $ \Sigma_{\lambda} $ is a diagonal matrix  with the diagonal entries being $\sqrt{\frac{\sigma_j(A)^2}{\lambda+\sigma_j(A)^2}}$ ($ j=1,\cdots,\rho $),  its row norms $\|X^i\|_2$ are called the ridge leverage scores \cite{chowdhury2018iterative}.

In addition, $O_p(1)$ denotes that a sequence of random variables  are bounded in probability and $o_p(1)$ represents that the sequence convergences to zero in probability. More details  can refer to \cite[Chap.~2]{van2000asymptotic}. 	 In our case, we also use  $O_{p \mid \mathcal{F}_n}$  to denote that a sequence of random variables  are bounded in conditional probability given  the  full data matrix ${\mathcal{F}}_n=(A,y)$.  Especially, 
for any matrix $G$, $G=O_p(1)$ ($G=O_{p \mid \mathcal{F}_n}(1)$) means that all the elements of $G$ are bounded in probability (given ${\mathcal{F}}_n$),  and $G=o_p(1)$ symbolizes that  its elements are  convergence to zero in probability.

\section{Optimal Subsampling}
\label{sec.3}

In this section,  we 
will present the basic framework of random sampling algorithm, propose the 
optimal subsampling probabilities,  and obtain the corresponding  error analysis.

\subsection{ Algorithm and Optimal Subsampling Probabilities}
\label{sec.3.2}	
 Given a set of probabilities, i.e., the random sampling matrix $S$, our approximate estimator
\begin{align}\label{dsolu.3}
	\hat{\beta}=A^T(ASS^TA^T+\lambda I)^{-1}y
\end{align}
of the high-dimensional ridge regression \eqref{rls.1.0} is the combination of the solution to the compressed dual problem, 
\begin{align}\label{duiouw}
    \mathop{\rm{argmin}}\limits_{z}\frac{1}{2\lambda}\|S^TA^Tz\|_2^2+\frac{1}{2}\|z\|_2^2-z^Ty,
\end{align}
i.e., 
\begin{align}\label{duiou1}
\hat{z}
=\lambda(ASS^TA^T+\lambda I)^{-1}y,
\end{align}
and \eqref{rls.2.0}.  
That is, 
we first solve the problem \eqref{duiouw} and then get the approximate estimator through \eqref{rls.2.0}\footnote{Note that this approach is different from the one in \cite{lu2013faster} though the expressions of $\hat z$ in \eqref{duiou1} and $\widetilde{z}_L$ in \eqref{duiou} are the same. In fact, the authors in \cite{lu2013faster} first solve the compressed ridge regression \eqref{duiouw1} in the dual space and then  find the estimator of the compressed regression 
via \eqref{rls.2.0}. Finally, the approximate estimator of the original ridge regression is recovered by the random matrix $S$. }.
The detailed process, i.e., 
the basic framework of random sampling algorithm, is listed in Algorithm \ref{DSP}.

\begin{algorithm}[htbp]
	\caption{Random Sampling Algorithm for High-dimensional Ridge Regression (RSHRR) }\label{DSP}\small
	$\textbf{Input:}$ $y \in \mathbb{R}^{n}$, $A \in \mathbb{R}^{n\times p} $,  the regularized parameter $\lambda$,  the sampling size $r$  with $r\ll p$, and the sampling probabilities $\{\pi_{i}\}^{p}_{i=1}$ with $\pi_{i}\geq0$  such that $\sum_{i = 1}^{p}\pi_{i}=1$.\\
	$\textbf{Output:}$ the dual solution $\hat{z}$ and the primal solution $\hat{\beta}$.
	\begin{enumerate}
				\item initialize $S\in \mathbb{R}^{p\times r}$ to  an all-zeros  matrix.
		\item for   $t\in{1,\cdots,r}$ do 
		\begin{itemize}
			\item pick $i_t\in [p]$ such that $\mathrm{Pr}\mathit{(i_t=i)}=\pi_i $.
			\item set $S_{i_tt}=\frac{1}{\sqrt{r\pi_{i_t}}}$.
		\end{itemize}	
		\item end
		\item calculate $\hat{z}$ as in \eqref{duiou1}.
		\item return   $\hat{\beta}=\frac{A^T\hat{z}}{\lambda}$.	
	\end{enumerate}
\end{algorithm}

\begin{remark}\label{rem4.8}
{	In Algorithm \ref{DSP},  the parameter $\lambda$ 
	can be determined by $K$-fold  cross-validation,  leave-one-out
cross-validation, or generalized cross-validation, see e.g. \cite{chen2022optimal}.
Since the main focus of this paper is the performance of subsampling on high-dimensional ridge regression, we omit the investigation of the choice of $\lambda$.}
\end{remark}

Now, we investigate the  sampling probabilities $\{\pi_{i}\}^{p}_{i=1}$ in Algorithm \ref{DSP}, which play a critical role on the performance of the algorithm. Below are some well known  probabilities  discussed in the literature.

\begin{itemize}
	\item $\textbf{Uniform sampling (UNI)}$:  $\pi^{UNI}_i=\frac{1}{p}$.
	\item $\textbf{Column sampling (COL)}$: $\pi^{COL}_i=\frac{\|A_i\|^2_2}{\sum_{i=1}^{p}\|A_i\|^2_2}$.	
	\item $\textbf{Leverage sampling (LEV)}$ \cite{chowdhury2018iterative}:  $\pi^{LEV}_i=\frac{\|V^i\|^2_2}{\sum_{i=1}^{p}\|V^i\|^2_2}$.
	\item $\textbf{Ridge leverage sampling (RLEV)}$\cite{chowdhury2018iterative}: $\pi^{RLEV}_i=\frac{\|X^i\|^2_2}{\sum_{i=1}^{p}\|X^i\|^2_2}$.
	
\end{itemize}

In the following, we discuss a  new set of  sampling probabilities, i.e., the optimal subsampling probabilities, which can be derived by combining the asymptotic variance of the estimators  from Algorithm \ref{DSP} and  the A-optimal design criterion \cite{Pukelsheim1993}. 
Considering the property of trace 
\cite[Section 7.7]{horn2012matrix} 
and the variance   	
 $ \rm{Var}(\hat{\beta}-\hat{\beta}_{\mathit{rls}}|\mathcal{F}_n)=\frac{1}{\mathit{\lambda}^2}\mathit{A}^T \rm{Var}(\hat{\mathit{z}}-\hat{\mathit{z}}^{\ast}|\mathcal{F}_n)\mathit{A}$, 
 to let the trace ${\rm tr}(\rm{Var}(\hat{\beta}-\hat{\beta}_{\mathit{rls}}|\mathcal{F}_n))$  attain its minimum, it suffices to make
${\rm tr}(\rm{Var}(\hat{\mathit{z}}-\hat{\mathit{z}}^{\ast}|\mathcal{F}_n))$ get its minimum. Thus, 
we mainly investigate the asymptotic variance of the dual estimator  $ \hat{z} $. 
As done  in e.g., \cite{Zhu2015, wang2018optimal, yao2019optimal, wang2021optimal, Zhang2021},   
several conditions  are first presented as follows.
\begin{condition}\label{C.3}
	For the design matrix $A \in \mathbb{R}^{n\times p}$, 
	we assume that
	\begin{align}
		\sum_{i = 1}^{p}\frac{\|A_i\|^{6}_2}{\pi^{2}_ip^{3}}={O}_p(1),\label{cond.3}\\
		\mathit{\sum_{i=1}^{p}\frac{A_iA^T_i\|A_i\|\rm{^2_2}}{ p\rm{^2}\pi_\mathit{i}} =O_p(\rm{1})},\label{cond.1}\\
		\mathit	{\sum_{i=\rm{1}}^{p}\frac{\|A_i\|\rm{^2_2}}{p}=O_p(\rm{1})},\label{rem.c2.2}\\
		\sum_{i=1}^{p}\frac{A_iA^T_i}{ p}=O_p(1),\label{rem.c2.1}
	\end{align}
	where  $\pi_i$ with $i=1,\cdots,p$ are the given probabilities. 
\end{condition}	

\begin{remark}\label{rem.c.1.0}
		With respect to uniform sampling, i.e. $ \pi_i={p}^{-1} $,
	the   conditions \eqref{cond.3} and \eqref{cond.1} are equivalent to
	\begin{align}\label{cond.rem.1}
		\sum_{i = 1}^{p}\frac{\|A_i\|^{6}_2}{p}={O}_p(1), \quad \mathit{\sum_{i=1}^{p}\frac{A_iA^T_i\|A_i\|\rm{^2_2}}{p} =O_p(\rm{1})}.
	\end{align} 
	In this case, 
	to make \eqref{cond.rem.1} hold,    it is sufficient to
	suppose that $\mathrm{E}(\|A_i\|^6_2)<\infty$. 	
	Furthermore, the  conditions \eqref{rem.c2.2} and \eqref{rem.c2.1}  hold if  $\mathrm{E}(\|A_i\|^2_2)<\infty$.
\end{remark}

\begin{remark}\label{rem.c.1}
	The above   moment  type conditions are wild. For example, if the  entries of $A$ obey the sub-Gaussian distribution \cite{buldygin1980sub}, then all the conditions mentioned above are satisfied.  The reason is that the sub-Gaussian distribution owns finite moments up to any finite order.
\end{remark}	

With the above conditions, we can  present the following asymptotic distribution theorem.	

\begin{theorem}\label{thm.1}
	Assume that the conditions \eqref{cond.3}, \eqref{cond.1},  \eqref{rem.c2.2},  and  \eqref{rem.c2.1}
	are satisfied.
	Then, as $p \to \infty $, $  r \to \infty $,	conditional on $\mathcal{F}_n$ in probability, the estimator $ \hat{z} $ constructed by  Algorithm \ref{DSP} satisfies
	\begin{align}\label{thm1.1}
		V^{-1/2}(\hat{z}-\hat{z}^{\ast})\xrightarrow{L}N(0,I),
	\end{align}
	where the notation $\xrightarrow{L}$ represents the convergence in distribution, and
	\begin{align*}
		V=(\frac{M_A}{p})^{-1}\frac{V_c}{r}(\frac{M_A}{p})^{-1}
	\end{align*}   with    $M_A=AA^T+\lambda I$ and 
	$ 	V_c=\sum_{i=1}^{p}\frac{A_iA^T_i\hat{z}^{\ast}\hat{z}^{\ast T}A_iA^T_i}{p^2\pi_i} $.
	
\end{theorem}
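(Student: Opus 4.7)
The plan is to linearize $\hat z - \hat z^*$ in the sketch error and apply a conditional central limit theorem to the resulting i.i.d.\ sum. Set $M_A = AA^T + \lambda I$ and $W = ASS^TA^T - AA^T$; a direct calculation using the construction of $S$ shows $\mathbb{E}[W \mid \mathcal F_n] = 0$. Combining the second-order resolvent identity $(M_A + W)^{-1} = M_A^{-1} - M_A^{-1} W M_A^{-1} + M_A^{-1} W (M_A + W)^{-1} W M_A^{-1}$ with $\hat z^* = \lambda M_A^{-1} y$ yields
\begin{align*}
\hat z - \hat z^* = -M_A^{-1} W \hat z^* + R, \qquad R = M_A^{-1} W (M_A+W)^{-1} W \hat z^*.
\end{align*}
Positive semidefiniteness of $ASS^TA^T$ gives the deterministic bound $\|(M_A+W)^{-1}\|_2 \leq \lambda^{-1}$, so the quadratic remainder $R$ is controlled as soon as $\|W\|$ is.

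By the construction of $S$ in Algorithm~\ref{DSP}, the leading term is a conditional i.i.d.\ sum:
\begin{align*}
-M_A^{-1} W \hat z^* = \frac{1}{r}\sum_{t=1}^{r} \psi_t, \qquad \psi_t = -M_A^{-1}\Big[\tfrac{A_{i_t} A_{i_t}^T \hat z^*}{\pi_{i_t}} - AA^T \hat z^*\Big].
\end{align*}
A short computation gives $\mathbb{E}[\psi_t \mid \mathcal F_n] = 0$ and
\begin{align*}
\mathrm{Var}(\psi_1 \mid \mathcal F_n) = M_A^{-1}\Big[\sum_{i=1}^{p} \tfrac{A_i A_i^T \hat z^* \hat z^{*T} A_i A_i^T}{\pi_i} - AA^T \hat z^* \hat z^{*T} AA^T\Big] M_A^{-1}.
\end{align*}
Dividing by $r$, this matches $V = (M_A/p)^{-1}(V_c/r)(M_A/p)^{-1}$ up to the rank-one correction $r^{-1} M_A^{-1} AA^T \hat z^* \hat z^{*T} AA^T M_A^{-1}$, which is of lower order under condition \eqref{rem.c2.1}.

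I then apply the multivariate Lindeberg-Feller CLT via the Cram\'er-Wold device to $r^{-1}\sum_t \psi_t$, using the Lyapunov condition of order three. Expanding $\mathbb{E}[\|\psi_1\|_2^3 \mid \mathcal F_n]$ yields a dominant piece proportional to $\|\hat z^*\|_2^3 \sum_i \|A_i\|_2^6 / \pi_i^2$, and after carrying the $p^{-3}$ factor built into $V$ through the $(M_A/p)^{-1}$ blocks, the Lyapunov ratio is bounded by a constant multiple of $r^{-1/2} \cdot \sum_i \|A_i\|_2^6 / (p^3 \pi_i^2)$, which condition \eqref{cond.3} certifies as $O_p(r^{-1/2})$; conditions \eqref{cond.1} and \eqref{rem.c2.1} are used to bound the conditional covariance from below. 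Hence the Lyapunov ratio vanishes as $p, r \to \infty$, delivering $V^{-1/2}(-M_A^{-1} W \hat z^*) \xrightarrow{L} N(0, I)$ conditional on $\mathcal F_n$ in probability.

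The final step controls $V^{-1/2} R$ using $\mathbb{E}[\|W\|_F^2 \mid \mathcal F_n] \leq r^{-1} \sum_i \|A_i\|_2^4 / \pi_i$ together with condition \eqref{cond.1} and the deterministic bound on $(M_A+W)^{-1}$, producing $V^{-1/2} R = o_{p \mid \mathcal F_n}(1)$; Slutsky's theorem then closes the argument and establishes \eqref{thm1.1}. The main obstacle is the CLT step, where the $p$-powers must be tracked carefully so that the $p^{-3}\pi_i^{-2}$ normalization of condition \eqref{cond.3} aligns with the $(M_A/p)$ factors hidden in $V$. A secondary technical issue is ensuring that the $\sqrt r$ amplification present in $V^{-1/2}$ does not wipe out the $r^{-1}$ gain that makes the quadratic remainder $R$ negligible.
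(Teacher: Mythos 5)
Your overall architecture is the same as the paper's: writing $W=\hat M_A-M_A$ and noting $W\hat z^{\ast}=e^{\ast}:=\hat M_A\hat z^{\ast}-\lambda y$, your expansion $\hat z-\hat z^{\ast}=-M_A^{-1}W\hat z^{\ast}+R$ with $R$ quadratic in $W$ is exactly the paper's decomposition in \eqref{proofthem.1.0.0}; your $\psi_t$ are the paper's $-M_A^{-1}e_{i_t}$; your variance identity, the observation that the rank-one term $r^{-1}M_A^{-1}AA^T\hat z^{\ast}\hat z^{\ast T}AA^TM_A^{-1}$ is negligible, and your order-three Lyapunov verification via \eqref{cond.3} all mirror the paper's Lemma \ref{lem.1} and the Lindeberg computation. (Both arguments also share the implicit assumptions that $(M_A/p)^{-1}=O_p(1)$ and that $V_c$ is nondegenerate; I do not count those against you.)

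The one step that fails as written is your control of the remainder. You propose to bound $R=M_A^{-1}W(M_A+W)^{-1}W\hat z^{\ast}$ using the deterministic bound $\|(M_A+W)^{-1}\|_2\le\lambda^{-1}$ together with $\mathrm{E}[\|W\|_F^2\mid\mathcal F_n]\le r^{-1}\sum_i\|A_i\|_2^4/\pi_i$. Condition \eqref{cond.1} only gives $\sum_i\|A_i\|_2^4/(p^2\pi_i)=O_p(1)$, i.e.\ $\|W\|=O_p(p\,r^{-1/2})$, so with the $\lambda^{-1}$ bound you get at best $\|R\|=O_p(p^2/r)$ (or $O_p(p/r)$ if you also use $\|M_A^{-1}\|=O_p(p^{-1})$), and after the $\sqrt r$ amplification in $V^{-1/2}$ this is $O_p(p^2r^{-1/2})$ or $O_p(p\,r^{-1/2})$ --- which does not vanish in the intended regime $r\ll p\to\infty$. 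The deterministic resolvent bound loses a factor of $p$: you must instead show $\|(\hat M_A/p)^{-1}\|_2=O_{p\mid\mathcal F_n}(1)$, which follows from $(\hat M_A-M_A)/p=O_{p\mid\mathcal F_n}(r^{-1/2})$ (your own $W/p$ estimate) plus the invertibility of $M_A/p$; with all three factors normalized by $p$ one gets $R=(\tfrac{M_A}{p})^{-1}\tfrac{W}{p}(\tfrac{\hat M_A}{p})^{-1}\tfrac{e^{\ast}}{p}=O_{p\mid\mathcal F_n}(r^{-1})$, and only then does $V^{-1/2}R=O_{p\mid\mathcal F_n}(r^{-1/2})=o_p(1)$. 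This is precisely how the paper's Lemma \ref{lem.2} and the step from \eqref{proofthem.1.0.0} to \eqref{proofthem.1.0} handle it; the fix fits inside your framework but is not optional.
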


Following the A-optimal design criterion and the asymptotic variance $V$  in \eqref{thm1.1}, we can provide the   optimal subsampling probabilities for Algorithm \ref{DSP} by minimizing  the trace $\rm{tr}(\mathit{V})$.  Noting that $M_A$ does not depend on $\pi_i$  and is nonnegative definite,  we get that  $V_c(\pi_1)\preccurlyeq V_c(\pi_2)$ is equivalent to $V(\pi_1)\preccurlyeq V(\pi_2)$ for any  two sampling probability sets $\pi_1=\{\pi^{(1)}_i\}^p_{i=1}$ and $\pi_2=\{\pi^{(2)}_i\}^p_{i=1}$. Thus, we can simplify the optimal criterion by avoiding computing $M^{-1}_A$, namely, we can calculate the  optimal subsampling probabilities  by minimizing   $\rm{tr}(\mathit{V}_\mathit{c})$ instead of  $\rm{tr}(\mathit{V})$. Actually, this can  be  viewed as the L-optimal design  criterion \cite{Pukelsheim1993} with $L=rp^{-2}M_A^2$.

\begin{theorem}	\label{thm.2}
	For Algorithm \ref{DSP}, when 
	\begin{align}\label{thm2.1}
		\pi^{OPL}_i=\frac{	\mid \hat{\beta}_{rls(i)}	\mid \|A_i\|_2}{\sum_{i=1}^{p}	\mid \hat{\beta}_{rls(i)}	\mid \|A_i\|_2},  \quad  {i=1,\cdots,p},
	\end{align}
	where $\hat{\beta}_{rls(i)}$	is the $i$-th element of the   ridge estimator $\hat{\beta}_{rls}$,  $\rm{tr}(\mathit{V}_\mathit{c})$ achieves its minimum.
	
\end{theorem}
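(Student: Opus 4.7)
The plan is to reduce the matrix expression $V_c$ to an explicit scalar sum indexed by $i$, then apply the Cauchy--Schwarz inequality to optimize over the simplex $\{\pi_i \geq 0 : \sum_i \pi_i = 1\}$.

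First I would rewrite the trace of $V_c$ in a form that isolates the dependence on $\pi_i$. Using the cyclic invariance of the trace,
\begin{align*}
\mathrm{tr}(V_c) = \sum_{i=1}^{p} \frac{\mathrm{tr}\bigl(A_i A_i^T \hat{z}^{*}\hat{z}^{*T}A_i A_i^T\bigr)}{p^2 \pi_i} = \sum_{i=1}^{p} \frac{\|A_i\|_2^2\,(A_i^T \hat{z}^{*})^2}{p^2 \pi_i},
\end{align*}
so the trace collapses to a weighted $\sum_i c_i^2/\pi_i$ with positive weights $c_i^2 = \|A_i\|_2^2(A_i^T\hat{z}^{*})^2/p^2$.

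Next I would use the duality relation \eqref{rls.2.0}, which gives $A^T \hat{z}^{*} = \lambda\hat{\beta}_{rls}$, so that $A_i^T \hat{z}^{*} = \lambda \hat{\beta}_{rls(i)}$. Substituting yields
\begin{align*}
\mathrm{tr}(V_c) = \frac{\lambda^2}{p^2}\sum_{i=1}^{p} \frac{\|A_i\|_2^2\,\hat{\beta}_{rls(i)}^{\,2}}{\pi_i}.
\end{align*}
Because the prefactor $\lambda^2/p^2$ is independent of $\{\pi_i\}$, the problem reduces to minimizing $\sum_{i=1}^p a_i^2/\pi_i$ subject to $\sum_i \pi_i = 1$, $\pi_i \geq 0$, where $a_i = |\hat{\beta}_{rls(i)}|\,\|A_i\|_2$.

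Finally I would invoke the Cauchy--Schwarz inequality in the form
\begin{align*}
\Bigl(\sum_{i=1}^{p} a_i\Bigr)^{2} = \Bigl(\sum_{i=1}^{p} \tfrac{a_i}{\sqrt{\pi_i}}\cdot\sqrt{\pi_i}\Bigr)^{2} \leq \Bigl(\sum_{i=1}^{p} \tfrac{a_i^2}{\pi_i}\Bigr)\Bigl(\sum_{i=1}^{p} \pi_i\Bigr) = \sum_{i=1}^{p} \frac{a_i^2}{\pi_i},
\end{align*}
with equality if and only if $\pi_i \propto a_i$. Normalizing gives precisely $\pi_i^{OPL} = |\hat{\beta}_{rls(i)}|\,\|A_i\|_2 / \sum_{j} |\hat{\beta}_{rls(j)}|\,\|A_j\|_2$, matching \eqref{thm2.1}. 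A Lagrange multiplier derivation would give the same answer and serves as a sanity check.

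There is no serious obstacle here; the only thing to be careful about is the degenerate case when some $a_i$ vanish (either $\hat{\beta}_{rls(i)}=0$ or $A_i = 0$), in which case the corresponding summand $a_i^2/\pi_i$ is $0$ regardless of $\pi_i$, so one may legitimately set $\pi_i = 0$ and the Cauchy--Schwarz argument still delivers the stated optimum on the support of the nonzero indices.
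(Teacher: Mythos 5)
Your proposal is correct and follows essentially the same route as the paper's proof: reduce $\mathrm{tr}(V_c)$ to the scalar sum $\frac{\lambda^2}{p^2}\sum_i \hat{\beta}_{rls(i)}^2\|A_i\|_2^2/\pi_i$ via $A_i^T\hat{z}^{*}=\lambda\hat{\beta}_{rls(i)}$, then apply Cauchy--Schwarz against $\sum_i\pi_i=1$ with equality iff $\pi_i\propto|\hat{\beta}_{rls(i)}|\,\|A_i\|_2$. Your remark on the degenerate case of vanishing weights is a small extra care the paper omits, but the argument is otherwise identical.
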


\begin{remark}\label{remth2.2}
	When $ \lambda	\to 0^+$,
	\eqref{thm2.1} can be degraded to 
		the optimal subsampling probabilities of the compressed least squares regression. 
\end{remark}

\begin{remark}\label{rem.1}
Note that $V_c
=\lambda^2\sum_{i=1}^{p}\frac{\beta^2_{rls}A_{i}A_{i}^T}{p^2\pi_{i}}$. Thus, by
	\begin{align*}
		\mid \hat{\beta}_{rls}(i)	\mid =\|A^T_i(AA^T+\lambda I)^{-1}y\|_2\leq\|A_i\|_2\|(AA^T+\lambda I)^{-1}y\|_2,
	\end{align*}
	we have
	\begin{align*}
	\rm{tr}(\mathit{V}_\mathit{c})&\leq	\frac{\lambda^2\|(AA^T+\lambda I)^{-1}y\|^2_2}{p^2}\sum_{i=1}^{p}\frac{\|A_i\|^4_2}{\pi_i}\\
		&=\frac{\lambda^2\|(AA^T+\lambda I)^{-1}y\|^2_2}{p^2}\sum_{i=1}^{p}\pi_i\sum_{i=1}^{p}\frac{\|A_i\|^4_2}{\pi_i}.
	\end{align*}
	Further, by Cauchy-Schwarz inequality, we obtain
	\begin{align*}
		\frac{\lambda^2\|(AA^T+\lambda I)^{-1}y\|^2_2}{p^2}\sum_{i=1}^{p}\pi_i\sum_{i=1}^{p}\frac{\|A_i\|^4_2}{\pi_i}\geq\frac{\lambda^2\|(AA^T+\lambda I)^{-1}y\|_2}{p^2}(\sum_{i=1}^{p}\|A_i\|^2_2)^2.
	\end{align*}
	Thus,	analogous to Theorem \ref{thm.2}, we get that when 
	\begin{align}\label{proofrem1.3}
		\pi_i=\pi^{COL}_i=\frac{\|A_i\|^2_2}{\sum_{i=1}^{p}\|A_i\|^2_2},
	\end{align}
	the  upper bound of $	\rm{tr}(\mathit{V}_\mathit{c})$, i.e., $\frac{\lambda^2\|(AA^T+\lambda I)^{-1}y\|^2_2}{p^2}\sum_{i=1}^{p}\frac{\|A_i\|^4_2}{\pi_i}$, reaches  the minimum.   Obviously,
	\eqref{proofrem1.3} is   easier  to compute compared with \eqref{thm2.1}. However, we has to lose some accuracy as expense in this case.
	
	Similarly,
	based on $ \|A_i\|^2_2 \leq\|A\|^2_F$, we have
	\begin{align*}
		{\rm{tr}}(\mathit{V}_\mathit{c})\leq	\frac{\lambda^2\|A\|^2_F}{p^2}\sum_{i=1}^{p}\frac{\hat{\beta}^2_{rls(i)}}{\pi_i}=\frac{\lambda^2\|A\|^2_F}{p^2}\sum_{i=1}^{p}\pi_i\sum_{i=1}^{p}\frac{\hat{\beta}^2_{rls(i)}}{\pi_i}
	\end{align*}
	and 
	\begin{align*}
		\frac{\lambda^2\|A\|^2_F}{p^2}\sum_{i=1}^{p}\pi_i\sum_{i=1}^{p}\frac{\hat{\beta}^2_{rls(i)}}{\pi_i}\geq\frac{\lambda^2\|A\|^2_F}{p^2}(\sum_{i=1}^{p}	\mid \hat{\beta}_{rls(i)}	\mid )^2.
	\end{align*}
	Then, we find that when 
	\begin{align*}
		\pi_i=\pi^{RSIS}_i=\frac{	\mid \hat{\beta}_{rls(i)}	\mid }{\sum_{i=1}^{p}	\mid \hat{\beta}_{rls(i)}	\mid },
	\end{align*}
	the above upper bound of $	\rm{tr}(\mathit{V}_\mathit{c})$ reaches  the minimum.   Surprisingly, 
	$\pi^{RSIS}_i$ 
	corresponds to the screening  criteria of iteratively thresholded ridge regression screener given in \cite{fan2008sure}. This fact implies that the screener with the probabilities in \eqref{thm2.1} may perform better than the one in \cite{fan2008sure}.
\end{remark}	

\subsection{Error Analysis for RSHRR}
\label{sec.3.3}	

We first give an estimation error bound.

\begin{theorem}\label{thm.3}
	Assume that 
	\begin{align}\label{lem.5.1}
		c_1\|{{V}}^i\|_2\leq\|{{A}}_i\|_2\leq c_2\|{{V}}^i\|_2 \ {and} \  	{s_{{1}}\|{{V}}^i\|_{{2}}\|{{y}}\|_2\leq	\mid \hat{\beta}_{rls(i)}}	\mid \leq s_{{2}}\|{{V}}^i\|_{{2}}\|{{y}}\|_2,\quad {i=1,\cdots,p},
	\end{align}	where  $0< c_1\leq c_2$ and $0< s_1\leq s_2$,  and  let $r\geq   \frac{32{s_2c_2\rho}}{3s_1c_1{\epsilon}^2}\mathrm{ln}(\frac{4\rho}{\delta})$ with    $\epsilon,\delta \in (0,1)$. 
	Then, 
	for $S$ formed 
	by $\pi_i$  = $\pi^{OPL}_{i}$ and any $\epsilon$,  with the probability at least $1-\delta$, $\hat{\beta}$  constructed by Algorithm \ref{DSP} satisfies
	\begin{align}\label{thm3.0}
		\|\hat{\beta}-\hat{\beta}_{rls}\|_2\leq\epsilon\|\hat{\beta}_{rls}\|_2,
	\end{align}
	where $\hat{\beta}_{rls}$ is as in \eqref{rls.1.1}.
	
\end{theorem}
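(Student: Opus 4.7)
The plan is to first rewrite both estimators in a common ``small'' ($\rho\times\rho$) form using the thin SVD $A=U\Sigma V^{T}$, then reduce the bound to a subspace-embedding inequality about $V^{T}SS^{T}V$, and finally apply matrix Chernoff under the OPL probabilities. Specifically, I would start by verifying (by the same manipulation that gives $\hat{\beta}_{rls}=V\Sigma(\Sigma^{2}+\lambda I_{\rho})^{-1}U^{T}y$) that, writing $T=V^{T}S$,
\begin{align*}
\hat{\beta} = V\Sigma\bigl(\Sigma TT^{T}\Sigma+\lambda I_{\rho}\bigr)^{-1}U^{T}y.
\end{align*}
The resolvent identity then produces the clean difference
\begin{align*}
\hat{\beta}-\hat{\beta}_{rls} = -V\Sigma\bigl(\Sigma TT^{T}\Sigma+\lambda I_{\rho}\bigr)^{-1}\Sigma E\Sigma\bigl(\Sigma^{2}+\lambda I_{\rho}\bigr)^{-1}U^{T}y,\qquad E:=V^{T}SS^{T}V-I_{\rho}.
\end{align*}

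Next I would reduce the spectral bound to controlling $\|E\|_{2}$. Introducing the diagonal ``ridge-sandwich'' $R=\Sigma(\Sigma^{2}+\lambda I_{\rho})^{-1/2}$, which satisfies $\|R\|_{2}\le 1$, one has the factorization $\Sigma TT^{T}\Sigma+\lambda I_{\rho}=(\Sigma^{2}+\lambda I_{\rho})^{1/2}(I_{\rho}+RER)(\Sigma^{2}+\lambda I_{\rho})^{1/2}$. Combining this with the orthonormality of $V$ and with $\|\Sigma(\Sigma^{2}+\lambda I_{\rho})^{-1}U^{T}y\|_{2}=\|\hat{\beta}_{rls}\|_{2}$ yields
\begin{align*}
\|\hat{\beta}-\hat{\beta}_{rls}\|_{2}\;\le\;\frac{\|E\|_{2}}{1-\|E\|_{2}}\,\|\hat{\beta}_{rls}\|_{2}
\end{align*}
whenever $\|E\|_{2}<1$. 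Therefore it suffices to show $\|E\|_{2}\le\epsilon/(1+\epsilon)$ (or, cruder, $\le\epsilon/2$) with probability $\ge 1-\delta$.

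For that, I would write $V^{T}SS^{T}V=\sum_{t=1}^{r}\frac{1}{r\pi_{i_{t}}}v_{i_{t}}v_{i_{t}}^{T}$, where $v_{i}^{T}=V^{i}$, a sum of independent PSD rank-one matrices with expectation $I_{\rho}$. Using assumption \eqref{lem.5.1} and $\sum_{j}\|v_{j}\|_{2}^{2}=\|V\|_{F}^{2}=\rho$, one obtains the key lower bound
\begin{align*}
\pi_{i}^{OPL}=\frac{|\hat{\beta}_{rls(i)}|\,\|A_{i}\|_{2}}{\sum_{j}|\hat{\beta}_{rls(j)}|\,\|A_{j}\|_{2}}\;\ge\;\frac{s_{1}c_{1}\,\|v_{i}\|_{2}^{2}}{s_{2}c_{2}\,\rho},
\end{align*}
so that $\|\tfrac{1}{r\pi_{i_{t}}}v_{i_{t}}v_{i_{t}}^{T}\|_{2}\le \tfrac{s_{2}c_{2}\rho}{s_{1}c_{1}r}=:R_{*}$. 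Tropp's matrix Chernoff inequality then yields $\Pr[\|E\|_{2}\ge\tau]\le 2\rho\exp(-c\tau^{2}/R_{*})$, and plugging in the assumed lower bound on $r$ together with a suitable threshold $\tau$ (whose choice is dictated by the factor $\tau/(1-\tau)\le\epsilon$) gives the tail probability $\delta$.

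The genuinely delicate part is the last step: verifying that the OPL probabilities give a spectral norm (not Frobenius, not entrywise) concentration result, checking that the uniform bound $R_{*}$ on the summands comes out with the correct dependence on $\rho/(s_{1}c_{1})$, and tuning $\tau$ so that the explicit constant $32/3$ and the $4\rho/\delta$ inside the log emerge exactly from the matrix Chernoff tail. The algebraic Steps~1--2 are routine once the $R$-substitution is identified; the concentration Step~3 is where the two-sided condition \eqref{lem.5.1} (equating OPL up to constants with leverage-score sampling) really pays off.
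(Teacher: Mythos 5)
Your proposal is correct, and it reaches the theorem by a genuinely different reduction than the paper, even though both arguments ultimately rest on the same concentration statement. The paper works in the dual: it writes $\hat{\beta}-\hat{\beta}_{rls}=\tfrac{1}{\lambda}V\Sigma U^{T}(\hat{z}-\hat{z}^{\ast})$, Taylor-expands the sketched dual objective $\hat{L}(z)$ around $\hat{z}^{\ast}$, and exploits first-order optimality of $\hat{z}^{\ast}$ and $\hat{z}$ together with $\triangledown^{2}\hat{L}\succcurlyeq\tfrac{1}{\lambda}ASS^{T}A^{T}$ to arrive at the quadratic inequality $[\Sigma U^{T}(\hat{z}-\hat{z}^{\ast})]^{T}(I-V^{T}SS^{T}V)\Sigma U^{T}\hat{z}^{\ast}\geq[\Sigma U^{T}(\hat{z}-\hat{z}^{\ast})]^{T}V^{T}SS^{T}V[\Sigma U^{T}(\hat{z}-\hat{z}^{\ast})]$, from which the same $\frac{\epsilon'}{1-\epsilon'}\leq 2\epsilon'=\epsilon$ factor falls out. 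You instead invert explicitly in the $\rho$-dimensional SVD coordinates and use the resolvent identity plus the factorization $\Sigma TT^{T}\Sigma+\lambda I=(\Sigma^{2}+\lambda I)^{1/2}(I+RER)(\Sigma^{2}+\lambda I)^{1/2}$; this is a purely linear-algebraic route that yields the identical $\|E\|_{2}/(1-\|E\|_{2})$ bound with less machinery, at the price of being specific to the quadratic objective (the paper's convexity argument is the template it reuses in several optimal-subsampling works). The second halves coincide: your lower bound $\pi_{i}^{OPL}\geq s_{1}c_{1}\|V^{i}\|_{2}^{2}/(s_{2}c_{2}\rho)$ is exactly the bound the paper derives in its Lemma on $\|V^{T}SS^{T}V-I\|_{2}$, and the uniform summand bound $s_{2}c_{2}\rho/(s_{1}c_{1}r)$ and variance bound it feeds into matrix Bernstein (via Theorem~3 of the cited Chowdhury et al.\ reference, which is where the constants $8/3$ and $4\rho/\delta$ come from, doubled to $32/3$ by the substitution $\epsilon'=\epsilon/2$) match your sketch. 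The only cosmetic caveat is that the tail is a matrix Bernstein bound of the form $\exp\bigl(-\tfrac{\tau^{2}/2}{\sigma^{2}+L\tau/3}\bigr)$ rather than a pure Chernoff $\exp(-c\tau^{2}/R_{*})$, but since the variance proxy and the uniform bound are of the same order here, your stated form is what results.
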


\begin{remark}\label{rem.3.1}
	The assumptions  in \eqref{lem.5.1}
	are reasonable  and  reachable due to  $A_i=U\Sigma(V^i)^T$ and 
	\begin{align*}
		\hat{\beta}_{rls(i)}=A^T_i(AA^T+\lambda I)^{-1}y=V^i(\Sigma+\lambda \Sigma^{-1})^{-1}U^Ty.
	\end{align*}
	{In fact,  for the worst case, $c_1=\sigma_1({A})$, $c_2=\sigma_1({A})$, and $s_1$ and $s_2$ are controlled by $ \mathop{\rm{min}}\limits_{\mathit{j}=1,\cdots,\rho}\{\frac{\sigma_\mathit{j}(\mathit{A})}{\sigma^2_\mathit{j}(\mathit{A})+\lambda}\} $ and $ \mathop{\rm{max}}\limits_{\mathit{j}=1,\cdots,\rho}\{\frac{\sigma_\mathit{j}(\mathit{A})}{\sigma^2_\mathit{j}(\mathit{A})+\lambda}\} $, respectively.   } The aim for introducing the parameters $c_1, c_2, s_1,$ and $s_2$ here is to simplify the expression of $r$.
\end{remark}

In the following, we provide a risk  bound, in which the risk function is defined as
\begin{align*}
	\rm{risk}(\hat{\mathit{y}})=\frac{1}{\mathit{n}}\mathrm{E}_\mathit{y}(\mathit{\|\hat{y}-A\beta\|}\rm{^2_2}),
\end{align*}
where $\mathrm{E}_y$ denotes the expectation on $y$, and $\hat{y}$ denotes the  prediction of $A\beta$.

\begin{theorem}\label{thm.4}
	Suppose that the setting is the same as the one in
	Theorem \ref{thm.3},  and let  $\mu=\sqrt{\sum_{j=1}^{\rho}\frac{\sigma^2_j(A)}{(\sigma^2_j(A)+\lambda)^2}}$. 
	Then,  for $S$ formed 
	by $\pi_i$  = $\pi^{OPL}_{i}$ and any $\epsilon$,	with probability at least $1-\delta$, 
	\begin{align*} 
		\rm{risk}(\hat{\mathit{y}})\leq \rm{risk}({\mathit{y}}_{\ast})+\frac{3\mathit{\epsilon}}{\mathit{n}}\|\mathit{A}\|^2_2(\mathit{\mu}^2+\|\mathit{\beta}\|^2_2),
	\end{align*}
	where $\hat{y}=A\hat{\beta}$ with $ \hat{\beta}  $ constructed by Algorithm \ref{DSP} and ${y}_{\ast}=A{\hat{\beta}_{rls}}$.
	
\end{theorem}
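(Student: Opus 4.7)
The plan is to introduce the intermediate target $y_*=A\hat{\beta}_{rls}$ and control $\hat y - A\beta = (\hat y - y_*) + (y_* - A\beta)$ summand by summand. Starting from the triangle inequality $\|\hat y - A\beta\|_2 \le \|\hat y - y_*\|_2 + \|y_* - A\beta\|_2$, squaring and applying Cauchy--Schwarz to the cross term after taking $E_y$ yields
\begin{align*}
E_y\|\hat y - A\beta\|_2^2 \le E_y\|y_* - A\beta\|_2^2 + E_y\|\hat y - y_*\|_2^2 + 2\sqrt{E_y\|\hat y - y_*\|_2^2}\sqrt{E_y\|y_* - A\beta\|_2^2}.
\end{align*}
On the event (of probability $\ge 1-\delta$) supplied by Theorem~\ref{thm.3}, the identity $\hat y - y_* = A(\hat\beta - \hat\beta_{rls})$ gives pathwise $\|\hat y - y_*\|_2 \le \epsilon\|A\|_2\|\hat\beta_{rls}\|_2$, so the whole estimate is reduced to bounding the two second moments $E_y\|\hat\beta_{rls}\|_2^2$ and $E_y\|y_* - A\beta\|_2^2$.

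Both quantities are affine in the noise $\upsilon$, so a bias--variance decomposition combined with the thin SVD $A = U\Sigma V^T$ does the job. For the first, I would compute
\begin{align*}
E_y\|\hat\beta_{rls}\|_2^2 = \|A^T(AA^T+\lambda I)^{-1}A\beta\|_2^2 + \mathrm{tr}\bigl(A^T(AA^T+\lambda I)^{-2}A\bigr),
\end{align*}
note that $A^T(AA^T+\lambda I)^{-1}A = V\Sigma^2(\Sigma^2+\lambda I)^{-1}V^T$ has operator norm $\max_j\sigma_j^2/(\sigma_j^2+\lambda) \le 1$ while the trace equals exactly $\sum_j \sigma_j^2/(\sigma_j^2+\lambda)^2 = \mu^2$, giving $E_y\|\hat\beta_{rls}\|_2^2 \le \|\beta\|_2^2 + \mu^2$. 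For the second, I would write $y_* - A\beta = -\lambda(AA^T+\lambda I)^{-1}A\beta + H_*\upsilon$ with $H_* = AA^T(AA^T+\lambda I)^{-1}$, and the SVD gives $\|\lambda(AA^T+\lambda I)^{-1}A\|_2 = \max_j \lambda\sigma_j/(\sigma_j^2+\lambda) \le \|A\|_2$ (using $\lambda/(\sigma_j^2+\lambda)\le 1$) together with $\mathrm{tr}(H_*^T H_*) = \sum_j \sigma_j^4/(\sigma_j^2+\lambda)^2 \le \|A\|_2^2\mu^2$, hence $E_y\|y_* - A\beta\|_2^2 \le \|A\|_2^2(\|\beta\|_2^2 + \mu^2)$.

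Plugging these two estimates back into the first display and dividing by $n$ gives $\mathrm{risk}(\hat y) - \mathrm{risk}(y_*) \le n^{-1}\|A\|_2^2(\epsilon^2 + 2\epsilon)(\|\beta\|_2^2 + \mu^2)$, and $\epsilon\in(0,1)$ upgrades $\epsilon^2$ into $\epsilon$, producing the stated factor $3\epsilon$. The only delicate step, and where I expect to spend the most care, is the SVD bookkeeping that makes the variance traces collapse cleanly to multiples of $\mu^2$ while the bias pieces remain controlled by $\|A\|_2^2\|\beta\|_2^2$; everything else is triangle inequality, Cauchy--Schwarz, and a direct invocation of Theorem~\ref{thm.3}.
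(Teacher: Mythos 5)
Your proof is correct and lands on exactly the stated constant, but it takes a genuinely different route from the paper. The paper's proof of Theorem~\ref{thm.4} is a one-line reduction to Theorem~6 of \cite{chen2015fast}: it reruns that risk analysis, which works inside the SVD with the perturbation matrix $R=(\lambda\Sigma^{-1}+\Sigma)^{-1}\Sigma(V^TS^TSV-I)$ and a bias--variance decomposition of the sketched hat matrix, and the only new ingredient is that $\|R\|_2\le\epsilon'$ is now certified by Lemma~\ref{lem.5} (the subspace-embedding bound under $\pi^{OPL}$) rather than by an oblivious subspace embedding theorem. You instead treat Theorem~\ref{thm.3} as a black box: on its good event you get the pathwise bound $\|\hat y-y_*\|_2\le\epsilon\|A\|_2\|\hat\beta_{rls}\|_2$, and then the triangle inequality, Cauchy--Schwarz on the cross term, and the two moment computations $\mathrm{E}_y\|\hat\beta_{rls}\|_2^2\le\|\beta\|_2^2+\mu^2$ and $\mathrm{E}_y\|y_*-A\beta\|_2^2\le\|A\|_2^2(\|\beta\|_2^2+\mu^2)$ (both of which check out against the SVD, the traces collapsing to $\mu^2$ and $\le\|A\|_2^2\mu^2$ respectively) yield an excess of $(2\epsilon+\epsilon^2)\|A\|_2^2(\mu^2+\|\beta\|_2^2)/n\le 3\epsilon\|A\|_2^2(\mu^2+\|\beta\|_2^2)/n$. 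Your version is self-contained and modular --- any estimator satisfying the relative-error guarantee of Theorem~\ref{thm.3} inherits the same risk bound --- whereas the paper's version imports an external argument but keeps the perturbation structure explicit, which is what allows it to be stated as a two-line modification. Both arguments share the same unaddressed subtlety: the probability-$(1-\delta)$ event concerns a sampling matrix whose distribution depends on $y$ through $\pi^{OPL}$, so interchanging that event with $\mathrm{E}_y$ is equally informal in either route.
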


\section{Two Step Iterative Algorithm}
\label{sec.4}

Considering that the sampling probabilities \eqref{thm2.1} are
uneconomic since they are required to figure out  $\hat{\beta}_{rls}$, we now present the approximate ones. Specifically, we first
apply Algorithm \ref{DSP} with $\pi_i=\pi^{COL}_i$   and the sampling size being $r_0$ to return an approximation  $\widetilde{\beta}$ of $\hat{\beta}_{rls}$. Then, a set
of probabilities $\{\pi^{NOPL}_i\}_{i=1}^p $ 
are obtained by replacing $\hat{\beta}_{rls(i)}$ in \eqref{thm2.1} with $\widetilde{\beta}_{(i)}$, i.e.,  
\begin{align}\label{sec4.1.0}	
	\pi^{NOPL}_i=\frac{	\mid \widetilde{\beta}_{(i)}	\mid \|A_i\|_2}{\sum_{i=1}^{p}	\mid \widetilde{\beta}_{(i)}	\mid \|A_i\|_2},\quad i=1,\cdots,p.
\end{align} 
We call them the nearly optimal subsampling probabilities. Moreover, to further reduce the estimation error, we bring in the iterative method. The key motivation is that  if 
$\|\hat{\beta}_t-\hat{\beta}_{rls}\|_2\leq\epsilon\|\hat{\beta}_{t-1}-\hat{\beta}_{rls}\|_2$ holds at the $t$-th iteration, 
then a solution owning  the estimation error bound $\epsilon^m\|\hat{\beta}_{0}-\hat{\beta}_{rls}\|_2$ will be returned when the approximation process is repeated 
$m$ times. Putting the above discussions together, we propose a two step iterative algorithm, i.e., Algorithm \ref{IDSP}.

\begin{algorithm}[htbp]
	\caption{Two Step Iterative  Algorithm for High-dimensional Ridge Regression} \label{IDSP}\small
	$\textbf{Input:}$ $y \in \mathbb{R}^{n}$, $A \in \mathbb{R}^{n\times p} $,  the regularized parameter $\lambda$, the iterative number $m$,  the sampling size $r$ and $r_0$, where $r_0\ll r\ll p$.\\
	$\textbf{Output:}$ the dual estimator 	$\hat{z}_m$ and the recovered solution $\hat{\beta}_m$.\\
	$\textbf{Step1:}$
	\begin{enumerate}
			\item initialize $S^{\ast}\in \mathbb{R}^{p\times r_0}$ to  an all-zeros  matrix.
		\item for   $i\in{1,\cdots,p}$ do 
		\begin{itemize}
			\item
			$  \pi^{COL}_i=\frac{\|A_i\|^2_2}{\sum_{i=1}^{p}\|A_i\|^2_2} $.
		\end{itemize}
		
		\item end
		\item for   $t\in{1,\cdots,r_0}$ do 
		\begin{itemize}
			\item 	pick $i_t\in [p]$ such that $\mathrm{Pr(\mathit{i_t=i)}}=\pi_i $.
			\item 
			$S^{\ast}_{i_tt}=\frac{1}{\sqrt{r_0\pi_{i_t}}}$.
		\end{itemize}

		\item end
		\item  compute $A^{\ast}=AS^{\ast}$.	
		\item compute $C=(A^{\ast}A^{\ast T}+\lambda I)^{-1}$.
	\end{enumerate}
	$\textbf{Step2:}$
	\begin{enumerate}
		\item set $\hat{z}_0=0$.  
		\item for   $t\in{1,\cdots,m}$ do
		\begin{itemize}
			
			\item
			$ \hat{\beta}_{t-1}=\frac{1}{\lambda}A^T\hat{z}_{t-1}$.	
			\item$ 
			b_t=y-A\hat{\beta}_{t-1}-\hat{z}_{t-1}$.
			\item$ 
			\widetilde{z}=\lambda C b_t$.
			\item
			$ \widetilde{\beta}=\frac{A^T\widetilde{z}}{\lambda}$. 
			\item compute
			$
			\pi^{NOPL}_i$ 
			by \eqref{sec4.1.0}.
			\item
			compute  $ \hat{w}_t $ by applying Algorithm \ref{DSP} with $y=b_t$ 
			and $\pi=\pi^{NOPL}_i$. 
			\item $\hat{z}_t=\hat{z}_{t-1}+\hat{w}_t$.
		\end{itemize}
		\item end  
		\item return
		$\hat{z}_m$ and $\hat{\beta}_m=\frac{A^T\hat{z}_m}{\lambda}$.	
	\end{enumerate}	
\end{algorithm}

\begin{remark}\label{rem4.3}
	The step 2 of Algorithm \ref{IDSP} can be viewed as a  variant of iterative Hessian sketch (IHS) \cite{wang2017sketching}. This  is because, at the $t$-th iteration, applying Algorithm \ref{DSP} for finding $\hat{w}_t$ is equivalent  to applying Hessian sketch to  the  residual between $z$ and  $\hat{z}_{t-1}$. 
	That is, at  the $t$-th iteration, we need to solve the following problem 
		\begin{align*}
			\mathop{\rm{min}}\limits_{w_t}\frac{1}{2\lambda}\|{S}^TA^Tw_t\|_2^2+\frac{1}{2}\|w_t\|_2^2-w_t^Tb_t,
		\end{align*} 
		where  
		$ w_t=z-z_{t-1} $
		and ${S}$ is constructed by $\pi^{NOPL}_i$. 
	
	In addition,  the step 2 of Algorithm \ref{IDSP} is also similar to   Algorithm 1 in \cite{chowdhury2018iterative}. However, the key ideas of the two methods are different. The latter can be regraded as  the preconditioned 
	Richardson iteration \cite[Chap.~2]{quarteroni2008numerical} for  solving $(AA^T+\lambda I)z=\lambda y$ with pre-conditioner $P^{-1}=(ASS^TA^T+\lambda I)^{-1}$ and the step-size being one. Moreover, its  random sampling matrix $S$ is fixed during the iteration.
\end{remark}

Next,  we show that the difference of $\hat{z}_{\ast}$ and  $\hat{z}_1$ still obeys asymptotically normal distribution, where $\hat{z}_1$ is  returned from Algorithm \ref{IDSP} with $m=1$. 	

\begin{theorem}\label{thm.5}
	Suppose that the conditions \eqref{rem.c2.2} and \eqref{rem.c2.1}   hold, and let 
	\begin{align}\label{thm5.0}
			N_1\|A_i\|_2\|{{y}}\|_2\leq	\mid \widetilde{\beta}_{(i)}\mid  \leq N_2\|A_i\|_2 \|{\mathit{y}}\|_2  \ {and} \ 
			{N}_3\|{A_i}\|_2\|{{y}}\|_2\leq	\mid {\hat{\beta}_{rls(i)}}	\mid \leq {N}_4\|{A_i}\|_2\|{{y}}\|_2, \quad i=1,\cdots,p, 
	\end{align}
	where $\widetilde{\beta}_{(i)}$ is	 as in Algorithm \ref{IDSP}, $0< N_1\leq N_2$, and $0<N_3\leq N_4$. 
	Then, as $ p	\to \infty $,  $ {r}	\to \infty $, $ r_0	\to \infty $,
	conditional on $\mathcal{F}_n$ and $\widetilde{\beta}$ in probability,
	the dual estimator $\hat{z}_1$ constructed by Algorithm \ref{IDSP} satisfies
	\begin{align}\label{thm5.1}
		V_{OPL}^{-1/2}(\hat{z}_1-\hat{z}^{\ast})\xrightarrow{L}N(0,I), 
	\end{align} 
	where  
	\begin{align*}
		V_{OPL}=(\frac{M_A}{p})^{-1}\frac{V_{cOPL}}{r}(\frac{M_A}{p})^{-1} 
	\end{align*}
	with 
	\begin{align*}
		V_{cOPL}=\sum_{i=1}^{p}\frac{A_iA^T_i\hat{z}^{\ast}\hat{z}^{\ast T}A_iA^T_i}{p^2\pi^{OPL}_i}=\sum_{i=1}^{p}	\mid \hat{\beta}_{rls(i)}	\mid \|A_i\|_2\sum_{i=1}^{p}\frac{A_iA^T_i\hat{z}^{\ast}\hat{z}^{\ast T}A_iA^T_i}{p^2	\mid \hat{\beta}_{rls(i)}	\mid \|A_i\|_2}.
	\end{align*}
\end{theorem}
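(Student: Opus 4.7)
My plan is to reduce the theorem to a conditional application of Theorem \ref{thm.1} with the nearly optimal probabilities $\pi^{NOPL}_i$, and then transfer the limiting variance from the data-dependent $V_{NOPL}$ (built with $\pi^{NOPL}_i$) to the target $V_{OPL}$ by exploiting the consistency of the pilot $\widetilde{\beta}$ as $r_0 \to \infty$.

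First I would unwind Step 2 of Algorithm \ref{IDSP} at $m=1$: since $\hat{z}_0 = 0$, one obtains $\hat{\beta}_0 = 0$ and $b_1 = y$, so $\hat{z}_1 = \hat{w}_1$, where $\hat{w}_1$ is exactly the output of Algorithm \ref{DSP} applied to $(y,A)$ with sampling probabilities $\pi^{NOPL}_i$. Conditional on $\widetilde{\beta}$ these probabilities are deterministic, so Theorem \ref{thm.1} is directly applicable provided its regularity conditions \eqref{cond.3} and \eqref{cond.1} hold for $\pi_i = \pi^{NOPL}_i$. To verify them I would translate the two-sided bound \eqref{thm5.0} on $|\widetilde{\beta}_{(i)}|$ into the envelope
\begin{equation*}
\frac{N_1}{N_2}\,\pi^{COL}_i \;\le\; \pi^{NOPL}_i \;\le\; \frac{N_2}{N_1}\,\pi^{COL}_i,
\end{equation*}
obtained by bounding the numerator and denominator of $\pi^{NOPL}_i$ separately. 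Substituting this envelope into the sums in \eqref{cond.3} and \eqref{cond.1} reduces them to products of $\sum_i \|A_i\|_2^2/p$ and $\sum_i A_iA_i^T/p$, each $O_p(1)$ by the assumed \eqref{rem.c2.2} and \eqref{rem.c2.1}. Theorem \ref{thm.1} then delivers, conditional on $\mathcal{F}_n$ and $\widetilde{\beta}$ in probability,
\begin{equation*}
V_{NOPL}^{-1/2}(\hat{z}_1-\hat{z}^{\ast}) \xrightarrow{L} N(0,I),
\end{equation*}
where $V_{NOPL}$ has the same form as $V$ in Theorem \ref{thm.1} but with $\pi_i=\pi^{NOPL}_i$.

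The remaining task is to replace $V_{NOPL}$ by $V_{OPL}$. Since $M_A$ does not depend on the probabilities, it suffices to show $V_{cNOPL}-V_{cOPL}=o_{p\mid\mathcal{F}_n}(1)$. Because $\widetilde{\beta}$ itself comes from Algorithm \ref{DSP} with $\pi^{COL}_i$ and pilot size $r_0$, a parallel application of Theorem \ref{thm.1} (whose conditions are trivially met for $\pi^{COL}_i$ under \eqref{rem.c2.2}) gives $\widetilde{z}-\hat{z}^{\ast} = o_{p\mid\mathcal{F}_n}(1)$ as $r_0\to\infty$, and hence $\widetilde{\beta}-\hat{\beta}_{rls}=o_{p\mid\mathcal{F}_n}(1)$ via $\widetilde{\beta}=A^T\widetilde{z}/\lambda$. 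Using this consistency together with the two-sided bounds in \eqref{thm5.0} for both $\widetilde{\beta}_{(i)}$ and $\hat{\beta}_{rls(i)}$, one can compare the $p$-term sum in $V_{cNOPL}$ against that in $V_{cOPL}$ term by term, obtaining the required convergence. A routine Slutsky step then converts $V_{NOPL}^{-1/2}$ into $V_{OPL}^{-1/2}$ and yields \eqref{thm5.1}.

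I expect the main obstacle to be the matrix-level convergence $V_{cNOPL}\to V_{cOPL}$ in the $p\to\infty$ regime while simultaneously conditioning on the random $\widetilde{\beta}$. The summands in $V_{cNOPL}$ couple all $p$ rows of $A$ with the entries of $\widetilde{\beta}$ through $\pi^{NOPL}_i$, so the comparison must be uniform in $i$ rather than term-by-term, and the conditional $o_p$ notation has to propagate through matrix products. The two-sided envelope in \eqref{thm5.0} is precisely what makes such a uniform comparison tractable: it simultaneously prevents any $\pi^{NOPL}_i$ from being pathologically small (so that $(\pi^{NOPL}_i)^{-1}$ sums stay controlled) and ensures that the relative error $\bigl(|\widetilde{\beta}_{(i)}|-|\hat{\beta}_{rls(i)}|\bigr)/\|A_i\|_2$ is uniformly small, which is the crucial ingredient for the final Slutsky step.
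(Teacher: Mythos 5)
Your proposal is correct and follows essentially the same route as the paper: apply the CLT machinery of Theorem \ref{thm.1} conditionally on $\widetilde{\beta}$ with $\pi^{NOPL}_i$ (verified via the envelope that \eqref{thm5.0} places between $\pi^{NOPL}_i$ and $\pi^{COL}_i$), then transfer the variance from $\widetilde{V}_c$ to $V_{cOPL}$ using the pilot consistency $\|\widetilde z-\hat z^{\ast}\|_2=O_{p\mid\mathcal F_n}(r_0^{-1/2})$ and Slutsky, which is exactly the paper's $\Phi_1\Phi_2-\Phi_3\Phi_4$ decomposition in Appendix E.
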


Now,  we provide an estimation error bound of our algorithm.

\begin{theorem}\label{thm.6}
	To the assumptions of   Theorem  \ref{thm.3}, 
	add that 
	\begin{align}\label{rem4.3.0.1}
			s_3\|V^i\|_2\|{{y}}\|_2\leq	\mid \widetilde{\beta}_{(i)}	\mid \leq s_4\|V^i\|_2\|{{y}}\|_2,  \quad i=1,\cdots,p,
	\end{align}
	where $ \widetilde{\beta}_{(i)}$ is as in Algorithm \ref{IDSP} and $0< s_3\leq s_4$,   the initial value $ \hat{z}_0 $ is set as $0$, and let $r\geq   \frac{32{s_4c_2\rho}}{3s_3c_1{\epsilon}^2}\mathrm{ln}(\frac{4\rho}{\delta})$ with   $\epsilon,\delta \in (0,1)$ 	and  $m<\frac{1}{\delta}$. Then, for $  \widetilde{S} $ constructed by $\pi^{NOPL}_i$ and any $\epsilon$, 	with the probability at least $1-m\delta $,
	$\hat{\beta}_m$ generated from Algorithm \ref{IDSP} satisfies 
	\begin{align}\label{thm6.1}
		\|\hat{\beta}_m-\hat{\beta}_{rls}\|_2\leq\epsilon^m\|\hat{\beta}_{rls}\|_2.
	\end{align}

\end{theorem}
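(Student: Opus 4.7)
The plan is to establish \eqref{thm6.1} by induction on the iteration counter $t \in \{1,\ldots,m\}$, with Theorem \ref{thm.3} playing the role of a one-step contraction bound. The central observation is that at iteration $t$, the inner call to Algorithm \ref{DSP} is really being applied to the \emph{residual} problem in which $y$ is replaced by $b_t$ and $\pi^{OPL}_i$ is replaced by $\pi^{NOPL}_i$. Once the residual problem's true dual solution is identified as $\hat{z}^{\ast}-\hat{z}_{t-1}$, one application of Theorem \ref{thm.3}'s argument will yield a contraction of the primal error by a factor $\epsilon$ at each step, and $m$ rounds of composition together with a union bound will give the stated bound.

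First I would verify the residual identity $\lambda(AA^T+\lambda I)^{-1}b_t = \hat{z}^{\ast}-\hat{z}_{t-1}$. Substituting $\hat{\beta}_{t-1} = A^T\hat{z}_{t-1}/\lambda$ into $b_t = y - A\hat{\beta}_{t-1} - \hat{z}_{t-1}$ gives $\lambda b_t = \lambda y - (AA^T+\lambda I)\hat{z}_{t-1}$, from which the identity is immediate. Hence the true primal of the residual problem is $A^T(\hat{z}^{\ast}-\hat{z}_{t-1})/\lambda = \hat{\beta}_{rls}-\hat{\beta}_{t-1}$. Since Algorithm \ref{IDSP} sets $\hat{z}_t = \hat{z}_{t-1}+\hat{w}_t$, the recovered primal error becomes
\begin{equation*}
\hat{\beta}_t - \hat{\beta}_{rls} \;=\; \frac{A^T\bigl(\hat{w}_t - (\hat{z}^{\ast}-\hat{z}_{t-1})\bigr)}{\lambda},
\end{equation*}
so a Theorem \ref{thm.3}-style bound on the residual sketch translates directly into $\|\hat{\beta}_t-\hat{\beta}_{rls}\|_2 \leq \epsilon\|\hat{\beta}_{rls}-\hat{\beta}_{t-1}\|_2$.

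To re-use Theorem \ref{thm.3} with $\pi^{NOPL}_i$ in place of $\pi^{OPL}_i$, I would show that these probabilities inherit the same lower bound, up to constants. Combining \eqref{lem.5.1} and \eqref{rem4.3.0.1} with $\sum_i\|V^i\|_2^2 = \rho$,
\begin{equation*}
\pi^{NOPL}_i \;=\; \frac{|\widetilde{\beta}_{(i)}|\|A_i\|_2}{\sum_j|\widetilde{\beta}_{(j)}|\|A_j\|_2} \;\geq\; \frac{s_3\, c_1\,\|V^i\|_2^2}{s_4\, c_2\,\rho},
\end{equation*}
which is the analogue of what $\pi^{OPL}_i$ delivers under the hypotheses of Theorem \ref{thm.3}. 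Plugging this lower bound into the matrix-concentration step underlying Theorem \ref{thm.3} and using $r \geq \frac{32 s_4 c_2 \rho}{3 s_3 c_1 \epsilon^2}\ln(4\rho/\delta)$ then yields the one-step contraction with conditional probability at least $1-\delta$.

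The induction closes by using the initialization $\hat{z}_0 = 0$ (so $\hat{\beta}_0 = 0$ and $\|\hat{\beta}_0-\hat{\beta}_{rls}\|_2 = \|\hat{\beta}_{rls}\|_2$) and chaining the contraction to obtain $\|\hat{\beta}_m-\hat{\beta}_{rls}\|_2 \leq \epsilon^m\|\hat{\beta}_{rls}\|_2$; a union bound over the $m$ independent sketches (meaningful since $m<1/\delta$) controls the total failure probability by $m\delta$. The main obstacle I expect is the faithful re-use of Theorem \ref{thm.3}'s proof: one must check that its matrix-concentration step depends on the probabilities only through a lower bound of the form $\pi_i \gtrsim \|V^i\|_2^2/\rho$, so that $\pi^{NOPL}_i$ is as good as $\pi^{OPL}_i$, and that the sketches $\hat{w}_1,\ldots,\hat{w}_m$ are drawn independently across iterations so that the union bound is legitimate. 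A secondary subtlety is that \eqref{rem4.3.0.1} must be interpreted as a standing hypothesis applying to the $\widetilde{\beta}$ produced at every iteration by the coarse preconditioner $C$ built in Step 1, even though the target $\hat{\beta}_{rls}-\hat{\beta}_{t-1}$ that $\widetilde{\beta}$ tracks shrinks with $t$.
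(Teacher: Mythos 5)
Your proposal is correct and follows essentially the same route as the paper: the paper's Appendix F likewise identifies $w^{\ast}_t=\hat{z}^{\ast}-\hat{z}_{t-1}$ as the exact dual solution of the residual problem with data $b_t$, proves a per-iteration contraction $\|\hat{\bigtriangleup}_t-\bigtriangleup^{\ast}_t\|_2\leq\epsilon\|\bigtriangleup^{\ast}_t\|_2$ via a Lemma 5/Theorem 3 analogue in which $\pi^{NOPL}_i$ enters only through the lower bound $\|V^i\|_2^2/\pi^{NOPL}_i\leq s_4c_2\rho/(s_3c_1)$ (exactly your $\pi^{NOPL}_i\gtrsim\|V^i\|_2^2/\rho$ observation), and then chains $m$ steps with a union bound and the initialization $\hat{z}_0=0$. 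No gaps.
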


\begin{remark}\label{rem.thm6.1}
	The bound \eqref{thm6.1} can be used to determine the iteration number. Specifically,
	it is enough to do $log_{\epsilon}{\iota}$ iterations to get  $\|\hat{\beta}_m-\hat{\beta}_{rls}\|_2\leq\iota\|\hat{\beta}_{rls}\|_2$.
\end{remark}

\section{Numerical Experiments}
\label{sec.5}

In this section, we provide the numerical results of experiments with simulation data and real data. All experiments are implemented on a laptop  running MATLAB software with 16 GB random-access memory(RAM).

\subsection{Simulation Data--Example 1} \label{Sec.5.1.1}

In this example, the simulation data  is generated as done in \cite{slawski2018principal}. Specifically, we first produce an $n$-by-$p$ matrix  $B$   randomly, whose entries are drawn i.i.d. from the $N(0,1)$ distribution and SVD  is denoted as $U_B\Sigma_BV^T_B$ with $U_B\in \mathbb{R}^{n\times n}$, $\Sigma_B\in \mathbb{R}^{n\times n}$ and  $V_B\in \mathbb{R}^{p\times n}$. Then, we get $ A$ by  replacing $\Sigma_B$ with $\Sigma_0$, i.e, $ A=U_B\Sigma_0V^T_B $,  where  $\Sigma_0$
is a diagonal matrix with polynomial decay diagonal elements  $\sigma_j (j=1,\cdots,n)$, namely, 	
$\sigma_j\varpropto	9 \times j^{-8}$.
Furthermore,  we construct the response vector $y$ by 
$	y=A\beta+\varsigma$,
where $\beta \in \mathbb{R}^{p}$ and  $\varsigma \in \mathbb{R}^{n}$ have i.i.d. $N(0,1)$ entries.

In the specific experiments, we set $n=500$ and  $p=20000$. The description on parameters of the experiments is summarized in Table \ref{Tab5.1}, the explanation on six sampling methods is given in Table \ref{Tab5.2}, and the numerical results on  accuracy, i.e., the estimation error $\frac{\|\hat{\beta}_m-\hat{\beta}_{rls}\|_2}{\|\hat{\beta}_{rls}\|_2}$ and the prediction error $\frac{\|A\hat{\beta}_m-A\hat{\beta}_{rls}\|_2}{\|A\hat{\beta}_{rls}\|_2}$, 
and CPU time are  shown in Figures 1-4. Note that all the error results are on log-scale,  all the numerical results  are based on 50 replications of  Algorithm 
\ref{IDSP}, and it suffices to run the  step 2 in Algorithm \ref{IDSP} if $\pi^{OPL}_{i}$, $\pi^{LEV}_{i}$, $\pi^{RLEV}_{i}$, $\pi^{UNI}_{i}$ and $\pi^{COL}_{i}$ are used to generate $S$.   In addition, when $\pi^{OPL}_{i}$ is  employed, $C$ in Algorithm \ref{IDSP} should be   $(AA^T+\lambda I)^{-1}$, and when $\pi^{LEV}_{i}$, $\pi^{RLEV}_{i}$, $\pi^{UNI}_{i}$ and $\pi^{COL}_{i}$ are  adopted,  the lines 5--7  of the step 2  of Algorithm \ref{IDSP}  can  be omitted. 

\begin{table}[h]
	
		\scriptsize		
	\centering	
			\caption{Description of two  experiments for example 1.}%
\begin{threeparttable}
	
	\begin{tabular}{cccccccccc}
		
		\hline\noalign{\smallskip} 

					Kinds&	Comparison &  $r$ &  $\lambda$  &  $m$  & $r_0$  & Results  \\
	\noalign{\smallskip}\hline\noalign{\smallskip}
	\multirow{3}{*}{1 }&	 \multirow{3}{*}{six methods}				
   &  $500$ to $5000$ & $10$ &$3$ & 	$ 100 $ (NOPL)	&   Figs. 1-3(a)		
					\\		
					~&	~ &  $1000$  & $1$ to $50$ &$3$ & 	$ 100 $ (NOPL)	&   Figs. 1-3(b)			
					\\	
					~	&	~ & $1000$  & $10$ &$1$ to $15$	& 	$ 100 $ (NOPL)&   Figs. 1-3(c)
					\\	
					\hline\noalign{\smallskip}	
					2 &	 OPL and NOPL  &  $2000$ & $10$ &$3$ & $ 100 $ to $2000$ (NOPL)	&   Fig. 4		
					\\	
						\noalign{\smallskip}\hline		
	\end{tabular}

		\end{threeparttable}
	\label{Tab5.1}
\end{table}

\begin{table}[h]
	\scriptsize		
\centering	
			\caption{Explanation  of sampling methods with different probabilities.}%
\begin{threeparttable}
	
	\begin{tabular}{cccccccccc}
		
		\hline\noalign{\smallskip} 
		
					Method&  $\pi_{i}$&  Expression   \\
					\noalign{\smallskip}\hline\noalign{\smallskip}
					OPL &  	  $ \pi^{OPL}_i $& 	    ${	\mid \hat{\beta}_{rls(i)}	\mid \|A_i\|_2}/{\sum_{i=1}^{p}	\mid \hat{\beta}_{rls(i)}	\mid \|A_i\|_2}$	
					\\		
					NOPL &     	$\pi^{NOPL}_i$&      	$ {	\mid \widetilde{\beta}_{(i)}	\mid \|A_i\|_2}/{\sum_{i=1}^{p}	\mid \widetilde{\beta}_{(i)}	\mid \|A_i\|_2} $	
					\\		
					LEV &     	$\pi^{LEV}_i$ &     	${\|V^i\|^2_2}/{\sum_{i=1}^{p}\|V^i\|^2_2}$	
					\\		
					RLEV & 	    	$\pi^{RLEV}_i$&      	${\|X^i\|^2_2}/{\sum_{i=1}^{p}\|X^i\|^2_2}$	
					\\	
					COL & 	$\pi^{COL}_i$& 	  	${\|A_i\|^2_2}/{\sum_{i=1}^{p}\|A_i\|^2_2}$	
					\\	
					UNI &     $\pi^{UNI}_i$&  	     ${1}/{p}$	
					\\		
						\noalign{\smallskip}\hline
\end{tabular}

\end{threeparttable}
\label{Tab5.2}
\end{table}

In the first  experiment, we aim to show that the estimators established by OPL and NOPL have better performance. The corresponding numerical results are presented in Figures 1-3. 
From these figures, it is obvious to find that OPL and NOPL outperform other methods on estimation and prediction accuracy no matter what $r$, $\lambda$ and $m$ are, but they need more  computing time than COL and UNI. However, the improvement in accuracy is more than the sacrifice of calculation cost, and fortunately, OPL and NOPL are   cheaper than LEV and RLEV. What is more, we can observe that NOPL has extremely similar accuracy to OPL, and the former consumes less running time.  In addition, in most cases,  the  errors of all the methods decrease when  $r$, $\lambda$ and  $m$ increase.

\begin{figure}[H]
	\centering
	\includegraphics[width=0.9\textwidth]{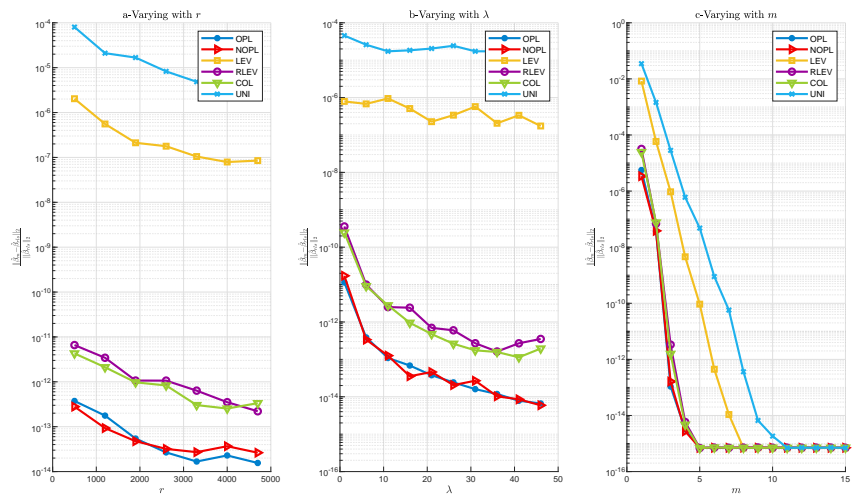}
	\caption{Comparison of  estimation errors using different methods for example 1.} 
	\label{fig5.1}
\end{figure}

\begin{figure}[H]
	\centering
	\includegraphics[width=0.9\textwidth]{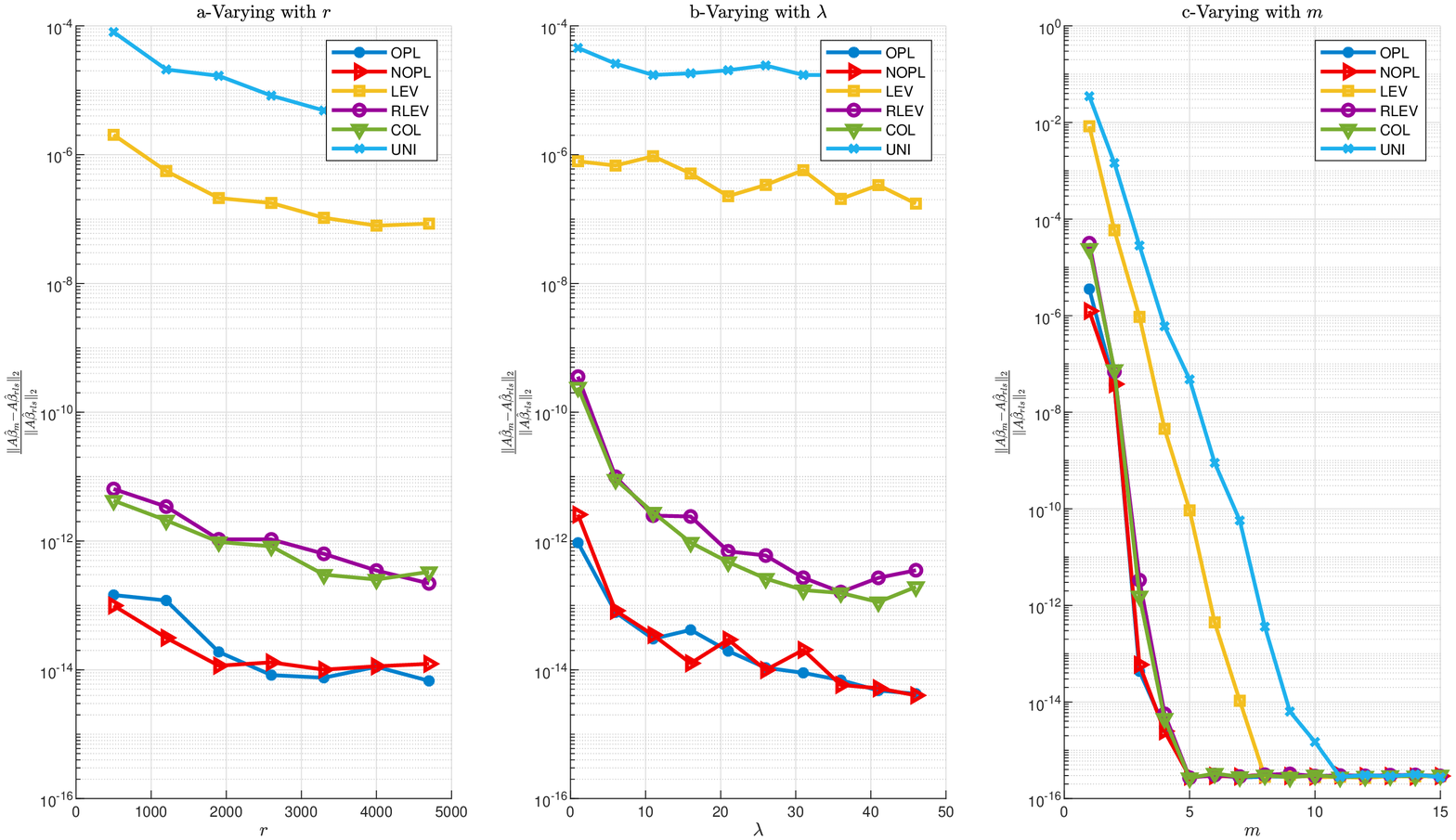}
	\caption{Comparison of  prediction errors using different methods for example 1.} 
	\label{fig5.2}
\end{figure}
\begin{figure}[H]
	\centering
	\includegraphics[width=0.9\textwidth]{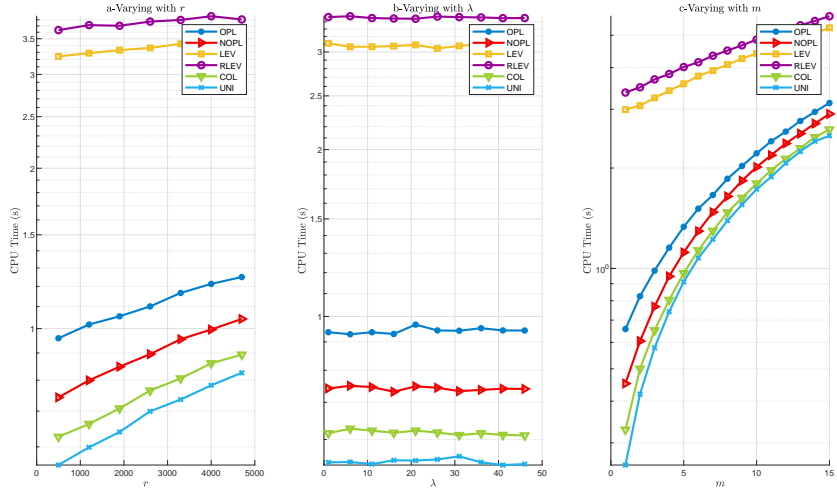} 
	\caption{Comparison of CPU time using different methods for example 1.} 
	\label{fig5.3}
\end{figure}

For the second  experiment, we compare the methods OPL and NOPL with different $r_0$. According to the  numerical results displayed in Figure 4,  it is evident to conclude that for different $r_0$, NOPL is able to  achieve significantly similar accuracy to OPL but spends less  computational cost. 

\begin{figure}[H]
	\centering
	\includegraphics[width=0.9\textwidth]{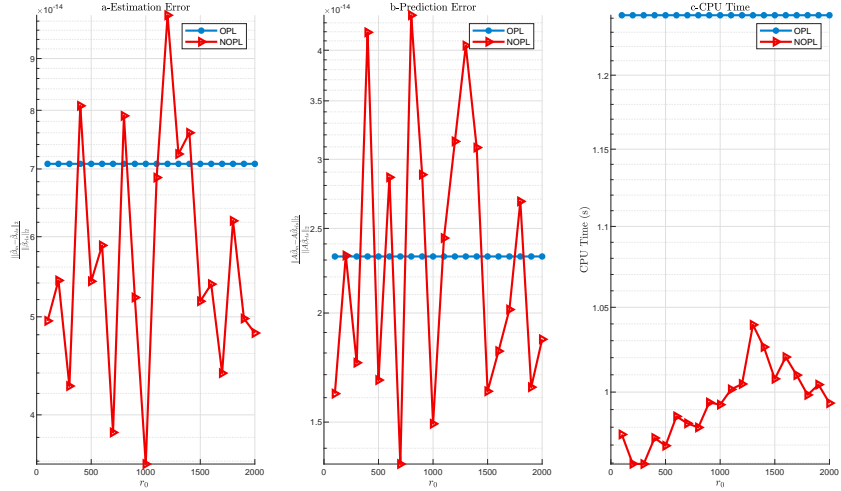}
	\caption{Comparison of OPL and NOPL with different $r_0$ for example 1.} 
	\label{fig5.4}
\end{figure}

\subsection{Simulation Data--Example 2} \label{Sec.5.1.2}

For this example, we produce the simulation data as done in \cite{chowdhury2018iterative}. Specifically, we construct an $n$-by-$p$ design  matrix 
$A=PDQ^T + \alpha M$,
where   $P \in \mathbb{R}^{n\times n}$ is a random matrix with i.i.d. $N(0,1)$ entries,   $D \in \mathbb{R}^{n\times n}$ is a  diagonal matrix with  diagonal entries $D_{ii}=(1-\frac{i-1}{p})^i ( i=1,\cdots n)$,  $Q \in \mathbb{R}^{p\times n}$ is a random column orthonormal matrix, 
$M\in \mathbb{R}^{n\times p}$ is a noise matrix with i.i.d. $N(0,1)$ entries, and $\alpha>0$ is a parameter used to  balance 
$PDQ^T$ and
$M$. 
In addition, the response vector $y\in \mathbb{R}^{n}$ is generated according to $y=A\beta + \gamma\varsigma$, where $\beta \in \mathbb{R}^{p}$ and  $\varsigma \in \mathbb{R}^{p}$ are constructed by i.i.d. $N(0,1)$ entries. In the specific experiments, we set $n=500$, $p=20000$, $\alpha=0.0001$ and $\gamma=0.5$, and repeat the  implementations in Section \ref{Sec.5.1.1} 
with different $r$, $r_0$, $\lambda$ and $m$ shown in Table \ref{Tab5.5}.

\begin{table}[h]
	\scriptsize		
	\centering
			\caption{Description of two experiments  for example 2.}%
		
				\begin{threeparttable}
				
				\begin{tabular}{cccccccccc}
					
					\hline\noalign{\smallskip} 
					
					Kinds&	Comparison&  $r$ &  $\lambda$  &  $m$  & $r_0$  & Results  \\
					
					\noalign{\smallskip}\hline\noalign{\smallskip}
				\multirow{3}{*}{1 }&	 \multirow{3}{*}{six methods} 
				 & $3000$ to $10000$ & $20$ &$15$ & $ 2000 $ (NOPL)	&  Figs. 5-7(a)		
					\\	
					~	& ~&  $5000$  & $1$ to $200$ &$15$ & $ 2000 $ (NOPL)&  Figs. 5-7(b)			
					\\	
					~& ~& $5000$  & $20$ &$1$ to $30$	& $ 2000 $ (NOPL) &  Figs. 5-7(c)
					\\	
						\hline\noalign{\smallskip}	
					
					2 &	OPL and NOPL   &  $5000$ & $20$ &$15$ & $ 500 $ to $20000$ (NOPL)&   Fig. 8		
					\\

					\noalign{\smallskip}\hline

				\end{tabular}
		\end{threeparttable}
	
	\label{Tab5.5}
\end{table}

From the  numerical results  presented in Figures 5-8, we can gain the similar observations to the ones in Section \ref{Sec.5.1.1}. 
That is,  taking different $r$,  $\lambda$ and $m$,
OPL and NOPL always  perform better  than other methods on accuracy, however, need more CPU time compared with COL and UNI. 
And, OPL and NOPL still show better computational efficiency than LEV and RLEV.  Besides, when setting a proper $r_0$ or a large  $\lambda$,  NOPL  
and OPL have similar accuracy but the former needs less running time. Unfortunately,  when $r_0$ is very large, NOPL  loses its advantage in CPU time. This is because in this case the computational cost of  $\widetilde{\beta}$ may not be less than that of $\hat{\beta}_{rls}$.

\begin{figure}[H]
	\centering
	\includegraphics[width=0.9\textwidth]{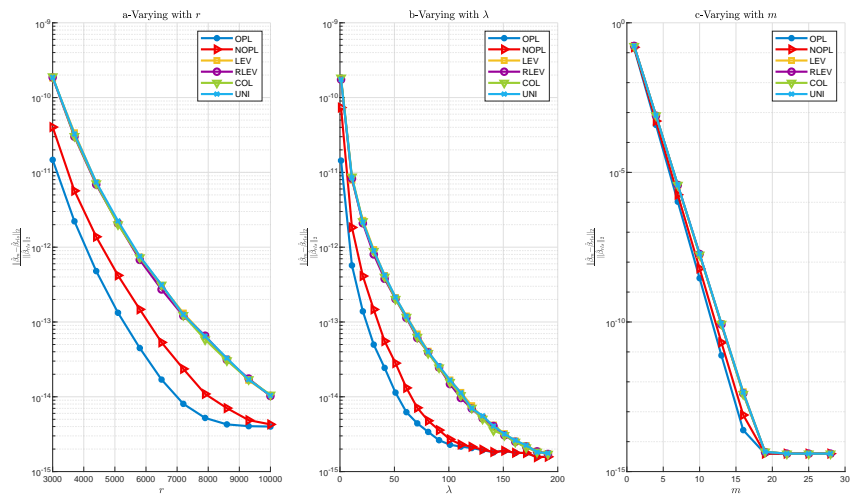} 
	\caption{Comparison of  estimation errors using different methods for example 2.} 
	\label{fig5.5}
\end{figure}

\begin{figure}[H]
	\centering
	\includegraphics[width=0.9\textwidth]{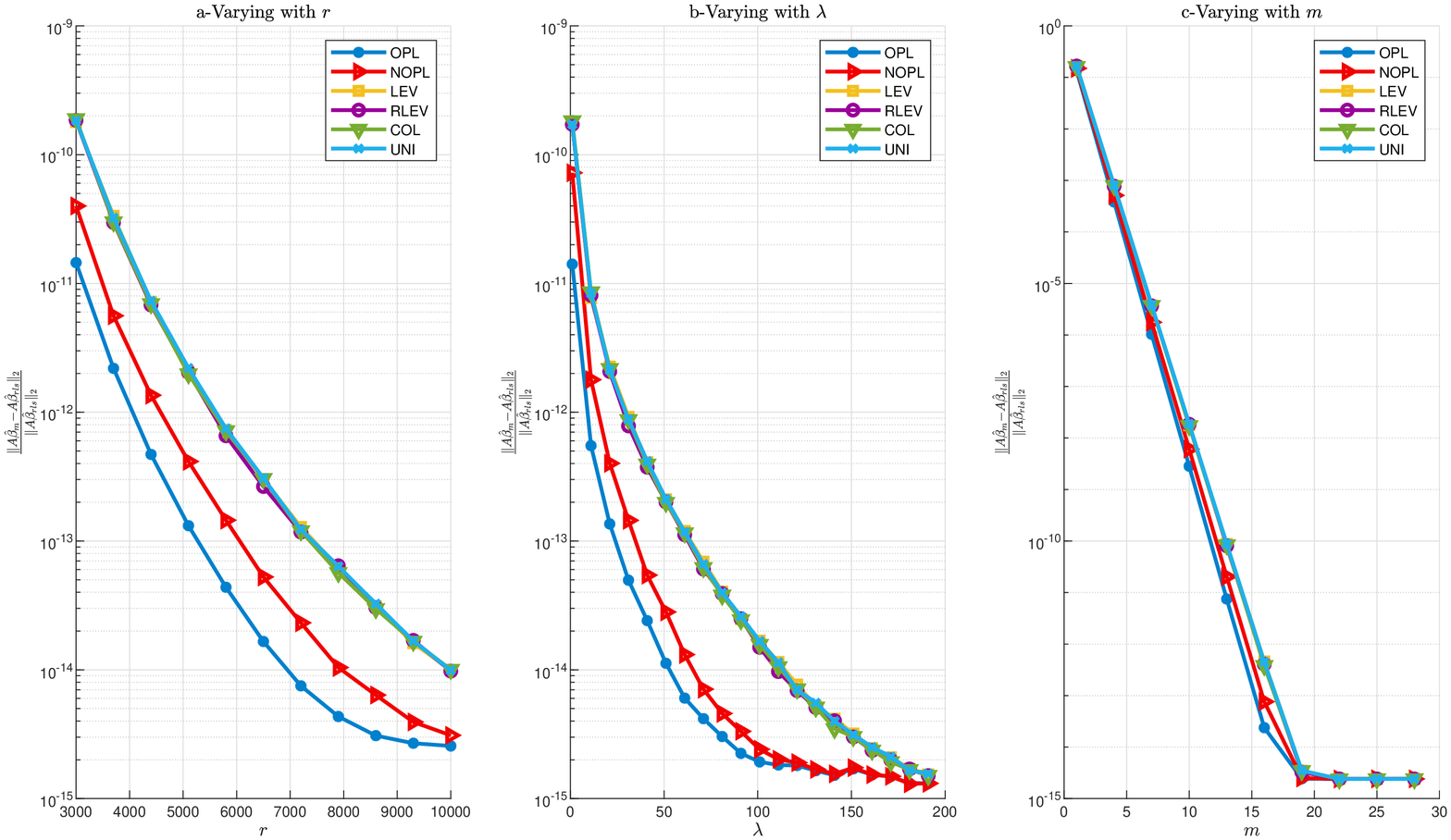} 
	\caption{Comparison of  prediction errors using different methods for example 2.} 
	\label{fig5.6}
\end{figure}
\begin{figure}[H]
	\centering
	\includegraphics[width=0.9\textwidth]{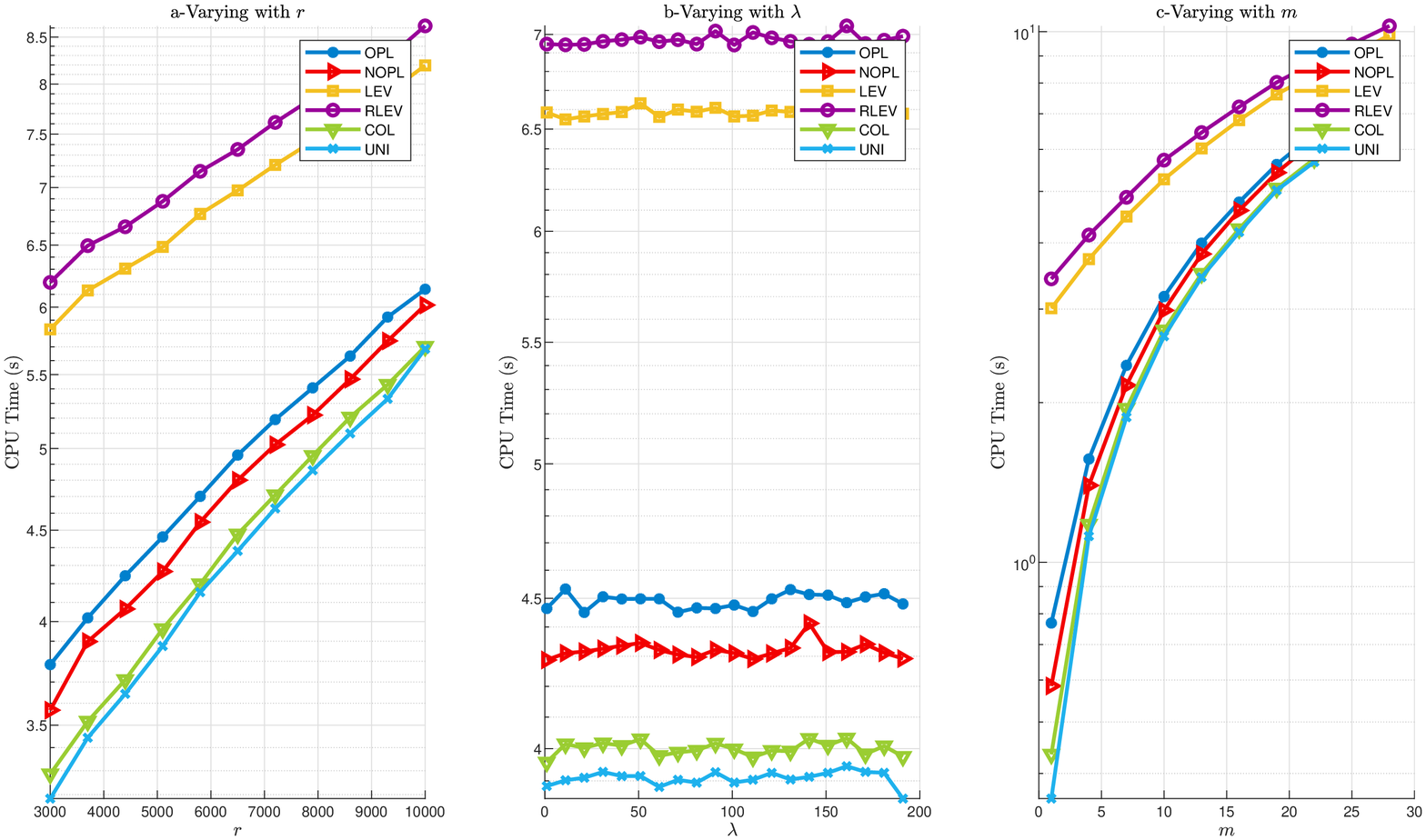}
	\caption{Comparison of CPU Time using different methods for example 2.} 
	\label{fig5.7}
\end{figure}

\begin{figure}[H]
	\centering
	\includegraphics[width=0.9\textwidth]{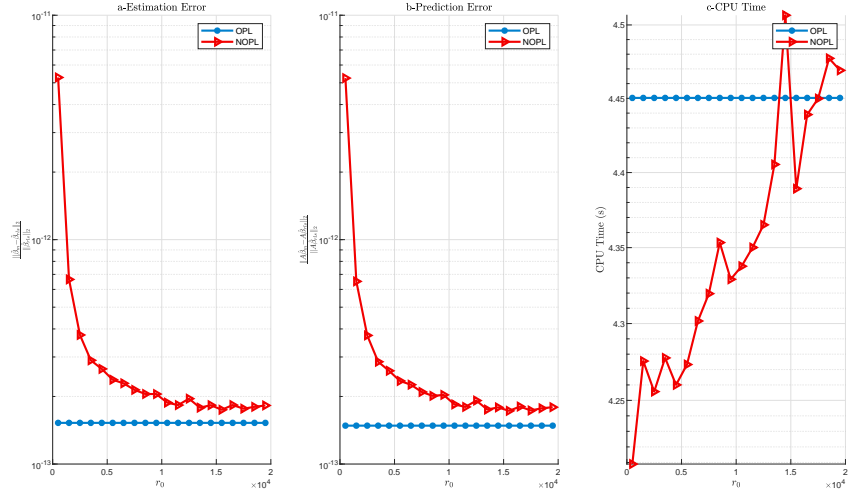} 
	\caption{Comparison of OPL and NOPL with different $r_0$ for example 2.} 
	\label{fig5.8}
\end{figure}


\subsection{Real Data--Gene Expression Cancer RNA-Seq Data Set}\label{Sec.5.2.1}

The data set is from the UCI machine learning repository, 
which can be found in \url{http://archive.ics.uci.edu/ml/datasets/gene+expression+cancer+RNA-Seq}.
Here, we only take the first 400 samples with 20531 real-valued features, and  centralize the design matrix. The response vector consists of 1, 2, 3, 4 and 5 labels, which represent five different types of tumors, i.e., PRAD, LUAD, BRCA, KIRC and COAD. We also centralize it. 

We  repeat  the experiments  in Sections \ref{Sec.5.1.1} and \ref{Sec.5.1.2} with different $r$, $r_0$, $\lambda$ and $m$. 
More details  are put in Table \ref{Tab5.3}.

\begin{table}[h]
	\scriptsize		
\centering
			\caption{Description of two experiments using Gene Expression Cancer RNA-Seq data set.}%
			\begin{threeparttable}
			
			\begin{tabular}{cccccccccc}
				
				\hline\noalign{\smallskip} 
					Kinds&	Comparison&  $r$ &  $\lambda$  &  $m$  & $r_0$  & Results  \\
					
						\noalign{\smallskip}\hline\noalign{\smallskip}
					\multirow{3}{*}{1 }&	 \multirow{3}{*}{six methods}   & $5000$ to $10000$ & $10$ &$16$ & $ 5000 $ (NOPL)	&  Figs. 9-11(a)		
					\\		
					~	& ~&  $8000$  & $1$ to $50$ &$16$ & $ 5000 $ (NOPL)&  Figs. 9-11(b)			
					\\	
					~& ~& $8000$  & $10$ &$1$ to $26$	& $ 5000 $ (NOPL) & Figs. 9-11(c)
					\\	
					
				\hline\noalign{\smallskip}	
					2&	OPL and NOPL   &  $8000$ & $10$ &$16$ & $ 1000 $ to $20531$ (NOPL)&   Fig. 12		
					\\	
				\noalign{\smallskip}\hline

		\end{tabular}
	
\end{threeparttable}

		\label{Tab5.3}
\end{table}

The  numerical results are displayed in Figures 9-12, and the conclusions summarized from these figures  are akin to the ones found in Sections \ref{Sec.5.1.1} and \ref{Sec.5.1.2}. 
Namely, compared with UNI and COL,  the accuracy of OPL and NOPL is dramatic improved at the cost of slightly  computational efficiency, and OPL performs better than LEV and RLEV on accuracy and  computing time. Although NOPL is only a little better than LEV and RLEV on accuracy,  it owns greatly advantage of CPU time. When taking a proper $r_0$, NOPL can be a well approximation of OPL but consumes less computing time. 
However, when $r_0$ is very large, NOPL will lose its superiority in computational cost.      In addition, for this real data, the choice of $\lambda$  has little influence on accuracy.

\begin{figure}[H]
	\centering
	\includegraphics [width=0.9\textwidth]{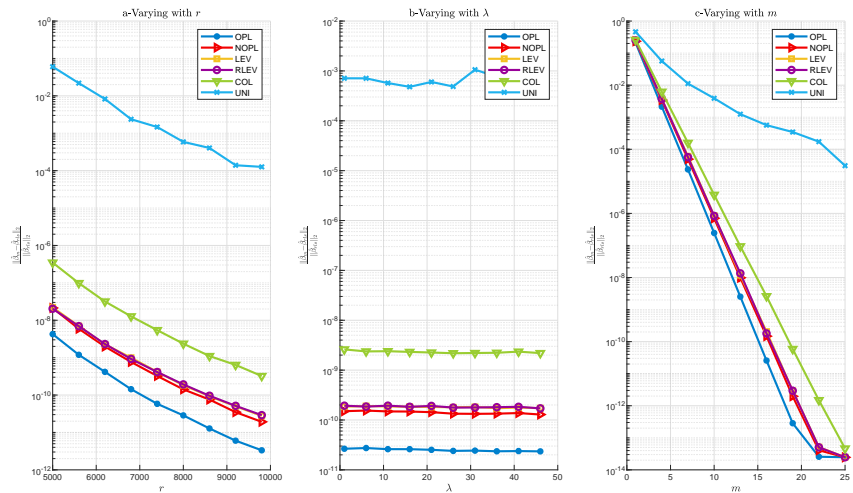}  
	\caption{Comparison of  estimation errors for different methods for  Gene Expression Cancer RNA-Seq data set.} 
	\label{fig5.9}
\end{figure}

\begin{figure}[H]
	\centering
	\includegraphics[width=0.9\textwidth]{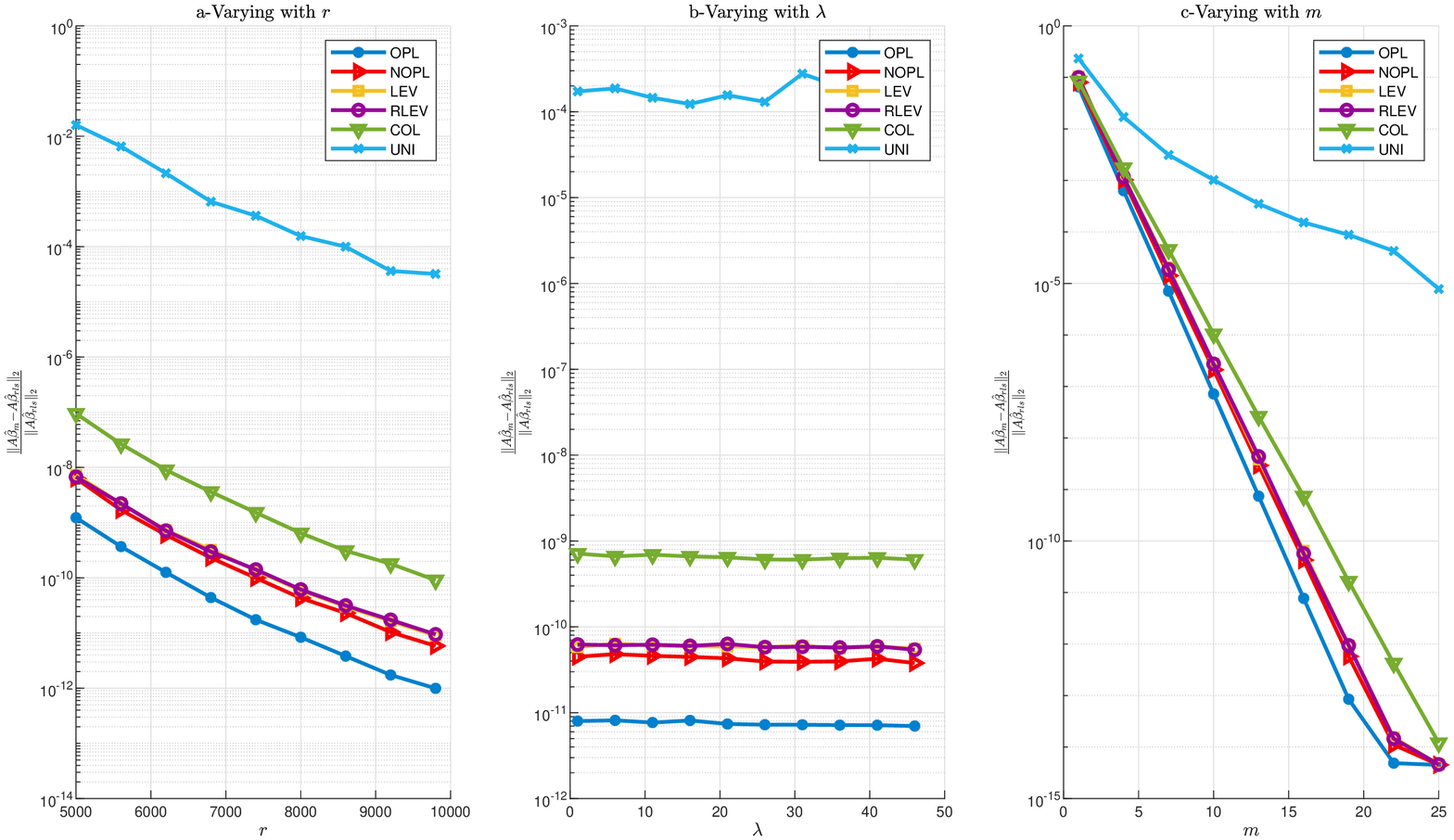} 
	\caption{Comparison of  prediction errors  for different methods for  Gene Expression Cancer RNA-Seq data set.} 
	\label{fig5.10}
\end{figure}
\begin{figure}[H]
	\centering
	\includegraphics[width=0.9\textwidth]{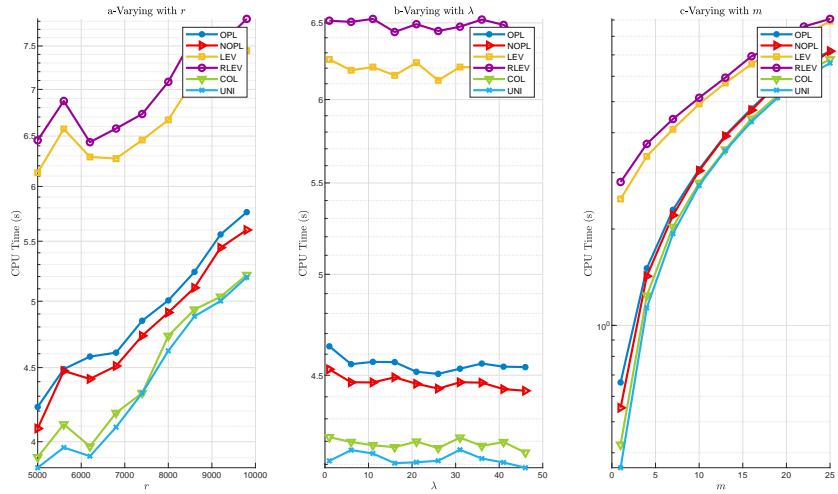}
	\caption{Comparison of CPU Time for different methods for  Gene Expression Cancer RNA-Seq data set.} 
	\label{fig5.11}
\end{figure}

\begin{figure}[H]
	\centering
	\includegraphics [width=0.9\textwidth]{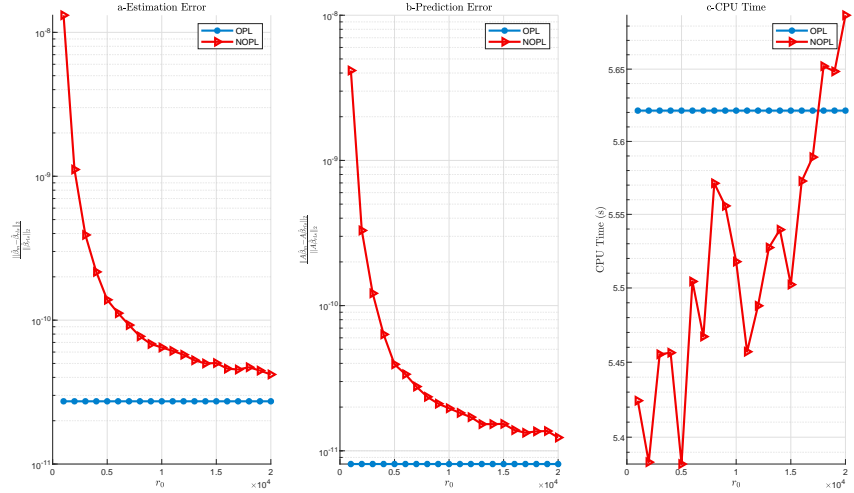} 
	\caption{Comparison of OPL and NOPL with different $r_0$ and Gene Expression Cancer RNA-Seq data set.} 
	\label{fig5.12}
\end{figure}

\subsection{Real Data--Gisette Data Set}\label{Sec.5.2.2}

This data set is also  from the UCI machine learning repository, 
which can be found in \url{http://archive.ics.uci.edu/ml/datasets/Gisette}.
In our experiments, the first 100 samples of training set with 5000 real-valued features are taken, and  the response vector is made up with $\pm1$ labels. Also, we centralize the response vector and design matrix prior to analysis. 

As done  in  Section  \ref{Sec.5.2.1},  
we also repeat  the experiments  in Sections \ref{Sec.5.1.1} and \ref{Sec.5.1.2} 
with different $r$, $r_0$, $\lambda$ and $m$. The detailed description can be found in Table \ref{Tab5.4}.

\begin{table}[h]
	\scriptsize		
\centering
			\caption{Description of two experiments using Gisette data set.}%
		\begin{threeparttable}
		
		\begin{tabular}{cccccccccc}
			
			\hline\noalign{\smallskip} 
			
					Kinds&	Comparison&  $r$ &  $\lambda$  &  $m$  & $r_0$  & Results  \\
					
					\noalign{\smallskip}\hline\noalign{\smallskip}
				\multirow{3}{*}{1 }&	 \multirow{3}{*}{six methods}  & $1000$ to $5000$ & $10$ &$10$ & $ 900 $ (NOPL)	&  Figs. 13-15(a)		
					\\		
					~	& ~&  $2000$  & $1$ to $50$ &$10$ & $ 900 $ (NOPL)&  Figs. 13-15(b)			
					\\	
					~	&~ & $2000$  & $10$ &$1$ to $26$	& $ 900 $ (NOPL) & Figs. 13-15(c)
					\\	
					
				\hline\noalign{\smallskip}
					2 &	OPL and NOPL   &  $2000$ & $10$ &$10$ & $ 500 $ to $5000$ (NOPL)&   Fig. 16		
					\\			
					
					\noalign{\smallskip}\hline

			\end{tabular}
			\end{threeparttable}
\label{Tab5.4}
\end{table}

The  numerical results are shown in Figures 13-16, and are almost identical with the observations in Section \ref{Sec.5.2.1}. 
To be more specific,   whatever the values of $r$,  $\lambda$ and $m$ are,  for accuracy, OPL and  NOPL always outperform other methods. Similarly, as for CPU time,   OPL and NOPL  are  slightly inferior to UNI and COL, but are greatly superior to LEV and RLEV. Only when $r_0$ is not particularly large,  NOPL  has good performance  on both accuracy and computing time, and  qualifies as a well alternative to OPL. Besides, the change of $\lambda$ also has little effect  on accuracy.

\begin{figure}[H]
	\centering
	\includegraphics[width=0.9\textwidth]{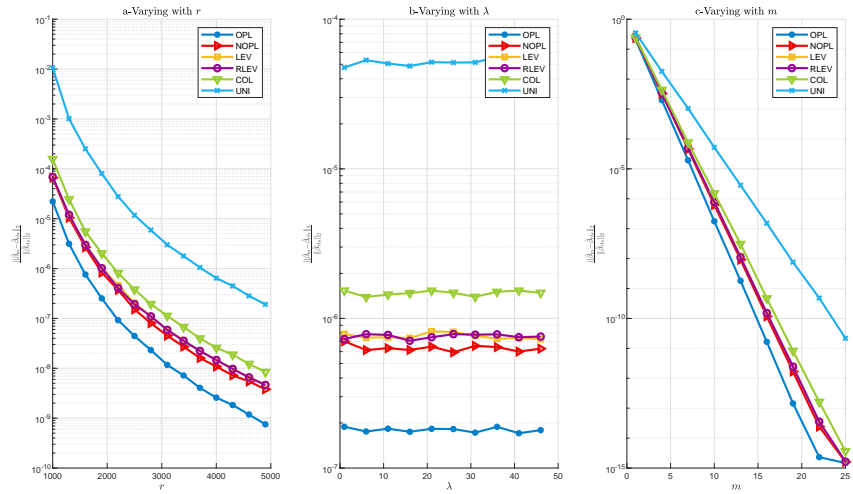}
	\caption{Comparison of  estimation errors for different methods for  Gisette data set.} 
	\label{fig5.13}
\end{figure}

\begin{figure}[H]
	\centering
	\includegraphics[width=0.9\textwidth]{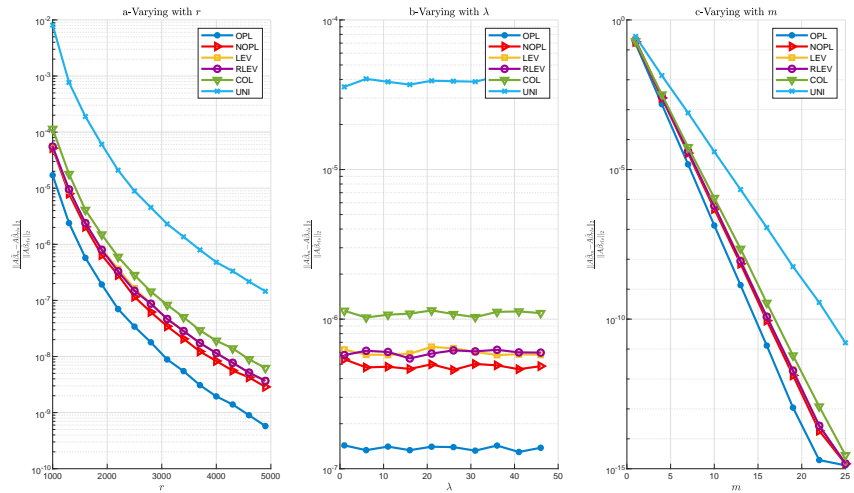} 
	\caption{Comparison of  prediction errors  for different methods for Gisette data set.} 
	\label{fig5.14}
\end{figure}
\begin{figure}[H]
	\centering
	\includegraphics[width=0.9\textwidth]{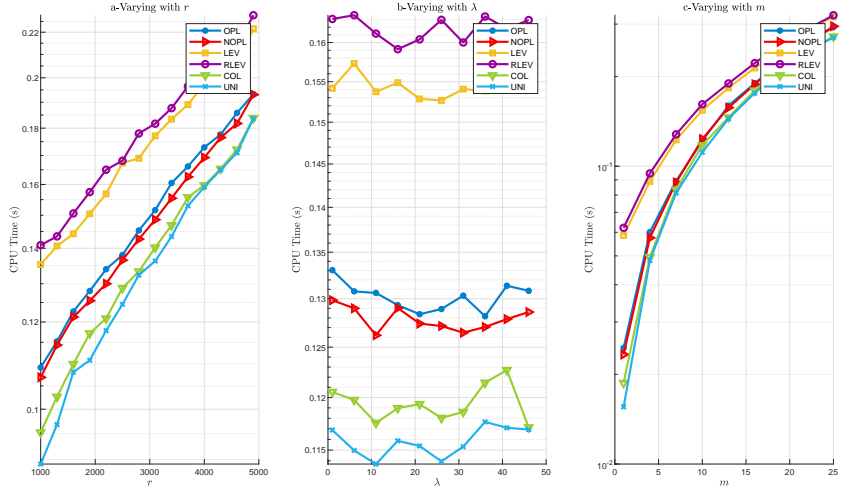} 
	\caption{Comparison of CPU Time for different methods for  Gisette data set.} 
	\label{fig5.15}
\end{figure}

\begin{figure}[H]
	\centering
	\includegraphics[width=0.9\textwidth]{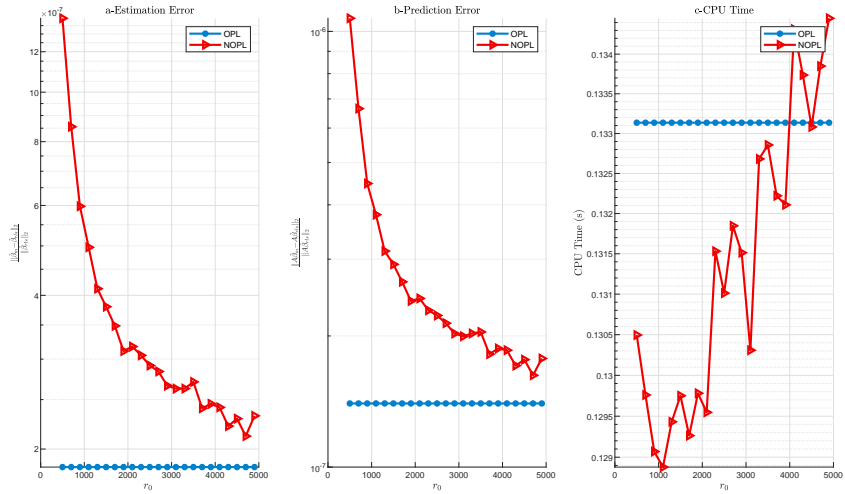} 
	\caption{Comparison of OPL and NOPL with different $r_0$ and Gisette data set.} 
	\label{fig5.16}
\end{figure}



\begin{appendices}
	
	

	
	\section{Proof of Theorem \ref{thm.1}}
	\label{app.A}

	We start by  establishing two lemmas.

	\begin{lemma}\label{lem.1}
		Assuming that  the conditions 
		\eqref{cond.3}, \eqref{cond.1} and  \eqref{rem.c2.2} are satisfied,   we have 	
		\begin{align}\label{lem1.1}
			\sum_{i=1}^{p}\pi_i\|\frac{e_i}{p}\|^{3}_2=O_p(1),
		\end{align}
		where $e_i=(\frac{A_iA^T_i}{\pi_i}+\lambda I)\hat{z}^{\ast}-\widetilde{y}$ with $\widetilde{y}=\lambda y$ and $\hat{z}^{\ast}$ being as in \eqref{rls.2}. 
	\end{lemma}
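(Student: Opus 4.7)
The plan is to eliminate the constant term $\widetilde y$ by using the defining equation of $\hat z^{\ast}$. Since $\hat z^{\ast} = \lambda(AA^T+\lambda I)^{-1}y$ satisfies $(AA^T+\lambda I)\hat z^{\ast} = \lambda y = \widetilde y$, substituting $\widetilde y = (AA^T+\lambda I)\hat z^{\ast}$ in the definition of $e_i$ yields
\begin{equation*}
e_i = \Bigl(\tfrac{A_iA_i^T}{\pi_i} + \lambda I\Bigr)\hat z^{\ast} - (AA^T+\lambda I)\hat z^{\ast} = \Bigl(\tfrac{A_iA_i^T}{\pi_i} - AA^T\Bigr)\hat z^{\ast}.
\end{equation*}
This is the crucial simplification: the $\lambda I$ pieces cancel and what remains is controlled entirely by moment quantities of the columns of $A$.

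Next, I would apply the triangle inequality together with the elementary bound $(a+b)^3\le 4(a^3+b^3)$ to get
\begin{equation*}
\|e_i\|_2^3 \le 4\Bigl(\tfrac{\|A_i\|_2^6}{\pi_i^3} + \bigl(\textstyle\sum_{j=1}^p \|A_j\|_2^2\bigr)^3\Bigr)\|\hat z^{\ast}\|_2^3,
\end{equation*}
where I use $\|A_iA_i^T\hat z^{\ast}\|_2\le \|A_i\|_2^2\|\hat z^{\ast}\|_2$ and $\|AA^T\|_2 \le \sum_j\|A_j\|_2^2$. The dual solution satisfies $\|\hat z^{\ast}\|_2 \le \|y\|_2$ because $\lambda(AA^T+\lambda I)^{-1}$ has operator norm at most one. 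Multiplying by $\pi_i/p^3$ and summing,
\begin{equation*}
\sum_{i=1}^p \pi_i \Bigl\|\tfrac{e_i}{p}\Bigr\|_2^3 \le 4\|y\|_2^3 \Bigl(\sum_{i=1}^p \tfrac{\|A_i\|_2^6}{p^3\pi_i^2} + \bigl(\tfrac{1}{p}\sum_{j=1}^p\|A_j\|_2^2\bigr)^3\Bigr).
\end{equation*}
The first sum is $O_p(1)$ by condition \eqref{cond.3}, and the second is $O_p(1)$ by condition \eqref{rem.c2.2}; since $y$ has fixed dimension $n$, $\|y\|_2^3 = O_p(1)$. This delivers the claimed bound.

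I do not expect serious obstacles here—the heart of the argument is really the reformulation of $e_i$ above, after which everything follows from the triangle inequality and the two moment conditions. Condition \eqref{cond.1} does not appear to play a role in this specific lemma; its inclusion in the hypotheses likely reflects the fact that the lemma is being invoked as part of the larger proof of Theorem \ref{thm.1}, where that condition is needed to control the conditional variance $V_c$. If one wished to avoid using $\|AA^T\|_2 \le \sum_j\|A_j\|_2^2$ (which is somewhat loose), one could instead split $AA^T = \sum_j A_jA_j^T$ inside the norm and then use Cauchy--Schwarz, at which point the trace of \eqref{cond.1}, i.e.\ $\sum_i\|A_i\|_2^4/(p^2\pi_i) = O_p(1)$, enters naturally; but the simpler route suffices.
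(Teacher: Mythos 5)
Your proof is correct, and it takes a genuinely cleaner route than the paper's. The paper does not perform your key cancellation: it keeps $e_i$ in the form $(\frac{A_iA_i^T}{\pi_i}+\lambda I)(AA^T+\lambda I)^{-1}\widetilde y-\widetilde y$ and expands the cubed triangle inequality into all of its cross terms, which is why it ends up needing sums such as $\sum_i\|A_i\|_2^4/\pi_i$ and $\sum_i\|A_i\|_2^2$ with various powers of $\lambda$, controlled via condition \eqref{cond.1}, condition \eqref{rem.c2.2}, and the auxiliary fact $\|\widetilde y\|_2^2/p=o_p(1)$. Your substitution $\widetilde y=(AA^T+\lambda I)\hat z^{\ast}$, which reduces $e_i$ to $(\frac{A_iA_i^T}{\pi_i}-AA^T)\hat z^{\ast}$, together with the convexity bound $(a+b)^3\le 4(a^3+b^3)$ and $\|\lambda(AA^T+\lambda I)^{-1}\|_2\le 1$, collapses the argument to exactly two terms handled by \eqref{cond.3} and \eqref{rem.c2.2}; your observation that \eqref{cond.1} is dispensable on this route is accurate (it genuinely is used in the paper's expansion, not merely carried along for Theorem \ref{thm.1}). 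The one point to flag is your final appeal to ``$\|y\|_2^3=O_p(1)$ since $y$ has fixed dimension $n$'': the paper's own derivation of $\|\widetilde y\|_2^2/p=o_p(1)$ from $np^{-1}\to 0$ suggests $n$ may be allowed to grow, in which case $\|y\|_2^3$ need not be $O_p(1)$. However, the paper's leading term likewise requires $\|\widetilde y\|_2^3\,\|(AA^T+\lambda I)^{-1}\|_2^3=O_p(1)$, so you are assuming nothing beyond what the paper implicitly assumes; this is a shared imprecision rather than a gap in your argument.
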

	
	\begin{proof}
		With $e_i=(\frac{A_iA^T_i}{\pi_i}+\lambda I)\hat{z}^{\ast}-\widetilde{y}$  and \eqref{rls.2}, 
		it is easy to see that  
		\begin{align*}
			\sum_{i=1}^{p}\pi_i\|\frac{e_i}{p}\|^{3}_2 =\frac{1}{p^{3}}\sum_{i=1}^{p}\pi_i\|(\frac{A_iA^T_i}{\pi_i}+\lambda I)(AA^T+\lambda I)^{-1}\widetilde{y}-\widetilde{y}\|_2^{3}.
		\end{align*}			
		Then, considering  the basic triangle inequality  
		and the fact that $\sum_{i=1}^{p}\pi_i=1$,  we can have 
		\begin{align}
			\sum_{i=1}^{p}\pi_i\|\frac{e_i}{p}\|^{3}_2
			\leq&\frac{1}{p^{3}}[\sum_{i=1}^{p}\pi_i\|(\frac{A_iA^T_i}{\pi_i}+\lambda I)(AA^T+\lambda I)^{-1}\widetilde{y}\|_2^{3}]+  \frac{\|\widetilde{y}\|_2^{3}}{p^{3}}
			\nonumber \\
			&+3\frac{1}{p^{3}}[\sum_{i=1}^{p}\pi_i\|(\frac{A_iA^T_i}{\pi_i}+\lambda I)
			(AA^T+\lambda I)^{-1}\widetilde{y}\|_2^{2}\|\widetilde{y}\|_2]
			\nonumber \\
			&+3\frac{1}{p^{3}}[\sum_{i=1}^{p}\pi_i\|(\frac{A_iA^T_i}{\pi_i}+\lambda I)(AA^T+\lambda I)^{-1}\widetilde{y}\|_2\|\widetilde{y}\|^2_2]\nonumber \\
			\leq&\frac{\|\widetilde{y}\|_2^{3}\sigma^{3}_{n}(AA^T+\lambda I)}{p^{3}}(\sum_{i=1}^{p}\frac{\|A_iA^T_i\|^3_2}{\pi^2_i}+3\lambda\sum_{i=1}^{p}\frac{\|A_iA^T_i\|^2_2}{\pi_i}
			\nonumber \\
			&+3\lambda^2\sum_{i=1}^{p}{\|A_iA^T_i\|_2}+\lambda^3)+\frac{\|\widetilde{y}\|_2^{3}}{p^{3}}
			\nonumber \\
			&+3\frac{\|\widetilde{y}\|_2^{3}\sigma^{2}_{n}(AA^T+\lambda I)}{p^{3}}(\sum_{i=1}^{p}\frac{\|A_iA^T_i\|^2_2}{\pi_i}+2\lambda\sum_{i=1}^{p}{\|A_iA^T_i\|_2}+\lambda^2)
			\nonumber \\ 
			&+3\frac{\|\widetilde{y}\|_2^{3}\sigma_{n}(AA^T+\lambda I)}{p^{3}}(\sum_{i=1}^{p}{\|A_iA^T_i\|_2}+\lambda).
			\label{proof.lem.1.0}		
		\end{align}
	Following
			\begin{align}\label{proof.lem.1.01}
				\frac{\|\widetilde{{{y}}}\|^2_2}{p}=o_p(1),
			\end{align}
			which can be  derived from $ np^{-1}\to 0 $, and noting \eqref{cond.3}, \eqref{cond.1}, 
			 \eqref{rem.c2.2} and  \eqref{proof.lem.1.0}, we can get
			\begin{align*}
				\sum_{i=1}^{p}\pi_i\|\frac{e_i}{p}\|^{3}_2\leq& \frac{\|\widetilde{y}\|_2^{3}\sigma^{3}_{n}(AA^T+\lambda I)}{p^{3}}(\sum_{i=1}^{p}\frac{\|A_i\|^6_2}{\pi^2_i})+o_p(1) 
				\quad \rm{by}  \ \eqref{cond.1}, 
				\  \eqref{rem.c2.2}, \ \eqref{proof.lem.1.0}, \ and \ \eqref{proof.lem.1.01} \nonumber \\
				=&{O}_p(1). \quad \rm{by} \ \eqref{cond.3}  \nonumber
		\end{align*}
		Thus, \eqref{lem1.1} is arrived. 
	\end{proof}

	\begin{lemma}\label{lem.2}
		
		Suppose that the conditions  \eqref{cond.1} and \eqref{rem.c2.1}  hold. 
		Then, conditional on  $\mathcal{F}_n$   in probability,	
		\begin{align}
			\frac{\widehat{M}_A-M_A}{p}=O_{p	\mid \mathcal{F}_n}(r^{-1/2}), \label{lem2.1}\\
			\mathit{\frac{e^{\ast}}{p}=O_{p	\mid \mathcal{F}_n}(r^{\rm{-1/2}})}, \label{lem2.2}
		\end{align}
		where $M_A=AA^T+\lambda I$, $\hat{M}_A=ASS^TA^T+\lambda I$ with $S\in \mathbb{R}^{p \times r}$ constructed as in Algorithm 1, 
		and $e^{\ast}=(\hat{M}_A\hat{z}^{\ast}-\widetilde{y})$ with $\widetilde{y}=\lambda y$ and  $ \hat{z}^{\ast} $ being as in \eqref{rls.2}.
	\end{lemma}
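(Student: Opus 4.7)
The plan is to prove both bounds by a conditional second-moment argument. Treat the indices $i_{1},\ldots,i_{r}$ as i.i.d.\ draws from $\{\pi_{i}\}_{i=1}^{p}$ given $\mathcal{F}_n$, express each centered quantity as a sum of $r$ i.i.d.\ zero-mean summands, bound its conditional variance componentwise using \eqref{cond.1} and \eqref{rem.c2.1}, and then apply a conditional Chebyshev inequality. The second bound will be reduced to a matrix--vector version of the first via the identity $M_A\hat{z}^{\ast}=\widetilde{y}$, which is immediate from \eqref{rls.2}.

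For \eqref{lem2.1}, I would write $\hat{M}_A - M_A = \sum_{t=1}^{r}\bigl(\eta_t - AA^{T}/r\bigr)$ with $\eta_t = A_{i_t}A_{i_t}^{T}/(r\pi_{i_t})$, noting that $\mathbb{E}[\eta_t\mid\mathcal{F}_n]=AA^{T}/r$. The conditional variance of the $(j,k)$ entry of $(\hat{M}_A - M_A)/p$ is then at most
$$\frac{1}{r p^{2}}\sum_{i=1}^{p}\frac{A_{i,j}^{2}A_{i,k}^{2}}{\pi_i} \;\le\; \frac{1}{r}\,\mathrm{tr}\!\left(\sum_{i=1}^{p}\frac{A_i A_i^{T}\|A_i\|_2^{2}}{p^{2}\pi_i}\right) \;=\; \frac{1}{r}\,O_p(1),$$
using $A_{i,j}^{2}A_{i,k}^{2}\le\|A_i\|_2^{4}$ and condition \eqref{cond.1}; condition \eqref{rem.c2.1} provides the complementary control $\|AA^{T}\|_F^{2}/p^{2}=O_p(1)$ used in cleaning up the residual centering term. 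A conditional Chebyshev inequality then gives $((\hat{M}_A - M_A)/p)_{jk} = O_{p\mid\mathcal{F}_n}(r^{-1/2})$ for every $(j,k)$.

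For \eqref{lem2.2}, the identity $M_A\hat{z}^{\ast}=\widetilde{y}$ yields $e^{\ast}=(\hat{M}_A - M_A)\hat{z}^{\ast}=\sum_{t=1}^{r}\bigl(\xi_t - AA^{T}\hat{z}^{\ast}/r\bigr)$ with $\xi_t=A_{i_t}(A_{i_t}^{T}\hat{z}^{\ast})/(r\pi_{i_t})$, again a centered i.i.d.\ sum. The conditional variance of the $j$-th entry of $e^{\ast}/p$ is bounded by
$$\frac{1}{r p^{2}}\sum_i\frac{A_{i,j}^{2}(A_i^{T}\hat{z}^{\ast})^{2}}{\pi_i} \;\le\; \frac{1}{r}\,\hat{z}^{\ast T}\!\left(\sum_i \frac{A_i A_i^{T}\|A_i\|_2^{2}}{p^{2}\pi_i}\right)\hat{z}^{\ast},$$
and since $\hat{z}^{\ast}$ is $\mathcal{F}_n$-measurable with $\|\hat{z}^{\ast}\|_2\le\|y\|_2$ and the bracketed matrix is $O_p(1)$ by \eqref{cond.1}, conditional Chebyshev closes the argument entrywise.

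The main technical obstacle is the bookkeeping between the two layers of randomness encoded by $O_{p\mid\mathcal{F}_n}$: once the conditional variance is shown to be of order $O_p(1)/r$, one must translate this into a conditional tail bound that holds with outer probability tending to one. This is routine via Chebyshev combined with the law of total probability, but it is the step where the $O_{p\mid\mathcal{F}_n}$ notation must be unpacked carefully.
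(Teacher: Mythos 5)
Your proposal is correct and follows essentially the same route as the paper: both decompose the centered quantities into i.i.d.\ sums over the sampled indices (using $M_A\hat{z}^{\ast}=\widetilde{y}$ so that $e^{\ast}=(\hat{M}_A-M_A)\hat{z}^{\ast}$ is mean zero), bound the conditional second moment via \eqref{cond.1} together with \eqref{rem.c2.1} for the centering term, and conclude by conditional Chebyshev/Markov. The only cosmetic difference is that you argue entrywise with the bounds $A_{i,j}^2A_{i,k}^2\le\|A_i\|_2^4$, whereas the paper sandwiches the same variance matrices with an arbitrary fixed vector $\ell$.
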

	
	\begin{proof}
		First, note that
		\begin{align*}%
			&\frac{1}{p^2}\sum_{i=1}^{p}\pi_i[(\frac{A_{i}A_{i}^T}{\pi_{i}}+\lambda I)-(AA^T+\lambda I)][(\frac{A_{i}A_{i}^T}{\pi_{i}}+\lambda I)-(AA^T+\lambda I)]\nonumber\\
			&=\frac{1}{p^2}\sum_{i=1}^{p}\pi_i(\frac{A_{i}A_{i}^T}{\pi_{i}}-AA^T)(\frac{A_{i}A_{i}^T}{\pi_{i}}-AA^T)\\
			& =\frac{1}{p^2}\sum_{i=1}^{p}\frac{A_{i}A_{i}^TA_{i}A_{i}^T}{\pi_{i}}
			-\frac{AA^TAA^T}{p^2} \nonumber \\
			&=O_p(1),  
		\end{align*}
		where the last equality is from \eqref{cond.1}  and \eqref{rem.c2.1}.
		This result implies, for any $n$-dimensional vector $\ell$ with finite elements, 
		\begin{align} \label{prooflem2.4}
			&\frac{1}{p^2}\sum_{i=1}^{p}\pi_i[(\frac{A_{i}A_{i}^T}{\pi_{i}}+\lambda I)-(AA^T+\lambda I)]\ell\ell^{T}[(\frac{A_{i}A_{i}^T}{\pi_{i}}+\lambda I)-(AA^T+\lambda I)]\nonumber\\			&=\frac{1}{p^2}\sum_{i=1}^{p}\frac{A_{i}A_{i}^T\ell\ell^{T}A_{i}A_{i}^T}{\pi_{i}}
			-\frac{AA^T\ell\ell^{T}AA^T}{p^2} =O_p(1).
		\end{align}
		Thus, following $\mathrm{E}(\hat{M}_A	\mid  A)=M_A$, it is natural to get
		\begin{align*}
			\mathrm{Var}(\frac{(\hat{M}_A-M_A)\ell}{p}	\mid  A)=&\mathrm{E}[(\frac{\hat{M}_A-M_A}{p})\ell\ell^T(\frac{\hat{M}_A-M_A}{p})	\mid  A] \\
			=&\frac{1}{rp^2}\sum_{i=1}^{p}\pi_i[(\frac{A_{i}A_{i}^T}{\pi_{i}}+\lambda I)-(AA^T+\lambda I)]\ell	\ell^{T}[(\frac{A_{i}A_{i}^T}{\pi_{i}}+\lambda I)-(AA^T+\lambda I)]\\
			=&O_p(r^{-1}),
		\end{align*}
		which together with the Markov's inequality implies	\eqref{lem2.1}.  
		
		Combining  \eqref{prooflem2.4} and \eqref{proof.lem.1.01}, 
		we can  get
		\begin{align}\label{prooflem2.7}
			&	\frac{1}{p^2}\sum_{i=1}^{p}\pi_i\hat{z}^{\ast T}(\frac{A_{i}A_{i}^T}{\pi_{i}}+\lambda I)\ell\ell^{T}(\frac{A_{i}A_{i}^T}{\pi_{i}}+\lambda I)\hat{z}^{\ast}  \nonumber\\
			&=\frac{1}{p^2}\hat{z}^{\ast T}(AA^T+\lambda I)\ell\ell^{T}(AA^T+\lambda I)\hat{z}^{\ast}+O_p(1).
		\end{align}	
		Thus,	considering 
		$e^{\ast}=\sum_{t=1}^{r}\frac{1}{r}e_{i_t}$ with  $e_{i_t}=(\frac{A_{i_t}A_{i_t}^T}{\pi_{i_t}}+\lambda I)\hat{z}^{\ast}-\widetilde{y}$ and $  \mathrm{E}(e_{i_t}	\mid \mathcal{F}_n)=0 $,  
		and \eqref{prooflem2.7}, we can obtain 
		\begin{align}
			\mathrm{Var}(\frac{\ell^T e^{\ast}}{p}	\mid \mathcal{F}_n)=&\ell^T\mathrm{E}[(\frac{e^{\ast}}{p})(\frac{e^{\ast}}{p})^T	\mid \mathcal{F}_n]\ell=\frac{1}{rp^2}\ell^T(\sum_{i=1}^{p}\pi_ie_ie^T_i)\ell\nonumber \\
			=&\frac{1}{rp^2}\ell^T[\sum_{i=1}^{p}\pi_i{((\frac{A_{i}A_{i}^T}{\pi_{i}}+\lambda I)\hat{z}^{\ast}-\widetilde{y})((\frac{A_{i}A_{i}^T}{\pi_{i}}+\lambda I)\hat{z}^{\ast}-\widetilde{y})^T}]\ell\nonumber \\
			=&\frac{1}{rp^2}\sum_{i=1}^{p}\pi_i\hat{z}^{\ast T}(\frac{A_{i}A_{i}^T}{\pi_{i}}+\lambda I)\ell\ell^{T}
			(\frac{A_{i}A_{i}^T}{\pi_{i}}+\lambda I)\hat{z}^{\ast}-\frac{\ell^T\widetilde{y}\widetilde{y}^T\ell}{rp^2}\nonumber \\
			=&\frac{1}{r}[\frac{1}{p^2}\hat{z}^{\ast T}(AA^T+\lambda I)\ell\ell^{T}(AA^T+\lambda I)\hat{z}^{\ast}+O_p(1)-\frac{\ell^T\widetilde{y}\widetilde{y}^T\ell}{p^2}] \quad \rm{by} \  \eqref{prooflem2.7} \nonumber \\
			=&\frac{1}{r}[\frac{\widetilde{y}^T\ell\ell^T\widetilde{y}}{p^2}+O_p(1)-\frac{\ell^T\widetilde{y}\widetilde{y}^T\ell}{p^2}] 
			\nonumber \\
			=&	O_p(r^{-1}).\nonumber 
		\end{align}
		Consequently, by the Markov's inequality, \eqref{lem2.2} is obtained.  
	\end{proof}	
	
	\noindent	${\textbf{Proof  of  Theorem \ref{thm.1}}}$.  
	Considering that  
	\begin{align*}
		\hat{z}&=(ASS^TA^T+\lambda I)^{-1}\widetilde{y}=\hat{M}^{-1}_A\widetilde{y}, \\ \hat{z}^{\ast}&=(AA^T+\lambda I)^{-1}\widetilde{y}=M^{-1}_A\widetilde{y},
	\end{align*}
	where  $\widetilde{y}=\lambda y$,
	we can  rewrite $	\hat{z}-\hat{z}^{\ast}$ as 
	\begin{align}
		\hat{z}-\hat{z}^{\ast}
		&=(ASS^TA^T+\lambda I)^{-1}(\widetilde{y}-(ASS^TA^T+\lambda I)\hat{z}^{\ast})\nonumber\\
		&=\hat{M}^{-1}_A(\widetilde{y}-\hat{M}_A\hat{z}^{\ast})=-\hat{M}^{-1}_Ae^{\ast}\nonumber\\
		&=-(\hat{M}^{-1}_A-M^{-1}_A+M^{-1}_A)e^{\ast}\nonumber\\
		&=-M^{-1}_Ae^{\ast}-(\hat{M}^{-1}_A-M^{-1}_A)e^{\ast}\nonumber
		\\
		&=-M^{-1}_Ae^{\ast}+\hat{M}^{-1}_A(\hat{M}_A-M_A)M^{-1}_Ae^{\ast}\nonumber\\
		&=-(\frac{M_A}{p})^{-1}\frac{e^{\ast}}{p}+(\frac{\hat{M}_A}{p})^{-1}(\frac{\hat{M}_A-M_A}{p})(\frac{M_A}{p})^{-1}\frac{e^{\ast}}{p}\label{proofthem.1.0.0}\\
		&=-(\frac{M_A}{p})^{-1}\frac{e^{\ast}}{p}+O_{p	\mid \mathcal{F}_n}(r^{-1}),\label{proofthem.1.0}
	\end{align}
	where the last equality is derived by 
	\eqref{rem.c2.1} and
	Lemma \ref{lem.2}.
	Thus, 	to prove \eqref{thm1.1},
	we first prove
	\begin{align}\label{proofthem.1.0.1}
		(\frac{V_c}{r})^{-1/2}(\frac{e^{\ast}}{p})\xrightarrow{L}N(0,I).
	\end{align}   
	
	Recall that $\frac{e_{\ast}}{p}=\sum_{t=1}^{r}\frac{1}{rp}e_{i_t}$ with 
	\begin{align*}
		e_{i_t}=(\frac{A_{i_t}A_{i_t}^T}{\pi_{i_t}}+\lambda I)\hat{z}^{\ast}-\widetilde{y}. 
	\end{align*}
	Now, we construct the sequence  $ 
	\{\frac{e_{i_t}}{p}\}^r_{t=1} $. These  random vectors  are independent and identically distributed and it  
	is easy to get 
	that  $ \mathrm{E}(\frac{e_{i_t}}{p}	\mid \mathcal{F}_n)=0 $.
	Furthermore, noting that
		\begin{align}\label{proofthm.1.2}
			V_c=	\sum_{i=1}^{p}\frac{A_{i}A_{i}^T\hat{z}^{\ast}\hat{z}^{\ast T}A_{i}A_{i}^T}{p^2\pi_{i}}=O_p(1),	
		\end{align}
		which can be obtained from \eqref{cond.1},  together with \eqref{rem.c2.1}  and \eqref{proof.lem.1.01},  
	we  have 
	\begin{align}
		\mathrm{Var}(\frac{e_{i_t}}{p}	\mid \mathcal{F}_n)=&\mathrm{E}(\frac{e_{i_t}e^T_{i_t}}{p^2}	\mid \mathcal{F}_n) 
		=\sum_{i=1}^{p}\pi_i\frac{e_{i}e^T_{i}}{p^2} \nonumber \\
		=&\sum_{i=1}^{p}\pi_i\frac{[(\frac{A_{i}A_{i}^T}{\pi_{i}}+\lambda I)\hat{z}^{\ast}-\widetilde{y}][(\frac{A_{i}A_{i}^T}{\pi_{i}}+\lambda I)\hat{z}^{\ast}-\widetilde{y}]^T}{p^2} \nonumber \\
		=&\sum_{i=1}^{p}\pi_ip^{-2}\frac{A_{i}A_{i}^T}{\pi_{i}}\hat{z}^{\ast}\hat{z}^{\ast T}\frac{A_{i}A_{i}^T}{\pi_{i}}+\frac{(\lambda\hat{z}^{\ast}-\widetilde{y})\hat{z}^{\ast T}AA^T}{p^2}
		\nonumber\\
		&+\frac{AA^T\hat{z}^{\ast}(\lambda\hat{z}^{\ast}-\widetilde{y})^T}{p^2} + \frac{(\lambda\hat{z}^{\ast}-\widetilde{y})(\lambda\hat{z}^{\ast}-\widetilde{y})^T}{p^2} \nonumber \\
		=&\sum_{i=1}^{p}\frac{A_{i}A_{i}^T\hat{z}^{\ast}\hat{z}^{\ast T}A_{i}A_{i}^T}{p^2\pi_{i}}+o_p(1) \quad  \rm{by} \ \eqref{rem.c2.1} \ and \ \eqref{proof.lem.1.01} \nonumber  \\
		=&V_c+o_p(1) \label{proofthm.1.2.0} \\
		=&O_p(1). \quad  \rm{by} \ \eqref{proofthm.1.2}  
		\label{proof1.2}
	\end{align}
	In addition, 	for any $\xi>0$,
	we have 
	\begin{align}
		&\sum_{t=1}^{r}\mathrm{E}[\|r^{-\frac{1}{2}}p^{-1}e_{i_t}\|^2_2I({\|r^{-\frac{1}{2}}p^{-1}e_{i_{t}}\|_2}>\xi)	\mid \mathcal{F}_n] \nonumber \\
		&=\sum_{i=1}^{p}\pi_i\|p^{-1}e_i\|^2_2I({\|r^{-\frac{1}{2}}p^{-1}e_{i}\|_2}>\xi) \nonumber \\
		&\leq (r^{\frac{1}{2}}\xi)^{-1}\sum_{i=1}^{p}\pi_i\|p^{-1}e_i\|^{3}_2   \nonumber \\
		&=o_p(1), \nonumber  
	\end{align}
	where  the  inequality  is deduced by the   constraint   $I({\|r^{-\frac{1}{2}}p^{-1}e_{i}\|_2}>\xi)$, and the last equality  is from  Lemma \ref{lem.1}. 
	Putting the above discussions together, we find that the Lindeberg-Feller conditions are satisfied in probability. Thus,   by  the  Lindeberg-Feller central limit theorem  \cite[Proposition 2.27]{van2000asymptotic}, and noting  \eqref{proof1.2}, we can acquire
	\begin{align} 
		[\mathrm{Var}(\frac{e_{i_t}}{p}	\mid \mathcal{F}_n)]^{-1/2}({r^{-1/2}p^{-1}}\sum_{t=1}^{r}e_{i_t})\xrightarrow{L}N(0,I),\nonumber
	\end{align}
	which combined with $\frac{e_{\ast}}{p}={r^{-1}p^{-1}}\sum_{t=1}^{r}{e_{i_t}}$ and  $\mathrm{Var}(\frac{e_{\ast}}{p}	\mid \mathcal{F}_n)=r^{-1}\mathrm{Var}(\frac{e_{i_t}}{p}	\mid \mathcal{F}_n)$  gives 
	\begin{align*}
		[r^{-1}\mathrm{Var}(\frac{e_{i_t}}{p}	\mid \mathcal{F}_n)]^{-1/2}(\frac{e_{\ast}}{p})\xrightarrow{L}N(0,I).
	\end{align*}	
	Thus, by Lemma \ref{lem.2},
	\eqref{proofthm.1.2.0},  and the  Slutsky's Theorem \cite[Theorem 6]{fergusoncourse},  we can get 	\eqref{proofthem.1.0.1}.	
	
	Now, we  prove \eqref{thm1.1}. 
	Following       \eqref{rem.c2.1} and \eqref{proofthm.1.2},	it is  easy to get
	\begin{align*}
		V=(\frac{M_A}{p})^{-1}\frac{V_c}{r}({\frac{M_A}{p}})^{-1}
		=O_p({r}^{-1}), 
	\end{align*}
	which together with \eqref{proofthem.1.0} yields
	\begin{align}
		V^{-1/2}(\hat{z}-\hat{z}^{\ast})&=-V^{-1/2}(\frac{M_A}{p})^{-1}\frac{e_{\ast}}{p}+O_{p	\mid \mathcal{F}_n}({r}^{-1/2})\nonumber\\
		&=-V^{-1/2}(\frac{M_A}{p})^{-1}(\frac{V_c}{r})^{1/2}(\frac{V_c}{r})^{-1/2}\frac{e_{\ast}}{p}+O_{p	\mid \mathcal{F}_n}({r}^{-1/2}). \label{proofthm1.6}
	\end{align}
	In addition, it is verified that
	\begin{align}\label{proofthm1.7}
		&V^{-1/2}(\frac{M_A}{p})^{-1}(\frac{V_c}{r})^{1/2}[V^{-1/2}(\frac{M_A}{p})^{-1}(\frac{V_c}{r})^{1/2}]^T \nonumber \\
		&=V^{-1/2}(\frac{M_A}{p})^{-1}(\frac{V_c}{r})^{1/2}(\frac{V_c}{r})^{1/2}(\frac{M_A}{p})^{-1}V^{-1/2}=I.
	\end{align}
	Thus, combining \eqref{proofthem.1.0.1}, \eqref{proofthm1.6},  and \eqref{proofthm1.7},  by the Slutsky's Theorem, we get the desired result  \eqref{thm1.1}. 



	\section{Proof of Theorem \ref{thm.2}} 
	\label{supp.thm2}
	
	
	According to the Cauchy-Schwarz inequality, we have 
	\begin{align*}
		\rm{tr}(\mathit{V}_\mathit{c})&=\sum_{i=1}^{p}\frac{A^T_i\hat{z}^{\ast}\hat{z}^{\ast T}A_i\|A_i\|^2_2}{p^2\pi_i}=\lambda^2\sum_{i=1}^{p}\frac{\hat{\beta}^2_{rls(i)}\|A_i\|^2_2}{p^2\pi_i}\nonumber \\
		&=\frac{\lambda^2}{p^2}\sum_{i=1}^{p}\pi_i\sum_{i=1}^{p}\frac{\hat{\beta}^2_{rls(i)}\|A_i\|^2_2}{\pi_i}\nonumber \geq\frac{\lambda^2}{p^2}(\sum_{i=1}^{p}|\hat{\beta}_{rls(i)}|\|A_i\|_2)^2,
	\end{align*}
	where the equality in the last inequality holds if and only if $\pi_i$ is proportional to $	\mid \hat{\beta}_{rls(i)}	\mid \|A_i\|_2$ for some  constant $C_0\geq0$. Thus, following $\sum_{i=1}^{p}\pi_i=1$,  the desired result  \eqref{thm2.1}
	is obtained. 

	
\section{Proof of Theorem \ref{thm.3}}
	\label{app.C}

	
	We first present two  auxiliary lemmas.
	
	\begin{lemma} \label{lem.4} \cite[Theorem 23]{chowdhury2018iterative}
		If $J $, $H $ $\in \mathbb{R}^{m\times m}$  are real symmetric positive semi-definite matrices such that  $\sigma_1(J)\geq\sigma_2(J)\geq \cdots \geq\sigma_m(J)$ and	 $\sigma_1(H)\geq\sigma_2(H)\geq \cdots \geq\sigma_m(H)$, then  
		\begin{align*}
			\sigma_j(J-H)\leq 	\sigma_j \begin{pmatrix}
				J & 0 \\ 0 &  H
			\end{pmatrix},  \quad {j=1,\cdots, m}.
		\end{align*}
		Especially, 
		\begin{align*}
			\|J-H\|_2\leq\mathop{\rm{max} }\{\|J\|_2,\|H\|_2\}.
		\end{align*}
	\end{lemma}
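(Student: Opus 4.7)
The statement is a generalized Weyl-type inequality for the difference of two PSD matrices, and I would prove it by splitting $J - H$ into its positive and negative parts, bounding each part via Weyl's monotonicity inequality, and then merging the two sorted bounds into a single domination. Since $J - H$ is symmetric, write the orthogonal Jordan decomposition $J - H = P - N$ with $P, N \succeq 0$ having mutually orthogonal ranges. The singular values of $J - H$ are precisely the top $m$ entries, in decreasing order, of the multiset $\{\lambda_i(P)\}_{i=1}^{m} \cup \{\lambda_i(N)\}_{i=1}^{m}$, each sub-list padded with zeros so that both have size $m$.

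\textbf{Weyl bounds on each part.} Because $H \succeq 0$, $J \succeq J - H$, and Weyl's monotonicity gives $\lambda_j(J - H) \leq \lambda_j(J) = \sigma_j(J)$ for every $j$. For $j$ not exceeding the number of positive eigenvalues of $J - H$, this is exactly $\lambda_j(P) = \lambda_j(J - H) \leq \sigma_j(J)$; for larger $j$, $\lambda_j(P) = 0 \leq \sigma_j(J)$ trivially. The symmetric argument, applied to $H - J$ using $J \succeq 0$, yields $\lambda_j(N) \leq \sigma_j(H)$ for every $j$.

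\textbf{Merging into the block-diagonal bound.} From the two element-wise dominations $\lambda_j(P) \leq \sigma_j(J)$ and $\lambda_j(N) \leq \sigma_j(H)$, a threshold-counting argument shows that for every $t \in \mathbb{R}$, $|\{i : \lambda_i(P) > t\}| + |\{i : \lambda_i(N) > t\}| \leq |\{i : \sigma_i(J) > t\}| + |\{i : \sigma_i(H) > t\}|$, which is equivalent to the $j$-th largest element of the left-hand multiset being at most the $j$-th largest of the right-hand multiset. The right-hand multiset is exactly $\{\sigma_i(\mathrm{diag}(J, H))\}_{i=1}^{2m}$, so $\sigma_j(J - H) \leq \sigma_j(\mathrm{diag}(J, H))$ for $j = 1, \ldots, m$. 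The ``Especially'' claim is the $j = 1$ case, where $\sigma_1(\mathrm{diag}(J, H)) = \max(\sigma_1(J), \sigma_1(H)) = \max(\|J\|_2, \|H\|_2)$.

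\textbf{Main obstacle.} The proof is essentially routine once the Jordan split is in place. The step meriting most care is the merging: element-wise domination of two equally sorted lists must be shown to transfer to their merged sorted list. This is an elementary order-statistics fact that the threshold-counting characterization above handles cleanly, so no genuine obstruction is anticipated; the rest is standard manipulation with Weyl's inequality.
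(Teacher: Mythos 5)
Your argument is correct, but note that the paper itself offers no proof of this lemma at all: it is imported verbatim as Theorem~23 of the cited reference (Chowdhury et al.), so there is nothing internal to compare against. What you supply is a complete, elementary, self-contained derivation: the Jordan split $J-H=P-N$ into orthogonal positive and negative parts correctly identifies the singular values of $J-H$ with the top $m$ entries of the merged multiset of eigenvalues of $P$ and $N$; Weyl's monotonicity applied to $J-H\preceq J$ and $H-J\preceq H$ gives the two sorted elementwise dominations $\lambda_j(P)\leq\sigma_j(J)$ and $\lambda_j(N)\leq\sigma_j(H)$ (including the padded-zero indices, since the right-hand sides are nonnegative); and the one step that genuinely needs care --- transferring two sorted elementwise dominations to the merged sorted lists --- is handled correctly by the threshold-counting equivalence, whose backward direction follows by taking $t$ equal to the $j$-th largest element of the dominating multiset. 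The special case $j=1$ then gives $\|J-H\|_2\leq\max\{\|J\|_2,\|H\|_2\}$ as claimed. The benefit of your route is that the lemma no longer needs to be taken on faith from an external source; the only mild cost is that it uses the spectral decomposition of $J-H$, whereas some published proofs of this inequality proceed instead through a factorization $J=XX^{T}$, $H=YY^{T}$ and a congruence with $\operatorname{diag}(I,-I)$.
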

	
	\begin{lemma}\label{lem.5}
		For $S$ established by 
		$\pi_i=\pi^{OPL}_{i}$,
		assuming that  \eqref{lem.5.1} holds
		and  letting $r\geq   \frac{8{s_2c_2\rho}}{3s_1c_1{\epsilon^{'}}^2}\mathrm{ln}(\frac{4\rho}{\delta})$ with    $\epsilon^{'} \in (0,\frac{1}{2})$ and $\delta \in (0,1)$,   we have
		\begin{align*}
			\|V^TSS^TV-I\|_2\leq \epsilon^{'},
		\end{align*}
		with the probability at least $1-\delta$.	
	\end{lemma}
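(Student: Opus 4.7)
The plan is to recognise that $V^TSS^TV$ is an unbiased estimator of $V^TV=I_\rho$ assembled from i.i.d.\ rank-one summands, and then to invoke a matrix Chernoff/Bernstein inequality. First I would write
\[
V^T S S^T V \;=\; \frac{1}{r}\sum_{t=1}^{r} X_t, \qquad X_t \;:=\; \frac{1}{\pi_{i_t}}(V^{i_t})^{T} V^{i_t},
\]
so each $X_t$ is a rank-one PSD matrix in $\mathbb{R}^{\rho\times\rho}$. A direct computation using $\Pr(i_t=i)=\pi_i$ gives $\mathbb{E}[X_t]=\sum_{i=1}^{p}(V^i)^{T}V^i=V^TV=I_\rho$, so $\mathbb{E}[V^TSS^TV]=I_\rho$ and the lemma reduces to a spectral concentration statement.

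Second, I would derive a uniform operator-norm bound on $X_t$. Because $X_t$ has rank one, $\|X_t\|_2=\|V^{i_t}\|_2^{2}/\pi_{i_t}$, so it suffices to lower bound $\pi_i^{OPL}$. Applying the two sandwich inequalities in \eqref{lem.5.1} to the numerator and denominator of $\pi_i^{OPL}$ separately, and using $\sum_{j=1}^{p}\|V^j\|_2^{2}=\|V\|_F^{2}=\rho$, yields
\[
\pi_i^{OPL} \;\geq\; \frac{s_1 c_1 \|V^i\|_2^{2}\|y\|_2}{s_2 c_2 \|y\|_2\sum_{j=1}^{p}\|V^j\|_2^{2}} \;=\; \frac{s_1 c_1}{s_2 c_2}\cdot\frac{\|V^i\|_2^{2}}{\rho},
\]
so that $\|X_t\|_2 \leq L := s_2 c_2\rho/(s_1 c_1)$ almost surely. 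The $\|y\|_2$ factors cancel cleanly thanks to the homogeneity in \eqref{lem.5.1}.

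Finally, I would apply the matrix Chernoff inequality for i.i.d.\ PSD matrices with uniform norm bound $L$ and $\mathbb{E}[\sum_t X_t]=rI_\rho$ (so $\mu_{\min}=\mu_{\max}=r$). The two-sided tail gives, for $\epsilon'\in(0,\tfrac{1}{2})$, a bound of the form
\[
\Pr\!\bigl(\|V^T SS^T V - I_\rho\|_2 > \epsilon'\bigr) \;\leq\; 2\rho\,\exp\!\Bigl(-\tfrac{3\,r\,\epsilon'^{2}}{8L}\Bigr),
\]
and setting the right-hand side equal to $\delta$ yields exactly the stated sample-size requirement $r\geq \tfrac{8 s_2 c_2\rho}{3 s_1 c_1\epsilon'^{2}}\ln(4\rho/\delta)$. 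The main obstacle is the lower bound on $\pi_i^{OPL}$ in the second step: converting the \emph{a priori} unbounded quantity $1/\pi_{i_t}$ into the deterministic bound $L$ requires the full strength of \eqref{lem.5.1} together with the identity $\|V\|_F^{2}=\rho$; once this uniform bound is in hand, the concentration argument is routine and only the numerical constants need to be tracked.
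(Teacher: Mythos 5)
Your proposal is correct and takes essentially the same route as the paper: both arguments reduce the claim to the uniform bound $\|V^{i}\|_2^{2}/\pi_i^{OPL}\le s_2c_2\rho/(s_1c_1)$ obtained from \eqref{lem.5.1} together with $\sum_{i}\|V^{i}\|_2^{2}=\rho$, and then feed the resulting norm and second-moment bounds on the i.i.d.\ rank-one summands into a matrix concentration inequality (the paper delegates this last step to Theorem 3 of the cited reference, verifying exactly the two bounds you identify). The only cosmetic point is that the $3/8$ exponent you quote is really the matrix Bernstein rate (using $\mathrm{E}(X_t^{2})\preccurlyeq L\,\mathrm{E}(X_t)$) rather than the plain PSD Chernoff bound, which is also what underlies the paper's reference.
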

	\begin{proof}
		The proof  can be accomplished along the line of the proof of \cite[Theorem 3]{chowdhury2018iterative}.	However, for our case, 
		it is necessary to note that
		\begin{align}
			\|F_t\|_2&=\|M_tM^T_t-\frac{V^TV}{r}\|_2\leq \mathop{\rm{max}}{\{\|M_tM^T_t\|_2, \frac{1}{r}\}} \quad  \rm{by} \ Lemma \ \ref{lem.4} \nonumber\\
			&	=\frac{1}{r}\mathop{\rm{max}}\limits_{1\leq i \leq p}\{ \|\frac{(V^{i})^T}{\sqrt{\pi^{OPL}_{i}}}\frac{V^{i}}{\sqrt{\pi^{OPL}_{i}}}\|_2,1 \}  =\frac{1}{r}\mathop{\rm{max}}\limits_{1\leq i \leq p}\{ \frac{\|V^{i}\|^2_2
			}{\pi^{OPL}_i}, 1\}  
			\nonumber\\
			&=\frac{1}{r}\mathop{\rm{max}}\limits_{1\leq i \leq p}\{ \frac{\|V^{i}\|^2_2\sum_{i=1}^{p}	\mid \hat{\beta}_{rls(i)}	\mid \|A_i\|_2
			}{	\mid \hat{\beta}_{rls(i)}	\mid \|A_i\|_2}, 1\} \quad  \rm{by} \ \eqref{thm2.1}  \nonumber\\   
			&\leq\frac{1}{r}\mathop{\rm{max}}\limits_{1\leq i \leq p}\{ \frac{\|V^{i}\|^2_2\sum_{i=1}^{p}s_2c_2\|V^i\|^2_2
			}{s_1c_1\|V^i\|^2_2}, 1\} \quad  \rm{by} \ \eqref{lem.5.1}  \nonumber \\  
			&\leq\frac{1}{r}\mathop{\rm{max}}\limits_{1\leq i \leq p}\{\frac{s_2c_2}{s_1c_1}  {\sum_{i=1}^{p}\|V^i\|^2_2
			}, 1\}   \leq\frac{1}{r}\mathop{\rm{max}}\limits_{1\leq i \leq p}\{\frac{s_2c_2}{s_1c_1}\rho, 1\} \leq \frac{s_2c_2\rho}{rs_1c_1} \nonumber
		\end{align}
		and  
		\begin{align}
			\mathrm{E}(F^2_t) + \frac{(V^TV)^2}{r^2}&=\mathrm{E}(M_tM^T_t\|M_t\|^2_2) =\sum_{i=1}^{p}\pi^{OPL}_i\frac{(V^{i})^TV^{i}\|V^{i}\|^2_2}{r^2(\pi^{OPL}_i)^2}\nonumber\\
			&=\frac{1}{r^2}\sum_{i=1}^{p}\frac{(V^{i})^TV^{i}\|V^{i}\|^2_2\sum_{i=1}^{p}	\mid \hat{\beta}_{rls(i)}	\mid \|A_i\|_2}{	\mid \hat{\beta}_{rls(i)}	\mid \|A_i\|_2} \quad  \rm{by} \ \eqref{thm2.1} \nonumber\\    
			&\preccurlyeq\frac{1}{r^2}\sum_{i=1}^{p}\frac{(V^{i})^TV^{i}\|V^{i}\|^2_2\sum_{i=1}^{p}s_2c_2\|V^i\|^2_2
			}{s_1c_1\|V^i\|^2_2} \quad  \rm{by} \ \eqref{lem.5.1}  \nonumber  \\
			&\preccurlyeq\frac{s_2c_2}{r^2s_1c_1}\sum_{i=1}^{p}(V^{i})^TV^{i}\sum_{i=1}^{p}\|V^i\|^2_2\nonumber\\
			&=\frac{s_2c_2\rho}{r^2s_1c_1}\sum_{i=1}^{p}(V^{i})^TV^{i}=\frac{s_2c_2\rho}{r^2s_1c_1}I_\rho,\nonumber
		\end{align}
		where  $F_t=M_tM^T_t-\frac{V^TV}{r}$ with $M_t=\frac{(V^{i_t})^T}{\sqrt{r\pi^{OPL}_{i_t}}}$ and  $t=1,\cdots, r$.	
	\end{proof}

	\noindent ${\textbf{Proof  of Theorem \ref{thm.3}}}$.  
	Noting $\hat{\beta}=\frac{1}{\lambda}V\Sigma U^T\hat{z}$ and $\hat{\beta}_{rls}=\frac{1}{\lambda}V\Sigma U^T\hat{z}^{\ast}$,	 we can rewrite \eqref{thm3.0}
	as
	\begin{align}\label{proofthm3.0}
		\frac{1}{\lambda}\|\Sigma U^T(\hat{z}-\hat{z}^{\ast})\|_2	\leq\frac{\epsilon}{\lambda}\|\Sigma U^T\hat{z}^{\ast}\|_2.
	\end{align}
	To prove \eqref{proofthm3.0},  we  
	define the loss functions $L(z)$ and  $\hat{L}(z)$ as 
	\begin{align*}
		L(z)=\frac{1}{2\lambda}\|A^Tz\|^2_2 +\frac{1}{2}\|z\|^2_2-z^Ty 
	\end{align*}
	and 
	\begin{align*}	
		\mathit{	\hat{L}(z)=\frac{\rm{1}}{\rm{2}\lambda}\|S^TA^Tz\|^{\rm{2}}_{\rm{2}} +\frac{\rm{1}}{\rm{2}}\|z\|^{\rm{2}}_{\rm{2}}-z^Ty}.
	\end{align*} 
	Thus,	by  Taylor expansion,  we can acquire 
	\begin{align}\label{proofthm3.3}
		\hat{L}(\hat{z})=\hat{L}(\hat{z}^{\ast})+(\hat{z}-\hat{z}^{\ast})^T\triangledown\hat{L}(\hat{z}^{\ast})+(\hat{z}-\hat{z}^{\ast})^T\triangledown^2\hat{L}(z_0)(\hat{z}-\hat{z}^{\ast}),
	\end{align}
	where  
	$\hat{z}^{\ast}$ and $\hat{z}$ minimize the loss functions $L(z)$ and $\hat{L}(z)$, respectively, and $z_0\in [\hat{z},\hat{z}^{\ast}]$.
	Moreover, following
	$(\triangledown^2\hat{L}(z_0)-\triangledown^2{L}(z_0))\hat{z}^{\ast}=\triangledown\hat{L}(\hat{z}^{\ast})-\triangledown{L}(\hat{z}^{\ast})$, which is from
	\begin{align*}
		\triangledown\hat{L}(\hat{z}^{\ast})=(\frac{1}{\lambda}ASS^TA^T+I )\hat{z}^{\ast}-y, \  \triangledown{L}(\hat{z}^{\ast})=(\frac{1}{\lambda}AA^T+I )\hat{z}^{\ast}-y,
	\end{align*}
	and 
	\begin{align} \label{proofthm3.4.0.0}
		\triangledown^2\hat{L}(z_0)=\frac{1}{\lambda}ASS^TA^T+I, \ \triangledown^2{L}(z_0)=\frac{1}{\lambda}AA^T+I,
	\end{align}
	we can obtain
	\begin{align}
		&\hat{L}(\hat{z}^{\ast})+(\hat{z}-\hat{z}^{\ast})^T(\triangledown^2\hat{L}(z_0)-\triangledown^2{L}(z_0))\hat{z}^{\ast}=\hat{L}(\hat{z}^{\ast})+(\hat{z}-\hat{z}^{\ast})^T(\triangledown\hat{L}(\hat{z}^{\ast})-\triangledown{L}(\hat{z}^{\ast})).\nonumber
	\end{align}
	Thus,	considering that 
	\begin{align*}
		\hat{L}(\hat{z}^{\ast})+(\hat{z}-\hat{z}^{\ast})^T(\triangledown\hat{L}(\hat{z}^{\ast})-\triangledown{L}(\hat{z}^{\ast})) 	\leq	\hat{L}(\hat{z}^{\ast})+(\hat{z}-\hat{z}^{\ast})^T\triangledown\hat{L}(\hat{z}^{\ast}),
	\end{align*}			
	which is derived by  the fact  $(\hat{z}-\hat{z}^{\ast})^T\triangledown{L}(\hat{z}^{\ast})\geq0$, and 	noting \eqref{proofthm3.3}, we can gain 
	\begin{align*}
		&\hat{L}(\hat{z}^{\ast})+(\hat{z}-\hat{z}^{\ast})^T(\triangledown^2\hat{L}(z_0)-\triangledown^2{L}(z_0))\hat{z}^{\ast} \leq\hat{L}(\hat{z})-(\hat{z}-\hat{z}^{\ast})^T\triangledown^2\hat{L}(z_0)(\hat{z}-\hat{z}^{\ast}).   
	\end{align*}
	Further, by $\hat{L}(\hat{z}^{\ast})\geq\hat{L}(\hat{z})$,  we have
	\begin{align} 
		(\hat{z}-\hat{z}^{\ast})^T(\triangledown^2L(z_0)-\triangledown^2\hat{L}(z_0))\hat{z}^{\ast} \geq (\hat{z}-\hat{z}^{\ast})^T\triangledown^2\hat{L}(z_0)(\hat{z}-\hat{z}^{\ast}),\nonumber
	\end{align} 
	which together with
	\begin{align*}
		(\hat{z}-\hat{z}^{\ast})^T\triangledown^2\hat{L}(z_0)(\hat{z}-\hat{z}^{\ast})\geq(\hat{z}-\hat{z}^{\ast})^T\frac{1}{\lambda}ASS^TA^T(\hat{z}-\hat{z}^{\ast})
	\end{align*} 	
	and \eqref{proofthm3.4.0.0}  leads to
	\begin{align}			
		(\hat{z}-\hat{z}^{\ast})^T(\frac{1}{\lambda}AA^T-\frac{1}{\lambda}ASS^TA^T )\hat{z}^{\ast} \geq (\hat{z}-\hat{z}^{\ast})^T\frac{1}{\lambda}ASS^TA^T(\hat{z}-\hat{z}^{\ast}).  \nonumber 
	\end{align} 
	Thus, based on $A=U\Sigma V^T$, it is straightforward to get 
	\begin{align*}
		&	\frac{1}{\lambda^2}(\hat{z}-\hat{z}^{\ast})^T(U\Sigma^2U^T-U\Sigma V^T SS^TV\Sigma U^T )\hat{z}^{\ast} 
		\geq \frac{1}{\lambda^2}(\hat{z}-\hat{z}^{\ast})^T U\Sigma V^T SS^TV\Sigma U^T (\hat{z}-\hat{z}^{\ast}),
	\end{align*}
	which is also  allowed to be rewritten as 
	\begin{align}
		&	\frac{1}{\lambda^2}[\Sigma U^T(\hat{z}-\hat{z}^{\ast})]^T(I-V^T SS^TV)\Sigma U^T \hat{z}^{\ast} 
		\geq \frac{1}{\lambda^2}[\Sigma U^T(\hat{z}-\hat{z}^{\ast})]^T  V^T SS^TV [\Sigma U^T(\hat{z}-\hat{z}^{\ast})]. \label{proofthm3.6.0}
	\end{align}
	Adding $\frac{1}{\lambda^2}[\Sigma U^T(\hat{z}-\hat{z}^{\ast})]^T[\Sigma U^T(\hat{z}-\hat{z}^{\ast})]$ to both sides of \eqref{proofthm3.6.0} gives
	\begin{align}\label{proofthm3.6}
		&  \frac{1}{\lambda^2}[\Sigma U^T(\hat{z}-\hat{z}^{\ast})]^T(I-V^T SS^TV)\Sigma U^T \hat{z}^{\ast}	
		 + \frac{1}{\lambda^2}[\Sigma U^T(\hat{z}-\hat{z}^{\ast})]^T
		(I-V^T SS^TV)[\Sigma U^T(\hat{z}-\hat{z}^{\ast})]  
		\nonumber\\ 
		& \geq \frac{1}{\lambda^2}[\Sigma U^T(\hat{z}-\hat{z}^{\ast})]^T[\Sigma U^T(\hat{z}-\hat{z}^{\ast})]. 
	\end{align}
	Taking the Euclidean norm  on both  sides of \eqref{proofthm3.6}, we  obtain
	\begin{align*} 
		&\frac{1}{\lambda^2}\|\Sigma U^T(\hat{z}-\hat{z}^{\ast})\|_2\|I-V^T SS^TV\|_2\|\Sigma U^T \hat{z}^{\ast}\|_2
		\nonumber	\\ 	 
		&	+ \frac{1}{\lambda^2}\|\Sigma U^T(\hat{z}-\hat{z}^{\ast})\|_2\|I-V^T SS^TV\|_2\|\Sigma U^T(\hat{z}-\hat{z}^{\ast})\|_2 \nonumber	\\ 	 
		&\geq\frac{1}{\lambda^2}\|\Sigma U^T(\hat{z}-\hat{z}^{\ast})\|^2_2,
	\end{align*}
	which combined  with  Lemma  \ref{lem.5} 
	indicates that
	\begin{align}
		\frac{1}{\lambda}\epsilon^{'}\|\Sigma U^T \hat{z}^{\ast}\|_2	+ \frac{1}{\lambda}\epsilon^{'}\|\Sigma U^T(\hat{z}-\hat{z}^{\ast})\|_2	\geq\frac{1}{\lambda}\|\Sigma U^T(\hat{z}-\hat{z}^{\ast})\|_2. \label{proofthm3.7.1}
	\end{align}
	By rewriting  \eqref{proofthm3.7.1} as 
	\begin{align*}
		\frac{1}{\lambda}\|\Sigma U^T(\hat{z}-\hat{z}^{\ast})\|_2	\leq\frac{\epsilon^{'}}{1-\epsilon^{'}}\frac{1}{\lambda}\|\Sigma U^T\hat{z}^{\ast}\|_2
	\end{align*}
	and considering  the fact $\epsilon^{'}<\frac{1}{2}$, we have
	\begin{align*}
		\frac{1}{\lambda}\|\Sigma U^T(\hat{z}-\hat{z}^{\ast})\|_2	\leq\frac{2\epsilon^{'}}{\lambda}\|\Sigma U^T\hat{z}^{\ast}\|_2.
	\end{align*}
	Thus, setting $\epsilon=2\epsilon^{'}$,  we get \eqref{proofthm3.0}. That is,
	\eqref{thm3.0} 
	is arrived. 
	
\section{Proof of Theorem \ref{thm.4}}
	\label{supp.thm4}
	
	
	The proof  can be completed along the line of the proof of  
	Theorem 6 in \cite{chen2015fast}.
	However, when we bound $ \|R\|_2 $ with 
	\begin{align*}
		R=(\lambda\Sigma^{-1}+\Sigma)^{-1}\Sigma(V^TS^TSV-I),
	\end{align*}
	Lemma    \ref{lem.5}
	is adopted but not the oblivious subspace embedding theorem  \cite[Theorem 5]{chen2015fast}, namely,  
	\begin{align}
		\|R\|_2&\leq\|(\lambda\Sigma^{-1}+\Sigma)^{-1}\Sigma(V^TS^TSV-I)\|_2 \nonumber \\
		&\leq\|(\lambda\Sigma^{-1}+\Sigma)^{-1}\Sigma\|_2\|V^TS^TSV-I\|_2 \nonumber\\
		&\leq\epsilon^{'}\|(\lambda\Sigma^{-1}+\Sigma)^{-1}\Sigma\|_2  \quad \ \rm{by} \ Lemma \ \ref{lem.5} \nonumber\\  
		&\leq\epsilon^{'},\nonumber
	\end{align}
	where $ \epsilon^{'}  $ satisfies  $\epsilon^{'}=\frac{\epsilon}{2}$. 

	\section{Proof of Theorem \ref{thm.5}} 
	\label{app.D}

	The proof is similar to the  one of Theorem \ref{thm.1}  
	(see Appendix \ref{app.A}),  and we begin by presenting two lemmas. 

	\begin{lemma}\label{lem.6}
		Assume that the condition \eqref{rem.c2.2} and \eqref{thm5.0} hold. 
		Then, for $m=1$ and 	$\pi^{NOPL}_i $  in \eqref{sec4.1.0}, 
		we have	
		\begin{align}\label{lem6.1}
			\sum_{i=1}^{p}\pi^{NOPL}_i\|\frac{\widetilde{e}_i}{p}\|^{3}_2=O_p(1),
		\end{align}
		where $\widetilde{e}_i=(\frac{A_iA^T_i}{\pi^{NOPL}_i}+\lambda I)\hat{z}^{\ast}-\widetilde{y}$ with $\widetilde{y}=\lambda y$ and  $\hat{z}^{\ast}$ being as in \eqref{rls.2}.	
	\end{lemma}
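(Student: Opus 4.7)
The plan is to mirror the proof of Lemma \ref{lem.1} essentially verbatim, substituting $\pi^{NOPL}_i$ for $\pi_i$ throughout, and verifying that the moment-type conditions \eqref{cond.3} and \eqref{cond.1} (which Lemma \ref{lem.1} took as hypotheses) are automatically delivered for $\pi^{NOPL}_i$ by the present hypotheses \eqref{rem.c2.2} and \eqref{thm5.0}. The reason this should work is that condition \eqref{thm5.0} forces $\pi^{NOPL}_i$ to be equivalent, up to multiplicative constants, to the column-sampling probability $\pi^{COL}_i = \|A_i\|_2^2/\sum_j \|A_j\|_2^2$, and the column-sampling probability is the one for which the required moment sums collapse nicely in terms of $\sum_j \|A_j\|_2^2$.

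First, I would extract two-sided control on $\pi^{NOPL}_i$. Applying the bounds \eqref{thm5.0} to both the numerator and the denominator of $\pi^{NOPL}_i = |\widetilde{\beta}_{(i)}|\|A_i\|_2 / \sum_j |\widetilde{\beta}_{(j)}|\|A_j\|_2$ cancels the common $\|y\|_2$ factors and yields
\begin{align*}
\frac{N_1}{N_2}\cdot\frac{\|A_i\|_2^2}{\sum_j \|A_j\|_2^2} \le \pi^{NOPL}_i \le \frac{N_2}{N_1}\cdot\frac{\|A_i\|_2^2}{\sum_j \|A_j\|_2^2}.
\end{align*}
Next, I would decompose $\|\widetilde{e}_i/p\|_2^3$ by the triangle inequality exactly as in the derivation leading to \eqref{proof.lem.1.0}, generating terms proportional to $\sum_i \|A_i\|_2^{2k}/(\pi^{NOPL}_i)^{k-1}$ for $k=1,2,3$, each scaled by powers of $p^{-1}$, $\lambda$, $\|\widetilde{y}\|_2$, and $\|\hat{z}^{\ast}\|_2$ (with the latter controlled by $\|y\|_2$ since $\|\lambda(AA^T+\lambda I)^{-1}\|_2\le 1$).

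The dominant $k=3$ term is handled by the lower bound on $\pi^{NOPL}_i$:
\begin{align*}
\frac{1}{p^3}\sum_{i=1}^p \frac{\|A_i\|_2^6}{(\pi^{NOPL}_i)^2}
&\le \Bigl(\tfrac{N_2}{N_1}\Bigr)^{2}\frac{\bigl(\sum_j \|A_j\|_2^2\bigr)^2}{p^3}\sum_i \|A_i\|_2^2 \\
&=\Bigl(\tfrac{N_2}{N_1}\Bigr)^{2}\Bigl(\sum_j \tfrac{\|A_j\|_2^2}{p}\Bigr)^{3} = O_p(1)
\end{align*}
by \eqref{rem.c2.2}. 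The $k=2$ term similarly reduces to $\bigl(\sum_j \|A_j\|_2^2/p\bigr)^2 = O_p(1)$ after an additional factor of $1/p$, hence $o_p(1)$; and the $k=1$ term is of even smaller order. The residual contributions involving $\|\widetilde{y}\|_2^3/p^3$ vanish as $o_p(1)$ via \eqref{proof.lem.1.01}.

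The main obstacle is the bookkeeping: one must carefully track every term produced by the triangle-inequality expansion of $\|\widetilde{e}_i/p\|_2^3$ and verify that each admits the desired bound solely from \eqref{rem.c2.2} and the derived two-sided control on $\pi^{NOPL}_i$, without invoking \eqref{cond.3} or \eqref{cond.1}. The substitution $\pi_i\to\pi^{NOPL}_i$ does in fact recover versions of \eqref{cond.3} and \eqref{cond.1} in the required form, because the $\|A_i\|_2^2$ appearing in $\pi^{NOPL}_i$ cancels enough factors of $\|A_i\|_2$ to reduce every moment sum to a power of $\sum_j \|A_j\|_2^2/p$, which is $O_p(1)$ by \eqref{rem.c2.2}. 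With these verifications in place, \eqref{lem6.1} follows exactly as in the conclusion of Lemma \ref{lem.1}.
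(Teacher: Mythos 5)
Your proposal is correct and follows essentially the same route as the paper: expand $\|\widetilde{e}_i/p\|_2^3$ via the triangle inequality as in \eqref{proof.lem.1.0}, use \eqref{thm5.0} to cancel the $\|y\|_2$ factors and reduce each moment sum $\sum_i\|A_i\|_2^{2k}/(\pi^{NOPL}_i)^{k-1}$ to a power of $\sum_i\|A_i\|_2^2/p$, and conclude with \eqref{rem.c2.2} and \eqref{proof.lem.1.01}. Your intermediate two-sided sandwich of $\pi^{NOPL}_i$ by $\pi^{COL}_i$ is just a cleaner packaging of the termwise substitution the paper performs.
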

	
	\begin{proof}
		Similar to  the proof of  Lemma \ref{lem.1},  based on  
		\eqref{rem.c2.2}, 
		\eqref{sec4.1.0},   \eqref{thm5.0}, \eqref{proof.lem.1.0}, and 
		\eqref{proof.lem.1.01}, 
		we have 
		
		\begin{align}
			\sum_{i=1}^{p}\pi^{NOPL}_i\|\frac{\widetilde{e}_i}{p}\|^{3}_2	
			\leq	&\frac{\|\widetilde{y}\|_2^{3}\sigma^{3}_{n}(AA^T+\lambda I)}{p^{3}}(\sum_{i=1}^{p}\frac{\|A_iA^T_i\|^3_2}{(\pi^{NOPL}_i)^{2}}+3\lambda\sum_{i=1}^{p}\frac{\|A_iA^T_i\|^2_2}{\pi^{NOPL}_i}
			\nonumber\\ 
			&+3\lambda^2\sum_{i=1}^{p}{\|A_iA^T_i\|_2}+\lambda^3)+  \frac{\|\widetilde{y}\|_2^{3}}{p^{3}}	\nonumber\\  &+3\frac{\|\widetilde{y}\|_2^{3}\sigma^{2}_{n}(AA^T+\lambda I)}{p^{3}}(\sum_{i=1}^{p}\frac{\|A_iA^T_i\|^2_2}{\pi^{NOPL}_i}
			+2\lambda\sum_{i=1}^{p}{\|A_iA^T_i\|_2}+\lambda^2)
			\nonumber \\ 
			&+3\frac{\|\widetilde{y}\|_2^{3}\sigma_{n}(AA^T+\lambda I)}{p^{3}}(\sum_{i=1}^{p}{\|A_iA^T_i\|_2}+\lambda) \nonumber \\	
			=&\frac{\|\widetilde{y}\|_2^{3}\sigma^{3}_{n}(AA^T+\lambda I)}{p^{3}}[\sum_{i=1}^{p}\frac{\|A_iA^T_i\|^3_2}{(	\mid \widetilde{\beta}_{(i)}	\mid \|A_i\|_2)^{2}}(\sum_{i=1}^{p}	\mid \widetilde{\beta}_{(i)}	\mid \|A_i\|_2)^2   \nonumber \\
			&+3\lambda\sum_{i=1}^{p}\frac{\|A_iA^T_i\|^2_2}{	\mid \widetilde{\beta}_{(i)}	\mid \|A_i\|_2}\sum_{i=1}^{p}	\mid \widetilde{\beta}_{(i)}	\mid \|A_i\|_2
			+3\lambda^2\sum_{i=1}^{p}{\|A_iA^T_i\|_2	}
			\nonumber \\
			&+\lambda^3]+  \frac{\|\widetilde{y}\|_2^{3}}{p^{3}} +3\frac{\|\widetilde{y}\|_2^{3}\sigma^{2}_{n}(AA^T+\lambda I)}{p^{3}}(\sum_{i=1}^{p}\frac{\|A_iA^T_i\|^2_2}{	\mid \widetilde{\beta}_{(i)}	\mid \|A_i\|_2}
			\nonumber \\
			&	\sum_{i=1}^{p}	\mid \widetilde{\beta}_{(i)}	\mid \|A_i\|_2
			+2\lambda\sum_{i=1}^{p}{\|A_iA^T_i\|_2}+\lambda^2)
			\nonumber \\
			&	+3\frac{\|\widetilde{y}\|_2^{3}\sigma_{n}(AA^T+\lambda I)}{p^{3}}(\sum_{i=1}^{p}{\|A_iA^T_i\|_2}+\lambda) \quad \rm{by} \   \eqref{sec4.1.0} \nonumber \\	
			&\leq\frac{\|\widetilde{y}\|_2^{3}\sigma^{3}_{n}(AA^T+\lambda I)}{p^{3}}[\frac{N^2_2}{N^2_1}(\sum_{i=1}^{p}\|A_i\|^2_2)^3+3\lambda\frac{N_2}{N_1}(\sum_{i=1}^{p}{\|A_i\|^2_2})^2
			\nonumber \\
			&+3\lambda^2\sum_{i=1}^{p}{\|A_i\|^2_2}+\lambda^3]
			+  \frac{\|\widetilde{y}\|_2^{3}}{p^{3}}  	\nonumber\\ &+3\frac{\|\widetilde{y}\|_2^{3}\sigma^{2}_{n}(AA^T+\lambda I)}{p^{3}}[\frac{N_2}{N_1}(\sum_{i=1}^{p}{\|A_i\|^2_2})^2
			+2\lambda\sum_{i=1}^{p}{\|A_i\|^2_2}+\lambda^2]
			\nonumber\\ &+3\frac{\|\widetilde{y}\|_2^{3}\sigma_{n}(AA^T+\lambda I)}{p^{3}}(\sum_{i=1}^{p}{\|A_i\|^2_2}+\lambda) \quad \rm{by}   \ \eqref{thm5.0} \nonumber \\							
			=&O_p(1),  \quad \rm{by} 
			\ \eqref{proof.lem.1.01} \ and   \ \eqref{rem.c2.2}  \nonumber
		\end{align}
		where 
		the first inequality  is gained by replacing $\pi_i$ in \eqref{proof.lem.1.0} with  $\pi^{NOPL}_i$. Then, \eqref{lem6.1} is obtained.
	\end{proof}
	
	\begin{lemma}\label{lem.7}
		To the assumption of    Lemma \ref{lem.6}, add that the condition \eqref{rem.c2.1} holds. 
		Then, for  $m=1$ and  $\pi^{NOPL}_i $   in \eqref{sec4.1.0}, 
		conditional on $\mathcal{F}_n$ and $\widetilde{\beta}$ in probability,  
		we have
		\begin{align}
			\frac{\widetilde{M}_A-M_A}{p}=O_{p	\mid \mathcal{F}_n}(r^{-1/2}), \label{lem7.1}\\
			\frac{\widetilde{e}^{\ast}}{p}=O_{p	\mid \mathcal{F}_n}(r^{-1/2}), \label{lem7.2}
		\end{align}
		where $M_A=AA^T+\lambda I$, $\widetilde{M}_A=A\widetilde{S}\widetilde{S}^TA^T+\lambda I$ with $  \widetilde{S} $ constructed by $\pi^{NOPL}_i$, 
		and $\widetilde{e}^{\ast}=(\widetilde{M}_A\hat{z}^{\ast}-\widetilde{y})$ with $\widetilde{y}=\lambda y$ and  $\hat{z}^{\ast}$ being as in \eqref{rls.2}.
	\end{lemma}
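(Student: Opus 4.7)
The plan is to mirror the proof of Lemma \ref{lem.2}, with two adjustments: the conditioning must be enlarged to include $\widetilde{\beta}$ because $\pi_i^{NOPL}$ depends on it, and the moment-type bounds on the sampling probabilities must be obtained from the two-sided assumption \eqref{thm5.0} rather than from explicit conditions like \eqref{cond.1}.

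First, I would observe that $\mathrm{E}(\widetilde{M}_A\mid\mathcal{F}_n,\widetilde{\beta}) = M_A$ and that, for any fixed vector $\ell$ with finite entries,
\begin{align*}
\mathrm{Var}\!\left(\tfrac{(\widetilde{M}_A-M_A)\ell}{p}\,\Big|\,\mathcal{F}_n,\widetilde{\beta}\right)
=\tfrac{1}{rp^2}\sum_{i=1}^{p}\pi_i^{NOPL}\!\left[(\tfrac{A_iA_i^T}{\pi_i^{NOPL}}+\lambda I)-M_A\right]\!\ell\ell^T\!\left[(\tfrac{A_iA_i^T}{\pi_i^{NOPL}}+\lambda I)-M_A\right].
\end{align*}
Expanding as in \eqref{prooflem2.4}, the dominant contribution is $\tfrac{1}{rp^2}\sum_{i=1}^{p}\frac{A_iA_i^T\ell\ell^T A_iA_i^T}{\pi_i^{NOPL}}$. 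From \eqref{thm5.0} one obtains the lower bound $\pi_i^{NOPL}\geq \tfrac{N_1}{N_2}\cdot\tfrac{\|A_i\|_2^2}{\sum_j\|A_j\|_2^2}$, so $1/\pi_i^{NOPL}$ is controlled by $\|A_i\|_2^{-2}\sum_j\|A_j\|_2^2$; combined with \eqref{rem.c2.2} and \eqref{rem.c2.1} this gives an $O_p(1)$ bound for the variance up to the $r^{-1}$ factor. Markov's inequality applied conditionally then yields \eqref{lem7.1}.

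Second, for \eqref{lem7.2} I would write $\widetilde{e}^{\ast}=\tfrac{1}{r}\sum_{t=1}^{r}\widetilde{e}_{i_t}$ with $\widetilde{e}_{i_t}=(\tfrac{A_{i_t}A_{i_t}^T}{\pi_{i_t}^{NOPL}}+\lambda I)\hat{z}^{\ast}-\widetilde{y}$, and note that $\mathrm{E}(\widetilde{e}_{i_t}\mid\mathcal{F}_n,\widetilde{\beta})=0$. The variance calculation parallels the one for $e^{\ast}$ in Lemma \ref{lem.2}: expanding
\begin{align*}
\mathrm{Var}\!\left(\tfrac{\ell^{T}\widetilde{e}^{\ast}}{p}\,\Big|\,\mathcal{F}_n,\widetilde{\beta}\right)=\tfrac{1}{rp^2}\sum_{i=1}^{p}\pi_i^{NOPL}\hat{z}^{\ast T}(\tfrac{A_iA_i^T}{\pi_i^{NOPL}}+\lambda I)\ell\ell^T(\tfrac{A_iA_i^T}{\pi_i^{NOPL}}+\lambda I)\hat{z}^{\ast}-\tfrac{\ell^T\widetilde{y}\widetilde{y}^T\ell}{rp^2},
\end{align*}
the leading term reduces via the same bound on $1/\pi_i^{NOPL}$ to $\tfrac{1}{rp^2}\hat{z}^{\ast T}(AA^T+\lambda I)\ell\ell^T(AA^T+\lambda I)\hat{z}^{\ast}+O_p(r^{-1})$, which together with \eqref{proof.lem.1.01} cancels the $\widetilde{y}$ term modulo $O_p(r^{-1})$. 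A final application of Markov's inequality delivers \eqref{lem7.2}.

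The main obstacle is that $\pi_i^{NOPL}$ is itself a random object depending on $\widetilde{\beta}$, so the usual moment-type conditions \eqref{cond.3} and \eqref{cond.1} on $\pi_i$ are not directly available. The two-sided bound \eqref{thm5.0} is exactly what is needed: it lets me sandwich $\pi_i^{NOPL}$ between constant multiples of $\pi_i^{COL}$, thereby converting all sums involving $(\pi_i^{NOPL})^{-1}$ or $(\pi_i^{NOPL})^{-2}$ into sums over $\|A_i\|_2$ that are controlled by \eqref{rem.c2.2} and \eqref{rem.c2.1}. Everything else is a conditional analogue of Lemma \ref{lem.2}.
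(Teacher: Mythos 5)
Your proposal is correct and follows essentially the same route as the paper: the paper's proof is precisely "repeat Lemma \ref{lem.2} with $\pi_i^{NOPL}$ in place of $\pi_i$, conditioning additionally on $\widetilde{\beta}$," with the key new ingredient being exactly your sandwich $\pi_i^{NOPL}\geq \frac{N_1}{N_2}\,\|A_i\|_2^2/\sum_j\|A_j\|_2^2$ from \eqref{thm5.0}, which turns $\frac{1}{p^2}\sum_i\frac{A_iA_i^TA_iA_i^T}{\pi_i^{NOPL}}$ into $O_p(1)$ via \eqref{rem.c2.2} and \eqref{rem.c2.1}. The only cosmetic difference is that the paper additionally records the tower-property identities $\mathrm{E}_{\widetilde{\beta}}[\,\cdot\,]$ to pass between conditioning on $(\mathcal{F}_n,\widetilde{\beta})$ and on $\mathcal{F}_n$ alone, which your argument implicitly handles by working conditionally on $\widetilde{\beta}$ throughout.
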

	
	\begin{proof}
		The proof can be completed similar to the proof of Lemma \ref{lem.2}.  We only need to replace $\pi_i$ with $\pi_i^{NOPL}$,  and note that
		\begin{align}
			\frac{1}{p^2}\sum_{i=1}^{p}\frac{A_{i}A_{i}^TA_{i}A_{i}^T}{\pi_{i}^{NOPL}}&=\frac{1}{p^2}(\sum_{i=1}^{p}\frac{A_{i}A_{i}^TA_{i}A_{i}^T}{	\mid \widetilde{\beta}_{(i)}	\mid \|A_i\|_2})(\sum_{i=1}^{p}	\mid \widetilde{\beta}_{(i)}	\mid \|A_i\|_2)
			\nonumber \\
			&	\leq\frac{N_2}{N_1p^2}(\sum_{i=1}^{p}\frac{A_{i}A_{i}^TA_{i}A_{i}^T}{\|A_i\|^2_2})(\sum_{i=1}^{p}\|A_i\|^2_2)\nonumber \\
			&=\frac{N_2}{N_1p^2}(\sum_{i=1}^{p}{A_{i}A_{i}^T})(\sum_{i=1}^{p}\|A_i\|^2_2)\nonumber \\
			& =O_p(1), \quad \rm{by} \ \eqref{rem.c2.2} \ and \ \eqref{rem.c2.1}  \label{lemrem7.2} 
		\\
			\mathrm{E}(\widetilde{M}_A	\mid  A)&=\mathrm{E}_{\widetilde{\beta}}[\mathrm{E}(\widetilde{M}_A	\mid  A,\widetilde{\beta})],\nonumber\\
			\mathrm{Var}[\frac{(\widetilde{M}_A-M_A)\ell}{p}	\mid  A]&=\mathrm{E}_{\widetilde{\beta}}\{	\mathrm{Var}[\frac{(\widetilde{M}_A-M_A)\ell}{p}	\mid  A,\widetilde{\beta}]\},\nonumber\\
			\mathrm{E}(\widetilde{e}_{i_t}	\mid \mathcal{F}_n)&=\mathrm{E}_{\widetilde{\beta}}[\mathrm{E}(\widetilde{e}_{i_t}	\mid \mathcal{F}_n,\widetilde{\beta})],\nonumber\\
			\mathrm{Var}(\frac{\ell^T \widetilde{e}^{\ast}}{p}	\mid \mathcal{F}_n)&=\mathrm{E}_{\widetilde{\beta}}[		\mathrm{Var}(\frac{\ell^T \widetilde{e}^{\ast}}{p}	\mid \mathcal{F}_n,\widetilde{\beta})],\nonumber
		\end{align} 
		where	$\mathrm{E}_{\widetilde{\beta}}$ denotes the expectation on $\widetilde{\beta}$.

	\end{proof}

	\begin{remark}\label{lemrem7.1}
		
		The results
		\eqref{lem7.1} and \eqref{lem7.2} still hold 
		when $\widetilde{M}_A=A{S}^{\ast}{{S}^{\ast}}^TA^T+\lambda I$  with $ {S}^{\ast} \in \mathbb{R}^{p \times r_0} $ formed by $\pi^{COL}_i$.
		
	\end{remark}

	\begin{corollary}\label{cor.1}
		For   $ {S}^{\ast} \in \mathbb{R}^{p \times r_0} $ formed by $\pi^{COL}_i$,  $\widetilde{z}=(A{S}^{\ast}{{S}^{\ast}}^TA^T+\lambda I)^{-1}\widetilde{y}$ constructed by  Algorithm 2 
		satisfies 
		\begin{align}\label{lemrem7.1.2}
			\|\widetilde{z}-\hat{z}^{\ast}\|_2=O_{p	\mid \mathcal{F}_n}(r_0^{-1/2}).
		\end{align}
		
	\end{corollary}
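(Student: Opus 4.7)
The plan is to mimic the decomposition argument used at equations \eqref{proofthem.1.0.0}--\eqref{proofthem.1.0} in the proof of Theorem \ref{thm.1}, but with the sampling matrix now being $S^{\ast}$ (formed by $\pi^{COL}_i$ with sample size $r_0$) rather than a generic $S$.

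First, I would express the gap in the same resolvent form used in Theorem \ref{thm.1}. Writing $\widetilde{z}-\hat{z}^{\ast}=-\widetilde{M}_A^{-1}\widetilde{e}^{\ast}$ with $\widetilde{e}^{\ast}=\widetilde{M}_A\hat{z}^{\ast}-\widetilde{y}$, and applying the resolvent identity $\widetilde{M}_A^{-1}=M_A^{-1}-\widetilde{M}_A^{-1}(\widetilde{M}_A-M_A)M_A^{-1}$, I would obtain the analogue of \eqref{proofthem.1.0.0}:
\begin{align*}
\widetilde{z}-\hat{z}^{\ast}=-\Bigl(\frac{M_A}{p}\Bigr)^{-1}\frac{\widetilde{e}^{\ast}}{p}+\Bigl(\frac{\widetilde{M}_A}{p}\Bigr)^{-1}\Bigl(\frac{\widetilde{M}_A-M_A}{p}\Bigr)\Bigl(\frac{M_A}{p}\Bigr)^{-1}\frac{\widetilde{e}^{\ast}}{p}.
\end{align*}

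Next, I would invoke Remark \ref{lemrem7.1}, which already asserts that Lemma \ref{lem.7} carries over with $\pi^{NOPL}_i$ replaced by $\pi^{COL}_i$ and $r$ replaced by $r_0$. That remark supplies
\begin{align*}
\frac{\widetilde{M}_A-M_A}{p}=O_{p\mid\mathcal{F}_n}(r_0^{-1/2}),\qquad \frac{\widetilde{e}^{\ast}}{p}=O_{p\mid\mathcal{F}_n}(r_0^{-1/2}).
\end{align*}
Together with $(M_A/p)^{-1}=O_p(1)$ coming from condition \eqref{rem.c2.1} (and $(\widetilde{M}_A/p)^{-1}=O_{p\mid\mathcal{F}_n}(1)$ by the same argument applied to $\widetilde{M}_A$), the first summand above is $O_{p\mid\mathcal{F}_n}(r_0^{-1/2})$ and the second is $O_{p\mid\mathcal{F}_n}(r_0^{-1})$. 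Taking Euclidean norms and noting that $n$ is fixed (so dimensional factors are absorbed into the conditional big-$O$) yields \eqref{lemrem7.1.2}.

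The main obstacle I anticipate is not the algebra of the decomposition but the verification that Remark \ref{lemrem7.1} actually applies here, i.e.\ that the moment-type bounds used inside Lemma \ref{lem.7} (the analogues of \eqref{lemrem7.2} and the variance computation underlying \eqref{lem7.1} and \eqref{lem7.2}) still go through when $\pi^{NOPL}_i$ is swapped for $\pi^{COL}_i=\|A_i\|_2^2/\sum_j\|A_j\|_2^2$. Here the key identity $\sum_{i}A_iA_i^TA_iA_i^T/(p^2\pi_i^{COL})=\bigl(\sum_j\|A_j\|_2^2/p\bigr)\bigl(\sum_i A_iA_i^T/p\bigr)$ is $O_p(1)$ directly by \eqref{rem.c2.2} and \eqref{rem.c2.1}, which is the only place the specific form of the probabilities enters the argument. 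Once this is in hand the remainder is a mechanical repetition of the Theorem \ref{thm.1} computation, truncated at the first-order term because we only need a rate, not asymptotic normality.
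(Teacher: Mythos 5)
Your proposal matches the paper's own proof of Corollary \ref{cor.1}: the paper likewise writes $\widetilde{z}-\hat{z}^{\ast}=-(\frac{M_A}{p})^{-1}\frac{\widetilde{e}^{\ast}}{p}+(\frac{\widetilde{M}_A}{p})^{-1}(\frac{\widetilde{M}_A-M_A}{p})(\frac{M_A}{p})^{-1}\frac{\widetilde{e}^{\ast}}{p}$ in analogy with \eqref{proofthem.1.0.0} and concludes via Remark \ref{lemrem7.1} and \eqref{rem.c2.1}. Your added check that $\sum_{i}A_iA_i^TA_iA_i^T/(p^2\pi_i^{COL})=O_p(1)$ under \eqref{rem.c2.2} and \eqref{rem.c2.1} only makes explicit what the paper leaves implicit in that remark, so the argument is correct and essentially identical.
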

	
	\begin{proof}
		Similar to   \eqref{proofthem.1.0.0},  considering \eqref{rem.c2.1} and Remark \ref{lemrem7.1},  we can get 
		\begin{align*}
			\widetilde{z}-\hat{z}^{\ast}=-(\frac{M_A}{p})^{-1}\frac{\widetilde{e}^{\ast}}{p}+(\frac{\widetilde{M}_A}{p})^{-1}(\frac{\widetilde{M}_A-M_A}{p})(\frac{M_A}{p})^{-1}\frac{\widetilde{e}^{\ast}}{p}=O_{p	\mid \mathcal{F}_n}(r_0^{-1/2}),
		\end{align*} 
		which suggests that \eqref{lemrem7.1.2}	holds.
	\end{proof}
	
	\noindent	${\textbf{Proof  of Theorem \ref{thm.5}}}$. 
	Similar to the proof of Theorem \ref{thm.1}
	, noting  \eqref{rem.c2.2}, \eqref{rem.c2.1}, \eqref{lemrem7.2}, and Lemmas \ref{lem.6} and \ref{lem.7}, and replacing  $\pi_i$ and ${e}_{i_t}$ in the proof of Theorem \ref{thm.1} 
	with $\pi_i^{NOPL}$ and $\widetilde{e}_{i_t}$, respectively, we first get
	\begin{align}
		\hat{z}_1-\hat{z}^{\ast}
		=-(\frac{M_A}{p})^{-1}\frac{\widetilde{e}^{\ast}}{p}+O_{p	\mid \mathcal{F}_n}(r^{-1})\label{proofthem.5.0},\\
		(\frac{\widetilde{V}_c}{r})^{-1/2}(\frac{\widetilde{e}_{\ast}}{p})\xrightarrow{L}N(0,I),  \nonumber 
	\end{align}
	where 
	\begin{align*}
		\hat{z}_1=(A\widetilde{S}\widetilde{S}^TA^T+\lambda I)^{-1}\widetilde{y}=\widetilde{M}^{-1}_A\widetilde{y}, 
		\\
		\mathit	{\widetilde{V}_c=\sum_{i=1}^{p}\frac{A_{i}A_{i}^T\hat{z}^{\ast}\hat{z}^{\ast T}A_{i}A_{i}^T}{p^{\rm{2}}\pi^{NOPL}_{i}}=O_p}(1).
	\end{align*}
	To  get \eqref{thm5.1}, 
	in the following,   we  need to further prove  	\begin{align} \label{proofthm5.6.0}
		{V}^{-1/2}_{OPL}(\hat{z}_1-\hat{z}^{\ast})=-{V}^{-1/2}_{OPL}(\frac{M_A}{p})^{-1}(\frac{\widetilde{V}_c}{r})^{1/2}(\frac{\widetilde{V}_c}{r})^{-1/2}\frac{\widetilde{e}^{\ast}}{p}+O_{p	\mid \mathcal{F}_n}(r^{-1/2}),
	\end{align} where $ {V}^{-1/2}_{OPL}(\frac{M_A}{p})^{-1}(\frac{\widetilde{V}_c}{r})^{1/2}$ satisfies	\begin{align} \label{proofthm5.6.1}
		{V}^{-1/2}_{OPL}(\frac{M_A}{p})^{-1}(\frac{\widetilde{V}_c}{r})^{1/2}[{V}^{-1/2}_{OPL}(\frac{M_A}{p})^{-1}(\frac{\widetilde{V}_c}{r})^{1/2}]^T
		=I+ O_{p	\mid \mathcal{F}_n}(r_0^{-1/2}).
	\end{align}

	Considering \eqref{rem.c2.2}, \eqref{rem.c2.1}, 
	\eqref{thm2.1} and \eqref{thm5.0}, 
	we first obtain
	\begin{align}
		\frac{1}{p^2}\sum_{i=1}^{p}\frac{A_{i}A_{i}^TA_{i}A_{i}^T}{\pi_{i}^{OPL}}&=\frac{1}{p^2}(\sum_{i=1}^{p}\frac{A_{i}A_{i}^TA_{i}A_{i}^T}{	\mid \hat{\beta}_{rls(i)}	\mid \|A_i\|_2})(\sum_{i=1}^{p}	\mid \hat{\beta}_{rls(i)}	\mid \|A_i\|_2) \quad \rm{by} \ \eqref{thm2.1} \nonumber\\  
		&\leq\frac{N_4}{N_3p^2}(\sum_{i=1}^{p}\frac{A_{i}A_{i}^TA_{i}A_{i}^T}{\|A_i\|^2_2})(\sum_{i=1}^{p}\|A_i\|^2_2) \quad \rm{by} \  \eqref{thm5.0} \nonumber \\ 
		&=\frac{N_4}{N_3p^2}(\sum_{i=1}^{p}{A_{i}A_{i}^T})(\sum_{i=1}^{p}\|A_i\|^2_2) 
		\nonumber \\
		&=O_p(1), \quad \rm{by} \ \eqref{rem.c2.2} \ and \ \eqref{rem.c2.1} \nonumber 
	\end{align}
	which indicates 
	\begin{align}\label{proofthm5.5} 
		V_{cOPL}=\sum_{i=1}^{p}\frac{A_iA^T_i\hat{z}^{\ast}\hat{z}^{\ast T}A_iA^T_i}{p^2\pi^{OPL}_i}=O_p(1).
	\end{align}
	From  \eqref{rem.c2.1} and \eqref{proofthm5.5}, it is evident to get 
	\begin{align} \label{proofthm5.6}
		{V}_{OPL}=(\frac{M_A}{p})^{-1}\frac{V_{cOPL}}{r}(\frac{M_A}{p})^{-1}=O_p(r^{-1}),
	\end{align}
	which  combined  with \eqref{proofthem.5.0}  suggests that \eqref{proofthm5.6.0} holds, that is,
	\begin{align*} 
		{V}^{-1/2}_{OPL}(\hat{z}_1-\hat{z}^{\ast})&=-{V}^{-1/2}_{OPL}(\frac{M_A}{p})^{-1}\frac{\widetilde{e}^{\ast}}{p}+O_{p	\mid \mathcal{F}_n}(r^{-1/2})\nonumber\\
		&=-{V}^{-1/2}_{OPL}(\frac{M_A}{p})^{-1}(\frac{\widetilde{V}_c}{r})^{1/2}(\frac{\widetilde{V}_c}{r})^{-1/2}\frac{\widetilde{e}^{\ast}}{p}+O_{p	\mid \mathcal{F}_n}(r^{-1/2}).
	\end{align*}
	Now, we need to demonstrate that  \eqref{proofthm5.6.1} also holds. Evidently, it suffices to show that  
	\begin{align} \label{proofthm5.6.0.1}
		{V}^{-1/2}_{OPL}(\frac{M_A}{p})^{-1}\frac{\widetilde{V}_c-V_{cOPL}}{r}(\frac{M_A}{p})^{-1}{V}^{-1/2}_{OPL}=O_{p	\mid \mathcal{F}_n}(r_0^{-1/2}),
	\end{align}  because
	\begin{align*}
		& {V}^{-1/2}_{OPL}(\frac{M_A}{p})^{-1}(\frac{\widetilde{V}_c}{r})^{1/2}[{V}^{-1/2}_{OPL}(\frac{M_A}{p})^{-1}(\frac{\widetilde{V}_c}{r})^{1/2}]^T\nonumber \\
		&={V}^{-1/2}_{OPL}(\frac{M_A}{p})^{-1}\frac{\widetilde{V}_c}{r}(\frac{M_A}{p})^{-1}{V}^{-1/2}_{OPL}\nonumber \\
		&={V}^{-1/2}_{OPL}(\frac{M_A}{p})^{-1}\frac{V_{cOPL}}{r}(\frac{M_A}{p})^{-1}{V}^{-1/2}_{OPL}
		\nonumber \\
		&+{V}^{-1/2}_{OPL}(\frac{M_A}{p})^{-1}\frac{\widetilde{V}_c-V_{cOPL}}{r}(\frac{M_A}{p})^{-1}{V}^{-1/2}_{OPL}\nonumber \\
		&=I+{V}^{-1/2}_{OPL}(\frac{M_A}{p})^{-1}\frac{\widetilde{V}_c-V_{cOPL}}{r}(\frac{M_A}{p})^{-1}{V}^{-1/2}_{OPL}. 
	\end{align*}
	Noting 
	\begin{align*}
		\widetilde{V}_c=\underbrace{[\frac{1}{p}{(\sum_{i=1}^{p}\frac{A_{i}A_{i}^T\hat{z}^{\ast}\hat{z}^{\ast T}A_{i}A_{i}^T}{	\mid \widetilde{\beta}_{(i)}	\mid \|A_i\|_2})}]}_{\Phi_1}\underbrace{[\frac{1}{p}(\sum_{i=1}^{p}	\mid \widetilde{\beta}_{(i)}	\mid \|A_i\|_2)]}_{\Phi_2}, \\
		V_{cOPL}=\underbrace{[\frac{1}{p}(\sum_{i=1}^{p}\frac{A_{i}A_{i}^T\hat{z}^{\ast}\hat{z}^{\ast T}A_{i}A_{i}^T}{	\mid \hat{\beta}_{rls(i)}	\mid \|A_i\|_2})]}_{\Phi_3}\underbrace{[\frac{1}{p}(\sum_{i=1}^{p}	\mid \hat{\beta}_{rls(i)}	\mid \|A_i\|_2)]}_{\Phi_4},
	\end{align*}
	and  the basic  triangle inequality, 
	we  gain
	\begin{align*}
		\|\widetilde{V}_c-V_{cOPL}\|_2&=\|\Phi_1\Phi_2-\Phi_3\Phi_4\|_2
		\\
		&\leq\|\Phi_1-\Phi_3\|_2\|\Phi_2\|_2+\|\Phi_2-\Phi_4\|_2\|\Phi_3\|_2.
	\end{align*}
	Following \eqref{rem.c2.2}, \eqref{thm5.0}, \eqref{proof.lem.1.01},  and \eqref{lemrem7.1.2}, 
	it is evident to gain
	
	\begin{align*}
		\|\Phi_1-\Phi_3\|_2&\leq\|\frac{1}{p}\sum_{i=1}^{p}\frac{A_{i}A_{i}^T\hat{z}^{\ast}\hat{z}^{\ast T}A_{i}A_{i}^T}{\|A_i\|_2}(\frac{1}{	\mid \widetilde{\beta}_{(i)}	\mid }-\frac{1}{	\mid \hat{\beta}_{rls(i)}	\mid })\|_2 \\ &\leq\frac{1}{p}\sum_{i=1}^{p}\frac{\lambda^2\hat{\beta}_{rls(i)}^2\|A_{i}\|^2_2}{\|A_i\|_2}(\frac{	\mid \widetilde{\beta}_{(i)}-\hat{\beta}_{rls(i)}	\mid }{	\mid \hat{\beta}_{rls(i)}	\mid 	\mid \widetilde{\beta}_{(i)}	\mid })\\
		& \leq\frac{\lambda N_4}{pN_1}\sum_{i=1}^{p}\frac{\|A_{i}\|^3_2}{\|A_i\|^2_2}({\|A_i\|_2\|\widetilde{z}-\hat{z}^{\ast}\|_2}) \quad \rm{by} \ \eqref{thm5.0}\\ 
		& ={\|\widetilde{z}-\hat{z}^{\ast}\|_2}\sum_{i=1}^{p}{\frac{\lambda N_4\|A_{i}\|^2_2}{pN_1}}=O_{p	\mid \mathcal{F}_n}(r_0^{-1/2}), \quad \rm{by} \ \eqref{rem.c2.2} \ and \ \eqref{lemrem7.1.2} \\ 
	\|\Phi_2\|_2&\leq\frac{N_2\|y\|_2}{p}\sum_{i=1}^p{\|A_i\|^2_2}=O_p(1). \quad \rm{by} \ \eqref{rem.c2.2}, \  \eqref{thm5.0},\ and \ \eqref{proof.lem.1.01}  
	\end{align*}
	Similarly, we have $\|\Phi_2-\Phi_4\|_2=O_{p	\mid \mathcal{F}_n}(r_0^{-1/2})$ and $\|\Phi_3\|_2=O_p(1)$. Therefore, we get 
	\begin{align*}
		\|\widetilde{V}-V_{cOPL}\|_2=O_{p	\mid \mathcal{F}_n}(r_0^{-1/2}),
	\end{align*}
	which	combined with  \eqref{rem.c2.1} and  \eqref{proofthm5.6}
	yields  \eqref{proofthm5.6.0.1}.
	Putting the above discussions  and  the Slutsky's Theorem together,
	the result 
	\eqref{thm5.1}
	follows. 
	
	\section{Proof of Theorem \ref{thm.6}}
	\label{supp.thm6}

	Before providing the proof of Theorem \ref{thm.6}, we first present a lemma.
	
	\begin{lemma}\label{lem.8}
		To the assumption of Lemma \ref{lem.5},
		add that \eqref{rem4.3.0.1} holds
		and $r\geq   \frac{32{s_4c_2\rho}}{3s_3c_1{\epsilon}^2}\mathrm{ln}(\frac{4\rho}{\delta})$ with $\epsilon,\delta \in (0,1)$.
		Then, for any $\epsilon $, $\hat{w}_t$ obtained from the $t$-th iteration of Algorithm  2
		satisfies
		\begin{align}\label{sec4.4}
			\|\frac{A^T\hat{w}_t}{\lambda}-\frac{A^Tw^{\ast}_t}{\lambda}\|_2\leq\epsilon\|\frac{A^Tw^{\ast}_t}{\lambda}\|_2,
		\end{align}
		where $w^{\ast}_t$ is the solution of 
		\begin{align*}
			\mathop{\rm{min}}\limits_{w_t}\frac{1}{2\lambda}\|A^Tw_t\|_2^2+\frac{1}{2}\|w_t\|_2^2-w_t^Tb_t.  
		\end{align*}	
	\end{lemma}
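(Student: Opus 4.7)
The plan is to replay the argument of Theorem~\ref{thm.3}, but with the inner optimization problem of the $t$-th iteration of Algorithm~\ref{IDSP} replacing the original dual problem. First I would introduce the two loss functions
\begin{align*}
L_t(w_t) &= \frac{1}{2\lambda}\|A^T w_t\|_2^2 + \frac{1}{2}\|w_t\|_2^2 - w_t^T b_t,\\
\hat{L}_t(w_t) &= \frac{1}{2\lambda}\|\widetilde{S}^T A^T w_t\|_2^2 + \frac{1}{2}\|w_t\|_2^2 - w_t^T b_t,
\end{align*}
whose minimizers are $w_t^{\ast}$ and $\hat{w}_t$. Taylor-expanding $\hat{L}_t$ around $w_t^{\ast}$ and combining the optimality of $w_t^{\ast}$ for $L_t$ (so $(\hat{w}_t - w_t^{\ast})^T\nabla L_t(w_t^{\ast})\geq 0$) with the optimality of $\hat{w}_t$ for $\hat{L}_t$ (so $\hat{L}_t(\hat{w}_t)\leq \hat{L}_t(w_t^{\ast})$), together with the substitution $A=U\Sigma V^T$, yields exactly the same structural inequality that drives the proof of Theorem~\ref{thm.3}:
\begin{align*}
[\Sigma U^T(\hat{w}_t-w_t^{\ast})]^T(I-V^T\widetilde{S}\widetilde{S}^T V)[\Sigma U^T w_t^{\ast} + \Sigma U^T(\hat{w}_t-w_t^{\ast})] \geq \|\Sigma U^T(\hat{w}_t-w_t^{\ast})\|_2^2.
\end{align*}

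The principal obstacle is the subspace-embedding inequality needed to close the argument, namely an analogue of Lemma~\ref{lem.5} with $\pi_i^{NOPL}$ in place of $\pi_i^{OPL}$. The matrix Bernstein template is identical, so the only task is to recompute the two deterministic constants that drive that estimate. Using \eqref{lem.5.1} and \eqref{rem4.3.0.1},
\begin{align*}
\frac{\|V^i\|_2^2}{\pi_i^{NOPL}} = \frac{\|V^i\|_2^2\sum_{j=1}^p |\widetilde{\beta}_{(j)}|\|A_j\|_2}{|\widetilde{\beta}_{(i)}|\|A_i\|_2} \leq \frac{s_4 c_2}{s_3 c_1}\sum_{j=1}^p\|V^j\|_2^2 = \frac{s_4 c_2 \rho}{s_3 c_1},
\end{align*}
which cleanly substitutes for the $\frac{s_2 c_2\rho}{s_1 c_1}$ bound used in Lemma~\ref{lem.5}; the same substitution controls the matrix variance $\mathrm{E}(F_t^2)$. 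Matrix Bernstein then delivers $\|V^T\widetilde{S}\widetilde{S}^T V - I\|_2 \leq \epsilon/2$ with probability at least $1-\delta$ provided $r \geq \frac{32 s_4 c_2\rho}{3 s_3 c_1 \epsilon^2}\ln(4\rho/\delta)$, which is precisely the sample size hypothesized.

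Combining the embedding inequality with the Taylor inequality, taking Euclidean norms on both sides and absorbing $\epsilon/2 < 1/2$ exactly as in the closing step of the proof of Theorem~\ref{thm.3}, I would conclude $\|\Sigma U^T(\hat{w}_t - w_t^{\ast})\|_2 \leq \epsilon\|\Sigma U^T w_t^{\ast}\|_2$ on the same high-probability event. Since $\lambda^{-1}A^T w = \lambda^{-1}V\Sigma U^T w$ and the columns of $V$ are orthonormal, this identity transfers directly to $\|\lambda^{-1}A^T\hat{w}_t - \lambda^{-1}A^T w_t^{\ast}\|_2 \leq \epsilon\|\lambda^{-1}A^T w_t^{\ast}\|_2$, which is \eqref{sec4.4}. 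The delicate point worth flagging is that $\pi_i^{NOPL}$ is random through its dependence on $\widetilde{\beta}$; the two-sided control in \eqref{rem4.3.0.1} is exactly what makes the deterministic bounds feeding matrix Bernstein hold uniformly regardless of the realization of $\widetilde{\beta}$.
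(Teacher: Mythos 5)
Your proposal is correct and follows essentially the same route as the paper: the paper's own proof simply replays the argument of Theorem \ref{thm.3} after noting that Lemma \ref{lem.5} still holds for $S=\widetilde{S}$ built from $\pi_i^{NOPL}$, which is exactly the subspace-embedding analogue you work out explicitly via the bound $\|V^i\|_2^2/\pi_i^{NOPL}\leq s_4c_2\rho/(s_3c_1)$ coming from \eqref{lem.5.1} and \eqref{rem4.3.0.1}. Your added remark that \eqref{rem4.3.0.1} makes the Bernstein constants uniform over the realization of $\widetilde{\beta}$ is a correct and worthwhile clarification of a point the paper leaves implicit.
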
	
	
	\begin{proof}
		The proof can be completed along the line of the proof of  Theorem \ref{thm.3}. 
		Particularly, in this case,  Lemma  \ref{lem.5}
		still holds for $S=\widetilde{S}$, where $\widetilde{S}$ is formed by $\pi_{i}^{NOPL}$.
	\end{proof}

	\noindent $\textbf{Proof  of Theorem  \ref{thm.6}}$. 
	At the $t$-th iteration,
	following the discussion in  Remark \ref{rem4.3} 
	and \eqref{sec4.4}, and setting
	\begin{align*}
		\bigtriangleup^{\ast}_t=\frac{A^T{{w}^{\ast}_t}}{\lambda}=\frac{A^T{\hat{z}^{\ast}}}{\lambda}-\frac{A^T{\hat{z}_{t-1}}}{\lambda}
	\end{align*}
	and $\hat{\bigtriangleup}_t=\frac{A^T{\hat{w}_t}}{\lambda}$ as the estimator of $\bigtriangleup^{\ast}_t$, we can have 
	\begin{align}
		\|\hat{\bigtriangleup}_t-\bigtriangleup^{\ast}_t\|_2&\leq\epsilon\|\bigtriangleup^{\ast}_t\|_2 \quad \ \rm{by} \ \eqref{sec4.4} \nonumber 
		\\
		&=\epsilon\|\frac{A^T{\hat{z}^{\ast}}}{\lambda}-\frac{A^T{\hat{z}_{t-1}}}{\lambda}\|_2 \nonumber \\ 
		&=\epsilon\|\frac{A^T({\hat{z}_{t-2}}+{w_{t-1}^{\ast}})}{\lambda}-\frac{A^T({\hat{z}_{t-2}}+{\hat{w}_{t-1}})}{\lambda}\|_2
		\nonumber \\
		&\leq\epsilon\|\hat{\bigtriangleup}_{t-1}-\bigtriangleup^{\ast}_{t-1}\|_2
		\leq\epsilon^2\|\bigtriangleup^{\ast}_{t-1}\|_2.\nonumber
	\end{align} 
	As a result, 
	\begin{align*}
		\|\hat{\bigtriangleup}_m-\bigtriangleup^{\ast}_m\|_2&\leq\epsilon\|\hat{\bigtriangleup}_{m-1}-\bigtriangleup^{\ast}_{m-1}\|_2\leq\epsilon^m\|\bigtriangleup^{\ast}_1\|_2 \nonumber \\
		&\leq\epsilon^m\|\frac{A^T{\hat{z}^{\ast}}}{\lambda}-\frac{A^T{\hat{z}_{0}}}{\lambda}\|_2\nonumber \\
		&=\epsilon^m\|\frac{A^T{\hat{z}^{\ast}}}{\lambda}\|_2=\epsilon^m\|\hat{\beta}_{rls}\|_2. 
	\end{align*}
	Considering that $\hat{\beta}_m-\hat{\beta}_{rls}=\hat{\bigtriangleup}_m-\bigtriangleup^{\ast}_m$, the conclusion 
	is arrived.  
	


\end{appendices}

\bibliography{column_sampling_referrence}

\end{document}